\pgfplotsset{compat=1.14}
\newtheorem{theorem}{Theorem}
\newtheorem{remark}{Remark}
\newtheorem{example}{Example}
\newtheorem{lemma}{Lemma}
\newtheorem{definition}{Definition}
\newtheorem{corollary}{Corollary}
\DeclareSymbolFont{bbold}{U}{bbold}{m}{n}
\DeclareSymbolFontAlphabet{\mathbbold}{bbold}
\newcommand{\1}{\mathbbold{1}}
\newcommand{\bF}{\mathbb{F}}
\newcommand{\bN}{\mathbb{N}}
\newcommand{\bR}{\mathbb{R}}
\newcommand{\cB}{\mathcal{B}}
\newcommand{\cC}{\mathcal{C}}
\newcommand{\cD}{\mathcal{D}}
\newcommand{\cI}{\mathcal{I}}
\newcommand{\cL}{\mathcal{L}}
\newcommand{\cP}{\mathcal{P}}
\newcommand{\cR}{\mathcal{R}}
\newcommand{\cS}{\mathcal{S}}
\newcommand{\bolda}{\mathbf{a}}
\newcommand{\boldb}{\mathbf{b}}
\newcommand{\boldc}{\mathbf{c}}
\newcommand{\boldd}{\mathbf{d}}
\newcommand{\bolde}{\mathbf{e}}
\newcommand{\bolds}{\mathbf{s}}
\newcommand{\boldt}{\mathbf{t}}
\newcommand{\boldx}{\mathbf{x}}
\newcommand{\boldy}{\mathbf{y}}
\newcommand{\boldz}{\mathbf{z}}
\newcommand{\Space}{{\{0,1\}^L \choose M}}
\newcommand{\SpaceV}[2]{{\{0,1\}^#1 \choose #2}}
\newcommand{\SizeSpace}{{2^L \choose M}}
\newcommand{\Al}{A_{\scriptsize{\mbox{\textit{left}}}}}
\newcommand{\Am}{A_{\scriptsize{\mbox{\textit{mid}}}}}
\newcommand{\Ar}{A_{\scriptsize{\mbox{\textit{right}}}}}
\newcommand{\Apl}{A'_{\scriptsize{\mbox{\textit{left}}}}}
\newcommand{\Apm}{A'_{\scriptsize{\mbox{\textit{mid}}}}}
\newcommand{\Apr}{A'_{\scriptsize{\mbox{\textit{right}}}}}
\def\namedlabel#1#2{\begingroup
	\def\@currentlabel{#2}%
	\label{#1}\endgroup
}
\begin{document}
\title{On Coding over Sliced Information}

\author{\textbf{Jin Sima}, \IEEEauthorblockN{\textbf{Netanel Raviv}, and \textbf{Jehoshua Bruck}}\\
	\IEEEauthorblockA{
	Department of Electrical Engineering,
California Institute of Technology, Pasadena 91125, CA, USA\\}}

\maketitle

\begin{abstract}
The interest in channel models in which the data is sent as an unordered set of binary strings has increased lately, due to emerging applications in DNA storage, among others. In this paper we analyze the minimal redundancy of binary codes for this channel under substitution errors, and provide several constructions, some of which are shown to be asymptotically optimal up to constants. The surprising result in this paper is that while the information vector is sliced into a set of unordered strings, the amount of redundant bits that are required to correct errors is order-wise equivalent to the amount required in the classical error correcting paradigm.
\end{abstract}

\thispagestyle{empty}



\section{Introduction}\label{section:introduction}
Data storage in synthetic DNA molecules suggests unprecedented advances in density and durability. The interest in DNA storage has increased dramatically in recent years, following a few successful prototype implementations~\cite{Church,Goldman,Erlich,Microsoft}. However, due to biochemical restrictions in synthesis (i.e., writing) and sequencing (i.e., reading), the underlying channel model of DNA storage systems is fundamentally different from its digital-media counterpart. 

Typically, the data in a DNA storage system is stored as a pool of short strings that are dissolved inside a solution, and consequently, these strings are obtained at the decoder in an \textit{unordered} fashion. Furthermore, current technology does not allow the decoder to count the exact number of appearances of each string in the solution, but merely to estimate relative concentrations. These restrictions have re-ignited the interest in \textit{coding over sets}, a model that also finds applications in transmission over asynchronous networks (see Section~\ref{section:PreviousWork}).

In this model, the data to be stored is encoded as a set of~$M$ strings of length~$L$ over a certain alphabet, for some integers~$M$ and~$L$ such that~$M<2^L$; typical values for~$M$ and~$L$ are currently within the order of magnitude of~$10^7$ and~$10^2$, respectively~\cite{Microsoft}. Each individual strings is subject to various types of errors, such as deletions (i.e., omissions of symbols, which result in a shorter string), insertions (which result in a longer string), and substitutions (i.e., replacements of one symbol by another). In the context of DNA storage, after encoding the data as a set of strings over a four-symbol alphabet, the corresponding DNA molecules are synthesized and dissolved inside a solution. Then, a chemical process called \textit{Polymerase Chain Reaction} (PCR) is applied, which drastically amplifies the number of copies of each string. In the reading process, strings whose length is either shorter or longer than~$L$ are discarded, and the remaining ones are clustered according to their respective edit-distance\footnote{The edit distance between two strings is the minimum number of deletions, insertions, and substitutions that turn one to another.}. Then, a majority vote is held within each cluster in order to come up with the most likely origin of the reads in that cluster, and all majority winners are included in the \textit{output set} of the decoding algorithm (Figure~\ref{fig:DNAstoragesystem}).


One of the caveats of this approach is that errors in synthesis might cause the PCR process to amplify a string that was written erroneously, and hence the decoder might include this erroneous string in the output set. In this context, deletions and insertions are easier to handle since they result in a string of length different from\footnote{As long as the number of insertions is not equal to the number of deletions, an event that occurs in negligible probability.}~$L$. Substitution errors, however, are more challenging to combat, and are discussed next.

A substitution error that occurs prior to amplification by PCR can induce either one of two possible error patterns. In one, the newly created string already exists in the set of strings, and hence, the decoder will output a set of~$M-1$ strings. In the other, which is undetectable by counting the size of the output set, the substitution generates a string which is not equal to any other string in the set. In this case the output set has the same size as the error free one. These error patterns, which are referred to simply as \textit{substitutions}, are the main focus of this paper.

Following a formal definition of the channel model in Section~\ref{section:preliminaries}, previous work is discussed in Section~\ref{section:PreviousWork}. Upper and lower bounds on the amount of redundant bits that are required to combat substitutions are given in Section~\ref{section:bounds}. In  Section~\ref{section:OneSubSimple} we provide a construction of a code that can correct a single substitution. This construction is shown to be optimal up to some constant, which is later improved in Appendix~\ref{section:OneSubComplicated}. In Section~\ref{section:MultipleSub} the construction for a single substitution is generalized to multiple substitutions, and is shown to be order-wise optimal whenever the number of substitutions is a constant. 
To further improve the redundancy, we present a sketch of another code construction
in Section~\ref{section:optimal}. The code is capable of correcting with optimal redundancy up to a constant.
Finally, open problems for future research are discussed in Section~\ref{section:FutureWork}.

\begin{figure}
    \centering
    \definecolor{ffwwqq}{rgb}{1,0.4,0}
\definecolor{qqttqq}{rgb}{0,0.2,0}
\definecolor{qqffff}{rgb}{0,1,1}
\definecolor{ffqqqq}{rgb}{1,0,0}
\definecolor{qqccqq}{rgb}{0,0.8,0}
\definecolor{uququq}{rgb}{0.9,0.9,0.9}
\begin{tikzpicture}[line cap=round,line join=round,>=triangle 45,x=0.8cm,y=0.8cm]
\clip(-1,0) rectangle (19,11);

\draw [line width=1pt,color=qqttqq] (-0.5,6.5)-- (3,6.5);
\draw [line width=1pt,color=qqttqq] (3,6.5)-- (3,7.5);
\draw [line width=1pt,color=qqttqq] (3,7.5)-- (-0.5,7.5);
\draw [line width=1pt,color=qqttqq] (-0.5,7.5)-- (-0.5,6.5);

\draw (1,8.8) node[anchor=center] {$\mbox{\textbf{DATA}}$};
\draw (2,8.1) node[anchor=center] {$\Downarrow \mbox{(encoding)}$};
\draw (1.25,7) node[anchor=center] {$\{ \mathbf{x}_i \}_{i=1}^M \in \Space$};
\draw (4.25,7) node[anchor=center] {$\overset{\mbox{(synthesis)}}{\Rightarrow}$};
\draw (12,7) node[anchor=center] {$\overset{\mbox{(PCR)}}{\Rightarrow}$};
\draw (4.25,3) node[anchor=center] {$\overset{ \overset{\mbox{(sequencing)}}{\mbox{(clustering)}}}{\Rightarrow}$};
\draw (12,3) node[anchor=center] {$\overset{\mbox{(majority)}}{\Rightarrow}$};
\draw [line width=3.2pt,color=qqccqq] (16.3,1)-- (15.7,1);
\draw [line width=3.2pt,color=qqccqq] (16.3,3)-- (15.7,3);
\draw [line width=3.2pt,color=qqccqq] (16.3,4)-- (15.7,4);
\draw [line width=3.2pt,color=ffqqqq] (16.3,2)-- (15.7,2);
\draw [line width=3.2pt,color=qqccqq] (7.1,3.4)-- (6.5,3.4);
\draw [line width=3.2pt,color=qqccqq] (7.1,3.7)-- (6.5,3.7);
\draw [line width=3.2pt,color=qqccqq] (7.1,4.0)-- (6.5,4.0);
\draw [line width=3.2pt,color=qqccqq] (7.1,1.1)-- (6.5,1.1);
\draw [line width=3.2pt,color=qqccqq] (7.1,1.4)-- (6.5,1.4);
\draw [line width=3.2pt,color=qqffff] (7.1,1.7)-- (6.5,1.7);
\draw [line width=3.2pt,color=qqccqq] (10,3.4)-- (9.4,3.4);
\draw [line width=3.2pt,color=qqccqq] (10,3.7)-- (9.4,3.7);
\draw [line width=3.2pt,color=qqttqq] (10,4.0)-- (9.4,4.0);
\draw [line width=3.2pt,color=ffwwqq] (10,1.1)-- (9.4,1.1);
\draw [line width=3.2pt,color=ffqqqq] (10,1.4)-- (9.4,1.4);
\draw [line width=3.2pt,color=ffqqqq] (10,1.7)-- (9.4,1.7);
\draw [line width=1pt,dotted] (6.8,3.7) circle (0.8382361703855739cm);
\draw [line width=1pt,dotted] (6.8,1.4) circle (0.8382361703855774cm);
\draw [line width=1pt,dotted] (9.7,3.7) circle (0.8382361703855739cm);
\draw [line width=1pt,dotted] (9.7,1.4) circle (0.8382361703855774cm);
\draw [line width=2pt] (6,8)-- (6,5.3);
\draw [line width=2pt] (6,5.3)-- (10.2,5.3);
\draw [line width=2pt] (10.2,5.3)-- (10.2,8);
\draw [line width=3.2pt,color=qqccqq] (6.954460150010935,5.561782455790734)-- (6.354460150010935,5.561782455790734);
\draw [line width=3.2pt,color=qqccqq] (7.954460150010937,5.561782455790734)-- (7.354460150010934,5.561782455790734);
\draw [line width=3.2pt,color=qqccqq] (8.954460150010936,5.561782455790734)-- (8.354460150010935,5.561782455790734);
\draw [line width=3.2pt,color=ffqqqq] (9.923918136322802,5.555746500962709)-- (9.354460150010935,5.561782455790734);
\draw [line width=2pt] (14,8)-- (14,5.3);
\draw [line width=2pt] (18.2,5.3)-- (18.2,8);
\draw [line width=2pt] (14,5.3)-- (18.2,5.3);
\draw [line width=3.2pt,color=qqccqq] (14.9,5.5)-- (14.3,5.5);
\draw [line width=3.2pt,color=qqccqq] (15.9,5.5)-- (15.3,5.5);
\draw [line width=3.2pt,color=qqccqq] (16.9,5.5)-- (16.3,5.5);
\draw [line width=3.2pt,color=ffqqqq] (17.9,5.5)-- (17.3,5.5);
\draw [line width=3.2pt,color=qqccqq] (14.9,5.8)-- (14.3,5.8);
\draw [line width=3.2pt,color=qqccqq] (15.9,5.8)-- (15.3,5.8);
\draw [line width=3.2pt,color=qqccqq] (16.9,5.8)-- (16.3,5.8);
\draw [line width=3.2pt,color=ffqqqq] (17.9,5.8)-- (17.3,5.8);
\draw [line width=3.2pt,color=qqccqq] (14.9,6.1)-- (14.3,6.1);
\draw [line width=3.2pt,color=qqccqq] (15.9,6.1)-- (15.3,6.1);
\draw [line width=3.2pt,color=qqccqq] (16.9,6.1)-- (16.3,6.1);
\draw [line width=3.2pt,color=ffqqqq] (17.9,6.1)-- (17.3,6.1);
\draw [line width=3.2pt,color=qqccqq] (14.9,6.4)-- (14.3,6.4);
\draw [line width=3.2pt,color=qqccqq] (15.9,6.4)-- (15.3,6.4);
\draw [line width=3.2pt,color=qqccqq] (16.9,6.4)-- (16.3,6.4);
\draw [line width=3.2pt,color=ffqqqq] (17.9,6.4)-- (17.3,6.4);
\draw [line width=3.2pt,color=qqccqq] (14.9,6.7)-- (14.3,6.7);
\draw [line width=3.2pt,color=qqccqq] (15.9,6.7)-- (15.3,6.7);
\draw [line width=3.2pt,color=qqccqq] (16.9,6.7)-- (16.3,6.7);
\draw [line width=3.2pt,color=ffqqqq] (17.9,6.7)-- (17.3,6.7);
\draw [line width=3.2pt,color=qqccqq] (14.9,7.0)-- (14.3,7.0);
\draw [line width=3.2pt,color=qqccqq] (15.9,7.0)-- (15.3,7.0);
\draw [line width=3.2pt,color=qqccqq] (16.9,7.0)-- (16.3,7.0);
\draw [line width=3.2pt,color=ffqqqq] (17.9,7.0)-- (17.3,7.0);
\draw [line width=3.2pt,color=qqccqq] (14.9,7.3)-- (14.3,7.3);
\draw [line width=3.2pt,color=qqccqq] (15.9,7.3)-- (15.3,7.3);
\draw [line width=3.2pt,color=qqccqq] (16.9,7.3)-- (16.3,7.3);
\draw [line width=3.2pt,color=ffqqqq] (17.9,7.3)-- (17.3,7.3);
\draw [line width=3.2pt,color=qqccqq] (14.9,7.6)-- (14.3,7.6);
\draw [line width=3.2pt,color=qqccqq] (15.9,7.6)-- (15.3,7.6);
\draw [line width=3.2pt,color=qqccqq] (16.9,7.6)-- (16.3,7.6);
\draw [line width=3.2pt,color=ffqqqq] (17.9,7.6)-- (17.3,7.6);
\end{tikzpicture}
    \caption{An illustration of a typical operation of a DNA storage system. The data at hand is encoded to a set of~$M$ binary strings of length~$L$ each. These strings are then synthesized, possibly with errors, into DNA sequences, that are placed in a solution and amplified by a PCR process. Then, the DNA sequences are read, clustered by similarity, and the output set is decided by a majority vote. In the illustrated example, one string is synthesized in error, which causes the output set to be in error. If the erroneous string happens to be equal to another existing string, the output set is of size~$M-1$, and otherwise, it is of size~$M$.}
    \label{fig:DNAstoragesystem}
\end{figure}
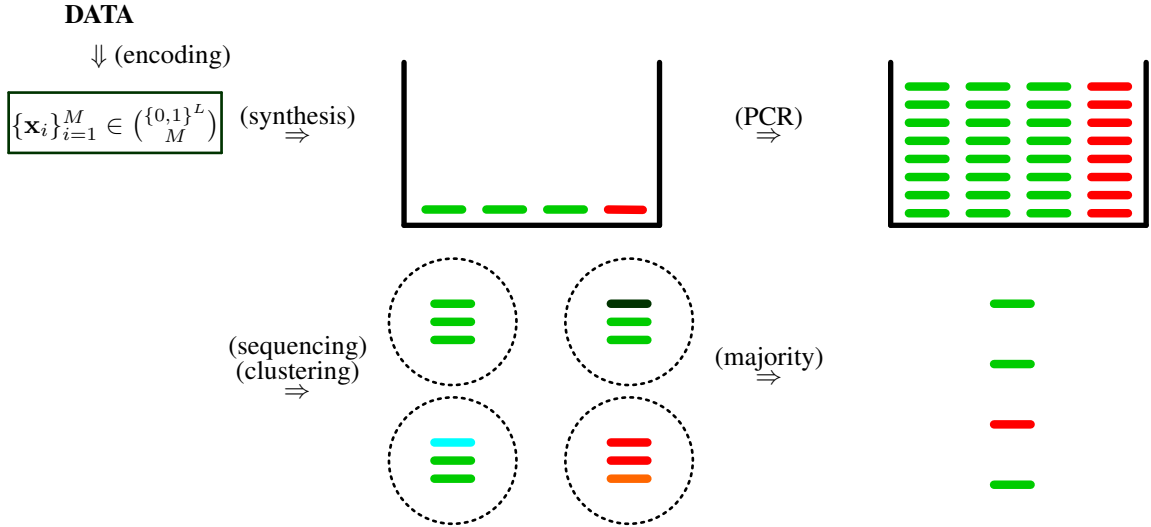

\begin{remark}
    The channel which is discussed in this paper can essentially be seen as taking a string of a certain length~$N$ as input. Then, during transmission, the string is sliced into substrings of equal length, and each substring is subject to substitution errors in the usual sense. Moreover, the order between the slices is lost during transmission, and they arrive as an unordered set.
    
    It follows from the sphere-packing bound~\cite[Sec.~4.2]{Ronny} that \textit{without} the slicing operation, one must introduce at least~$K\log(N)$ redundant bits at the encoder in order to combat~$K$ substitutions. The surprising result of this paper, is that the slicing operation \emph{does not} incur a substantial increase in the amount of redundant bits that are required to correct these~$K$ substitutions. In the case of a single substitution, our codes attain an amount of redundancy that is asymptotically equivalent to the ordinary (i.e., unsliced) channel, whereas for a larger number of substitutions we come close to that, but prove that a comparable amount of redundancy is achievable.
\end{remark}

\section{Preliminaries}\label{section:preliminaries}
To discuss the problem in its most general form, we restrict our attention to binary strings. For integers~$M$ and~$L$ such that\footnote{We occasionally also assume that~$M\le 2^{cL}$ for some~$0<c<1$. This is in accordance with typical values of~$M$ and~$L$ in contemporary DNA storage prototypes (see Section~\ref{section:introduction}).}~$M\le 2^{L}$ we denote by~$\Space$ the family of all subsets of size~$M$ of~$\{0,1\}^L$, and by~$\SpaceV{L}{\le M}$ the family of subsets of size \textit{at most}~$M$ of~$\{0,1\}^L$. In our channel model, a \textit{word} is an element~$W\in\Space$, and a \textit{code}~$\cC\subseteq \Space$ is a set of words (for clarity, we refer to words in a given code as \textit{codewords}). To prevent ambiguity with classical coding theoretic terms, the elements in a word~$W=\{ \boldx_1,\ldots,\boldx_M \}$ are referred to as \textit{strings}. We emphasize that the indexing in~$W$ is merely a notational convenience, e.g., by the lexicographic order of the strings, and this information is not available at the decoder.

For~$K\le ML$, a \textit{$K$-substitution error} ($K$-substitution, in short), is an operation that changes the values of at most~$K$ different positions in a word. Notice that the result of a~$K$-substitution is not necessarily an element of~$\Space$, and might be an element of~${\{0,1\}^L\choose T}$ for some~$M-K\le T\le M$. This gives rise to the following definition.

\begin{definition}\label{definition:ball}
    For a word~$W\in\Space$, a ball~$\cB_K(W)\subseteq \bigcup_{j={M-K}}^M{ \{0,1\}^L\choose j }$ centered at~$W$ is the collection of all subsets of~$\{0,1\}^L$ that can be obtained by a~$K$-substitution in~$W$.
\end{definition}

\begin{example}
    For~$M=2$, $L=3$, $K=1$, and~$W=\{ 001, 011 \}$, we have that
    \begin{align*}
        \cB_K(W)&=\{ \{001,011\},\{ 101, 011 \} , \{011\}, \{ 000,011  \} ,\{ 001,111  \}, \{ 001 \},\{ 001,010 \} \}.
    \end{align*}
\end{example}

In this paper, we discuss bounds and constructions of codes in~$\Space$ that can correct~$K$ substitutions ($K$-substitution codes, for short), for various values of~$K$. The \textit{size} of a code, which is denoted by~$|\cC|$, is the number of codewords (that is, sets) in it. The \textit{redundancy} of the code, a quantity that measures the amount of redundant information that is to be added to the data to guarantee successful decoding, is defined as~$r(\cC)\triangleq \log{2^L\choose M}-\log(|\cC|)$, where the logarithms are in base~$2$.

A code~$\cC$ is used in our channel as follows. First, the data to be stored (or transmitted) is mapped by a bijective \textit{encoding function} to a codeword~$C\in\cC$. This codeword passes through a channel that might introduce up to~$K$ substitutions, and as a result a word~$W\in \cB_K(C)$ is obtained at the decoder. In turn, the decoder applies some \textit{decoding function} to extract the original data. The code~$\cC$ is called a $K$-substitution code if the decoding process always recovers the original data successfully. Having settled the channel model, we are now in a position to formally state our contribution.

\begin{theorem} (Main) For any integers~$M$,~$L$, and~$K$ such that~$M\le 2^{L/(4K+2)}$, there exists an explicit code construction with redundancy~$O(K^2\log(ML))$ (Section~\ref{section:MultipleSub}). For~$K=1$, the redundancy of this construction is at most six times larger than the optimal one (Section~\ref{section:OneSubSimple}). Furthermore, an improved construction for~$K=1$  achieves redundancy which is at most three times the optimal one (Appendix~\ref{section:OneSubComplicated}). 
\end{theorem}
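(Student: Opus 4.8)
The statement aggregates a redundancy lower bound with three code constructions, and I would establish it in four steps.

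\emph{Step 1: the optimality benchmark (Section~\ref{section:bounds}).} To make ``optimal'', ``six times'' and ``three times'' meaningful I would first prove a sphere-packing lower bound. Any $K$-substitution code has pairwise-disjoint balls, and every possible received word lies in $\bigcup_{j=M-K}^{M}\binom{\{0,1\}^L}{j}$, so $|\cC|\le\big(\sum_{j=M-K}^{M}\binom{2^L}{j}\big)\big/\min_C|\cB_K(C)|$. Estimating $|\cB_K(C)|$ — count the coordinate subsets of size at most $K$ among the $ML$ positions that can be flipped, and use the assumption on $M$ to show that only a lower-order fraction of these cause two strings to collide — gives $|\cB_K(C)|=\Theta((ML)^K)$, hence $r(\cC)\ge c\,K\log(ML)-O(K)$. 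For $K=1$ this pins the optimum at $\log(ML)$ up to an explicit constant, which is the quantity the factors $6$ and $3$ are compared against.

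\emph{Step 2: a single substitution (Section~\ref{section:OneSubSimple}).} The essential difficulty of the sliced model is that the decoder receives an \emph{unordered} set: it does not know which string changed, a flip can alter a string's lexicographic rank, and a flip can collapse a string onto another one, shrinking the received set to size $M-1$ (the two error patterns of the introduction). The construction I would present reserves a few coordinates inside each string that jointly store (i) a \emph{column locator}, i.e.\ a checksum of $O(\log L)$ bits — for instance $\sum_{y\in W}\mathrm{val}(y)$ reduced modulo a suitable modulus for which the values $\pm2^j$, $0\le j<L$, are pairwise distinct — from which the flipped column and its direction can be read off; and (ii) a \emph{row locator}, a second modular hash $\sum_{y\in W}\phi(y)$ of $O(\log M)$ bits which, once the column is known, identifies which string was altered (and, in the collision case, which string is doubled). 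Making the scheme robust against a flip that lands in these reserved coordinates, and disambiguating the collision case, inflates the bare $\log(ML)$ by a bounded factor, and tracking the constants shows the total redundancy is at most $6$ times the bound of Step~1.

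\emph{Step 3: many substitutions (Section~\ref{section:MultipleSub}).} I would generalize by layering. Replace the single column locator with the syndrome of a length-$L$ BCH-type code correcting $K$ errors, so up to $K$ flipped columns are recovered; and replace the single row locator with $\Theta(K)$ hashes of resolution $\Theta(K\log(ML))$ bits each, so the (at most $K$) corrupted strings can be separated from the untouched ones and repaired despite being unordered and despite up to $K$ of them having collided. The hypothesis $M\le 2^{L/(4K+2)}$ is exactly what lets each string be split into $4K+2$ blocks of width at least $\log M$, giving enough room both to address strings and to carry the layered redundancy; since we use $\Theta(K)$ hashes, each of which must itself survive $K$ substitutions and hence carries $\Theta(K\log(ML))$ bits, the total is $O(K^2\log(ML))$, matching the $\Omega(K\log(ML))$ bound of Step~1 up to a factor $O(K)$ — order-optimal for constant $K$.

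\emph{Step 4: the refined $K=1$ code (Appendix~\ref{section:OneSubComplicated}), and the main obstacle.} To bring the constant from $6$ down to $3$ I would tighten Step~2 by reusing coordinates more aggressively — e.g.\ folding the column and row locators into a single hash whose change under one flip lies in a small predictable set, so that essentially no redundancy is wasted — and by a leaner accounting of the overhead that protects the redundancy itself. I expect the real obstacle throughout to be the collision phenomenon together with ruling out decoding ambiguity: one must show that under the stated size assumption no two distinct codewords produce the same received set (of size $M$ or $M-1$), and it is here that the block structure and the forced separation between strings are genuinely used; by comparison the sphere-packing bound of Step~1 is routine once $|\cB_K(C)|$ has been estimated.
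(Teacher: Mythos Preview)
Your plan diverges from the paper on both the lower bound and the constructions.

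\textbf{Lower bound.} You treat the estimate $|\cB_K(C)|=\Theta((ML)^K)$ as routine once you ``use the assumption on $M$ to show that only a lower-order fraction [of flips] cause two strings to collide.'' For an \emph{arbitrary} word $W$ this is precisely an edge-isoperimetric statement and is not immediate. The paper proves it by identifying $|\cB_1(W)|$ with the edge boundary $|\partial f_W|$ of the indicator function $f_W$ on the hypercube and then invoking the total-influence inequality $I(f_W)\ge 2\alpha\log(1/\alpha)$ (Lemma~\ref{lemma:O'Donnell}) to get $|\partial f_W|\ge \epsilon ML$ uniformly in $W$; the extension to $K>1$ goes through a count of ``special'' $K$-subsets of boundary edges. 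Your sphere-packing skeleton is right, but the ball-size lower bound needs this Fourier/isoperimetric input.

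\textbf{Constructions.} Here your route is genuinely different. The paper does \emph{not} use permutation-invariant checksums or row/column locators. Its mechanism is: split every string into $2K+1$ equal blocks, force the $i$-th blocks of the $M$ strings to be \emph{pairwise distinct} (this simultaneously guarantees $d_H(\boldx_i,\boldx_j)\ge 2K+1$, so collisions never occur), and use the lexicographic order of each block as one of $2K+1$ independent orderings of the strings. Under each ordering one concatenates all strings and protects the concatenation with a classical Hamming/Reed--Solomon code. Since $K$ substitutions can disturb the ordering in at most $K$ of the $2K+1$ blocks, at least $K+1$ orderings remain intact, and a majority vote over the $2K+1$ decoded concatenations recovers the data (Lemma~\ref{lemma:twoAreClose}, Lemma~\ref{lemma:MajWorksMultiple}). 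For $K=1$ this is $3$ parts and yields roughly $3\log(ML)+3\log M$ redundancy; the improvement in Appendix~\ref{section:OneSubComplicated} replaces the third part by a single parity bit $b_e=\bolde\cdot(\bigoplus_i\boldx_i)$ that flags which half was hit, cutting the count to $2$ parts and the redundancy to roughly $2\log(ML)+\log M$.

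Your checksum approach is a reasonable alternative in spirit, but it does not match the paper, and several of your steps are underdetermined: you do not say how the reserved coordinates are themselves protected, how the row locator disambiguates among the $M$ candidate strings once the column is known, or how the $M-1$ collision case is resolved --- all of which the paper sidesteps structurally via the distinct-blocks condition. In particular, the specific factors $6$ and $3$ in the statement come from the block counts $3$ and $2$ in the paper's construction, not from constants in a hash scheme.
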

In addition, we sketch an additional construction which achieves optimal redundancy for small (but non-constant) values of~$K$. The full proof will appear in future versions of this paper.

\begin{theorem}
     For integers~$M$,~$L$, and~$K$ that satisfy~$L'+4KL'+2K\log (4KL')\le L$, where~$L'=3\log M  + 4K^2+1$,  there exists an explicit code construction with redundancy $ 2K\log ML + (12K+2)\log M+O(K^3)+O(K\log\log ML)$ (Section~\ref{section:optimal}). The redundancy is at most~$14$ times the optimal one.   
 \end{theorem}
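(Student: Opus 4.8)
The plan is to reduce the sliced channel to the ordinary (unsliced) channel: force every codeword to carry, inside its own strings, a substitution-robust description of how to order those strings, and then protect the data by an off-the-shelf classical substitution code applied to the strings laid out in that order. Split each string $\boldx_i=(\bolda_i\mid\boldd_i)$, with \emph{anchor} $\bolda_i$ of length $L'=3\log M+4K^2+1$ and \emph{payload} $\boldd_i$ of length $L-L'$. Declare a set to be a codeword precisely when: (C1) the anchors are pairwise distinct and pairwise at Hamming distance $>4K$ (a \emph{spread} family); (C2) using the additional $4K$ blocks of length $L'$ and the $2K\log(4KL')$-bit field made available by the hypothesis $L'+4KL'+2K\log(4KL')\le L$, each string carries a locally-decodable copy of its rank in the anchor order, and the sorted sequence of anchors satisfies $O(K)$ parity checks over $\bF_{2^{L'}}$; and (C3) the concatenation $\boldd_{\sigma(1)}\cdots\boldd_{\sigma(M)}$, where $\sigma$ is the permutation sorting the anchors, lies in a classical binary code of length $M(L-L')$ that corrects $2K$ substitutions, which exists with redundancy $2K\log(ML)(1+o(1))$.

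First I would check that (C1) is essentially free and bound the cost of (C2). A Hamming ball of radius $4K$ in $\{0,1\}^{L'}$ has size at most $(L')^{4K}\le 2^{L'}/M^2$ in the regime at hand, so a uniformly random family of $M$ anchors is spread with probability $1-o(1)$ and (C1) costs $o(1)$ redundant bits; the parity checks and rank fields of (C2) cost $O(KL')+O(K\log\log ML)=(12K+2)\log M+O(K^3)+O(K\log\log ML)$. Then I would describe the decoder. Given a received $W'$ within distance $K$ of a codeword, spreadness guarantees that an altered string still lies within distance $K$ of at most one codeword anchor (two codeword anchors are more than $4K>2K$ apart) and that a payload-only error leaves the anchor intact; hence $W'$ splits into $\ge M-K$ strings with provably correct anchors and $\le K$ ``suspect'' strings. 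Reading the rank copies (robust by (C2)) pins down the full order, so every received string is slotted into its correct position $1,\dots,M$; the $O(K)$ parities of (C2) then repair the $\le K$ corrupted anchors, and feeding the resulting payload concatenation (carrying the $\le K$ residual payload errors) to the code of (C3) recovers all payloads, hence the data.

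The crux is (C2): recovering the order cheaply. Writing an explicit $\lceil\log M\rceil$-bit index into every string is robust but costs $\Theta(M)$ redundancy — fatally more than the target $O(K\log(ML))$ — while sorting $W'$ directly fails because one flipped anchor bit can move a string arbitrarily far in the sort, turning $\le K$ anchor errors into an unbounded number of positional shifts, so that ordinary error-locating decoding of the anchor sequence cannot even be started. The construction of Section~\ref{section:optimal} threads this needle by spending the $4KL'+2K\log(4KL')$ reserved coordinates \emph{per string} on a rank gadget whose \emph{aggregate} redundancy is still only $O(K\log M)+O(K^3)$: roughly, the per-string fields make the rank of each surviving string self-evident, which exposes exactly the $\le K$ slots vacated by corrupted strings, after which the $O(K)$ global parities over $\bF_{2^{L'}}$ reconstruct the missing anchors at those now-known positions. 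Verifying that this scheme (a) fits the length budget $L'+4KL'+2K\log(4KL')\le L$ and (b) adds only the stated lower-order redundancy is the technical heart of the proof; I expect this order-recovery bookkeeping to be the main obstacle, with the factor $2$ in the leading term $2K\log(ML)$ a deliberate slack to keep that analysis clean. Everything else is a routine reduction to a classical substitution code, and comparing the total redundancy $2K\log ML+(12K+2)\log M+O(K^3)+O(K\log\log ML)$ against the lower bound of Section~\ref{section:bounds} yields the claimed factor of at most $14$.
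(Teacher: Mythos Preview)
Your overall architecture---anchors to fix an order, then a classical code on the concatenation---matches the paper, but the heart of the construction, your condition~(C2), is not a working scheme and is \emph{not} what Section~\ref{section:optimal} actually does. You propose that ``each string carries a locally-decodable copy of its rank'' in the $4KL'+2K\log(4KL')$ reserved coordinates, and that the aggregate cost of this is only $O(K\log M)+O(K^3)$. But if every one of the $M$ strings spends $\Theta(KL')$ coordinates on a rank gadget, the total is $\Theta(MKL')$, not $O(KL')$; and if the gadget is not robust to the $K$ errors, it cannot make the rank ``self-evident.'' You never say what the gadget is, and you close the gap only by deferring to the paper---which does something quite different.

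In the paper's construction the reserved $4KL'+2K\log(4KL')$ coordinates live in a \emph{single} string, not in every string. One anchor is forced to be the all-ones vector~$\1_{L'}$; the string carrying it is identifiable at the decoder (look for an anchor with $\ge L'-K$ ones), and into that string one packs $RS_{2K}$ redundancy for the \emph{characteristic vector} $\1(S)\in\{0,1\}^{2^{L'}}$ of the anchor set~$S$, together with an inner $RS_K$ layer protecting that redundancy. The point is that $K$ substitutions change at most $K$ anchors, hence flip at most $2K$ coordinates of $\1(S)$, and crucially the characteristic vector is \emph{order-free}: no rank needs to be known to form it on the receive side. Once $S$ is recovered, the $2K+1$-spread condition (not $4K+1$, by the way---you are stronger than necessary) lets you match each received string to its true anchor, and a single $RS_K$ on the full concatenation (not $2K$; once the order is right there are at most $K$ residual errors) finishes decoding. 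This characteristic-vector trick is exactly the device that ``threads the needle'' you identify, and it is the idea your proposal is missing.
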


A few auxiliary notions are used throughout the paper, and are introduced herein. For two strings~$\bolds,\boldt\in\{0,1\}^L$, the \textit{Hamming distance}~$d_H(\bolds,\boldt)$ is the number of entries in which they differ. To prevent confusion with common terms, a subset of~$\{0,1\}^L$ is called \textit{a vector-code}, and the set~$\cB_D^H(\bolds)$ of all strings within Hamming distance~$D$ or less of a given string~$\bolds$ is called the \textit{Hamming ball} of radius~$D$ centered at~$\bolds$. A \textit{linear} vector code is called an~$[n,k]_q$ code if the strings in it form a subspace of dimension~$k$ in~$\bF_q^n$, where~$\bF_q$ is the finite field with~$q$ elements. 

Several well-known vector-codes are used in the sequel, such as Reed-Solomon codes or Hamming codes. For an integer~$t$, the Hamming code is an~$[2^t-1,2^t-t-1]_2$ code (i.e., there are~$t$ redundant bits in every codeword), and its minimum Hamming distance is~$3$. Reed-Solomon (RS) codes over~$\bF_q$ exist for every length~$n$ and dimension~$k$, as long as~$q\ge n-1$~\cite[Sec.~5]{Ronny}, and require~$n-k$ redundant symbols in~$\bF_q$. Whenever~$q$ is a power of two, RS codes can be made binary by representing each element of~$\bF_q$ as a binary string of length~$\log_2(q)$. In the sequel we use this form of RS code, which requires~$\log(n)(n-k)$ redundant bits.

Finally, our encoding algorithms make use of \textit{combinatorial numbering maps}~\cite{Knuth}, that are functions that map a number to an element in some structured set. Specifically, $F_{com}:\{1,\ldots,\binom{N}{M}\}\rightarrow \{S:S\subset\{1,\ldots,N\},|S|=M\}$ maps a number to a set of distinct elements, and $F_{perm}:\{1,\ldots,N!\}\rightarrow S_N$ maps a number to a permutation in the symmetric group~$S_N$. The function~$F_{com}$ can be computed using a greedy algorithm with complexity~$O(MN \log N)$, and the function~$F_{perm}$ can be computed in a straightforward manner with complexity~$O(N \log N)$. Using~$F_{com}$ and~$F_{perm}$ together, we define a map~$F:\{1,\ldots,\binom{N}{M}M!\}\rightarrow \{S:S\subset\{1,\ldots,N\},|S|=M\}\times S_M$ that maps a number into an unordered set of size~$M$ together with a permutation. Generally, we denote scalars by lower-case letters~$x,y,\ldots$, vectors by bold symbols~$\boldx,\boldy,\ldots$, integers by capital letters~$K,L,\ldots$, and~$[K]\triangleq\{1,2,\ldots,K\}$.

\section{Previous Work}\label{section:PreviousWork}
The idea of manipulating atomic particles for engineering applications dates back to the 1950's, with R.~Feynman's famous citation ``there's plenty of room at the bottom''~\cite{Feynman}. The specific idea of manipulating DNA molecules for data storage as been circulating the scientific community for a few decades, and yet it was not until 2012-2013 where two prototypes have been implemented~\cite{Church,Goldman}. These prototypes have ignited the imagination of practitioners and theoreticians alike, and many works followed suit with various implementations and channel models~\cite{Chang,Gabrys1,Heckel,Kiah,Raviv,Yazdi}.

By and large, all practical implementations to this day follows the aforementioned channel model, in which multiple short strings are stored inside a solution. Normally, deletions and insertions are also taken into account, but substitutions were found to be the most common form of errors~\cite[Fig.~3.b]{Microsoft}, and strings that were subject to insertions and deletions are scarcer, and can be easily discarded.

The channel model in this work has been studied by several authors in the past. The work of~\cite{Heckel} addressed this channel model under the restriction that individual strings are read in an error free manner, and some strings might get lost as a result of random sampling of the DNA pool. In their techniques, the strings in a codeword are appended with an indexing prefix, a solution which already incurs $\Theta(M\log M)$ redundant bits, or~$\log(e)M-o(1)$ redundancy~\cite[Remark~1]{CodingOverSets}, and will be shown to be strictly sub-optimal in our case.

The recent work of~\cite{CodingOverSets} addressed this model under substitutions, deletions, and insertions. When discussing substitutions only,~\cite{CodingOverSets} suggested a code construction for~$K=1$ with~$2L+1$ bits of redundancy. Furthermore, by using a reduction to constant Hamming weight vector-codes, it is shown that there exists a code that can correct~$e$ errors in each one of the~$M$ sequences with redundancy~$Me\log(L+1)$.

The work of~\cite{Kovacevic} addressed a similar model, where \textit{multisets} are received at the decoder, rather than sets. In addition, errors in the stored strings are not seen in a fine-grained manner. That is, any set of errors in an individual string is counted as a single error, regardless of how many substitutions, insertions, or deletions it contains. As a result, the specific structure of~$\{0,1\}^L$ is immaterial, and the problem reduces to decoding \textit{histograms} over an alphabet of a certain size. 

The specialized reader might suggest the use of \textit{fountain codes}, such as the LT~\cite{LT} codes or Raptor~\cite{Raptor} codes. However, we stress that these solutions rely on randomness at much higher redundancy rates, whereas this work aims for a deterministic and rigorous solution at redundancy which is close to optimal.

Finally, we also mention the \textit{permutation channel}~\cite{perm3,perm2,perm1}, which is similar to our setting, and yet it is farther away in spirit than the aforementioned works. In that channel, a vector over a certain alphabet is transmitted, and its symbols are received at the decoder under a certain permutation. If no restriction is applied over the possible permutations, than this channel reduces to \textit{multiset decoding}, as in~\cite{Kovacevic}. This channel is applicable in networks in which different packets are routed along different paths of varying lengths, and are obtained in an unordered and possibly erroneous form at the decoder. Yet, this line of works is less relevant to ours, and to DNA storage in general, since the specific error pattern in each ``symbol'' (which corresponds to a string in~$\{0,1\}^L$ in our case) is not addressed, and perfect knowledge of the number of appearances of each ``symbol'' is assumed.


\section{Bounds}\label{section:bounds}
In this section we use sphere packing arguments in order to establish an existence result of codes with low redundancy, and a lower bound on the redundancy of any~$K$-substitution code. The latter bound demonstrates the asymptotic optimality of the construction in Section~\ref{section:OneSubSimple} for~$K=1$, up to constants, and near-optimality of the code in Section~\ref{section:optimal}. Our techniques rely on upper and lower bounds on the size of the ball~$\cB_K$ (Definition~\ref{definition:ball}), which are given below. However, since our measure for distance is not a metric, extra care is needed when applying sphere-packing arguments. We begin with the existential upper bound in Subsection~\ref{section:existential}, continue to provide a lower bound for~$K=1$ in Subsection~\ref{section:lowerboundk=1}, and extend this bound to larger values of~$K$ in Subsection~\ref{section:lowerboundK>1}.

\subsection{Existential upper bound}\label{section:existential}
In this subsection, let~$K$, $M$, and~$L$ be positive integers such that~$K\le ML$ and~$M\le 2^L$. The subsequent series of lemmas will eventually lead to the following upper bound.
\begin{theorem}\label{theorem:upperBound}
    There exists a $K$-substitution code~$\cC\subseteq \Space$ such that~$r(\cC)\le 2K\log(ML)+3$.
\end{theorem}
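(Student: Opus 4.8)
The plan is to use a standard sphere-packing (Gilbert–Varshamov-type) existence argument, but adapted to the fact that $\cB_K$ is not induced by a metric. First I would bound the size of the ``confusability ball'' of a word: for $W \in \Space$, let $N_K(W)$ denote the number of words $W' \in \Space$ that are \emph{confusable} with $W$, meaning that $\cB_K(W) \cap \cB_K(W') \neq \emptyset$. Equivalently, $N_K(W) = |\{W' \in \Space : W' \in \cB_K(\cB_K(W))\}|$, i.e.\ the words reachable from $W$ by at most $K$ substitutions and then (the reverse of) at most $K$ more. A code $\cC \subseteq \Space$ is a $K$-substitution code if and only if no two distinct codewords are confusable, so a greedy/GV argument produces a code of size at least $\binom{2^L}{M} / \max_W N_K(W)$, giving redundancy at most $\log \max_W N_K(W)$. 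The target bound $2K\log(ML)+3$ then amounts to showing $\max_W N_K(W) \le 2^3 (ML)^{2K} = 8(ML)^{2K}$, up to the rounding in the greedy step.

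Next I would estimate $N_K(W)$ via a two-step counting bound. Going from $W$ to some $V \in \cB_K(W)$: a $K$-substitution flips at most $K$ of the $ML$ coordinates of the (arbitrarily ordered) array of strings, so the number of intermediate objects $V$ is at most $\sum_{i=0}^{K}\binom{ML}{i} \le (ML+1)^K$ or, more crudely, $(ML)^K$-ish — I will need to be slightly careful since $V$ may be a set of size $< M$ (collisions), but that only decreases the count, so the bound $\sum_{i=0}^K \binom{ML}{i}$ still dominates. Then from $V$ back up to some $W' \in \Space$ with $V \in \cB_K(W')$: here one must \emph{add back} at most $K$ coordinate changes and possibly re-expand $V$ to size $M$; the number of such $W'$ is again at most roughly $\binom{ML}{\le K}$ times a factor accounting for the at most $K$ ``new'' strings that could have collided away — something like $\binom{2^L}{\le K}$ choices for which strings disappeared. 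This is the delicate part: a naive bound of $\binom{2^L}{K}$ would contribute $K\log(2^L) = KL$ redundancy, which is far too large. The resolution is that a disappeared string must be within Hamming distance $2K$ of some surviving string (it collided with another string after $\le K$ substitutions on each), so the number of choices for each missing string is at most $M \cdot |\cB^H_{2K}(\cdot)| \le M (ML)^{O(K)}$ — still polynomial in $ML$. So each of the two legs costs $(ML)^{O(K)}$, and with careful bookkeeping the exponent comes out to exactly $2K$.

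The main obstacle, then, is the bookkeeping for the ``lost strings'': controlling the number of ways a size-$T$ multiset-free set $V$ (with $M-K \le T \le M$) can arise as $\cB_K$-image of a size-$M$ set $W'$, without paying $\log\binom{2^L}{K}$. The key lemma I would isolate and prove first is: if $V \in \cB_K(W')$ with $|W'| = M$ and $|V| = T$, then each string of $W' \setminus (W' \cap V)$ lies within Hamming distance $2K$ of $V$, and the total number of ``substitution budget'' used is $\le K$, so in fact the number of candidate $W'$ for a fixed $V$ is at most $\binom{ML}{\le K} \cdot \big(\sum_{j=0}^{K}\binom{T\cdot|\cB^H_{2K}|}{j}\big)$, which simplifies to $(ML)^{O(K)}$ under the hypothesis $M \le 2^{L}$ (and is where a hypothesis like $M\le 2^{L/(4K+2)}$ would let one be cleaner, though Theorem~\ref{theorem:upperBound} as stated asks only $M\le 2^L$). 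Combining the two legs: $N_K(W) \le \big(\binom{ML}{\le K}\big)^2 \cdot (\text{small polynomial factors}) \le 2^{2K\log(ML)+3}$, and the greedy argument finishes. I would present the intermediate lemmas (ball-size upper bound, lost-string Hamming-proximity lemma, then the confusability-count bound) in that order, with Theorem~\ref{theorem:upperBound} as the immediate corollary.
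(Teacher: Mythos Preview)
Your overall architecture matches the paper's exactly: define the confusable set $\cD_K(W)=\bigcup_{V\in\cB_K(W)}\cR_K(V)$ (your $N_K(W)$ is $|\cD_K(W)|$), run a greedy argument to obtain a code of size at least $\binom{2^L}{M}/\max_W|\cD_K(W)|$, and bound $|\cD_K(W)|$ by the product of the forward-ball size $|\cB_K(W)|\le\sum_{\ell\le K}\binom{ML}{\ell}$ and the maximum reverse-ball size $\max_V|\cR_K(V)|$. The greedy step and the forward-ball bound are precisely the paper's Lemmas~2 and~1.

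The gap is entirely in the second leg, and the counting you propose does not close it. Your bound $|\cR_K(V)|\le\binom{ML}{\le K}\cdot\sum_{j\le K}\binom{T\cdot|\cB^H_{2K}|}{j}$ has $|\cB^H_{2K}|$ of order $L^{2K}$, so the second factor alone is of order $(ML^{2K})^K$, and the product is $(ML)^{\Theta(K^2)}$ rather than $(ML)^{K}$; this gives redundancy $\Theta(K^2\log L)$, not $2K\log(ML)+O(1)$. (Incidentally, the proximity radius should be $K$, not $2K$: each string of $W'$ reaches its image in $V$ using part of a \emph{shared} budget of $K$ substitutions.) The slack comes from allotting each missing string its own Hamming ball instead of exploiting that the \emph{sum} of all the displacements is at most $K$. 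The paper's Lemma~3 gets the sharp bound $|\cR_K(V)|\le(2ML)^K$ via a different decomposition: for $|V|=M-T$ it sums over compositions $(i_1,\ldots,i_{M-T})$ of $M$ into positive parts (where $i_p$ counts how many strings of $W'$ collapse onto the $p$-th string of $V$), observes that $\sum_p(i_p-1)=T$ of the substitutions are \emph{mandatory} (at least one per extra string in each collapsing group) and counts those as $\prod_{p:i_p>1} i_p L^{i_p-1}$, and then places only the remaining $K-T$ substitutions freely in $\binom{ML}{K-T}$ ways. An AM--GM step bounds $\prod_p i_p\le 2^T$, and the stars-and-bars factor $\binom{M-1}{T}$ together with $L^T$ is absorbed into $(ML)^T$; everything collapses to $(2ML)^K$. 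That separation of mandatory from free substitutions, indexed by compositions, is the missing idea; without it the ``careful bookkeeping'' you allude to cannot bring the exponent down from $\Theta(K^2)$ to $2K$.
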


We begin with a simple upper bound on the size of the ball~$\cB_K$.
\begin{lemma}\label{lemma:ballUpperBound}
    For every word~$W=\{\boldx_i\}_{i=1}^M\in \Space$ and every positive integer~$K\le ML$, we have that~$|\cB_K(W)|\le \sum_{\ell=0}^K{ML\choose \ell}$.
\end{lemma}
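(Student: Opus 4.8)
$|\cB_K(W)|\le \sum_{\ell=0}^K \binom{ML}{\ell}$ for every $W\in\Space$ and every $K\le ML$.

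The plan is to realize every element of $\cB_K(W)$ as the image of $W$ under some choice of at most $K$ coordinate positions to flip, and then bound the number of such choices. Concretely, fix the notational indexing $W=\{\boldx_1,\dots,\boldx_M\}$ (say in lexicographic order), so that $W$ is identified with a specific binary array in $\{0,1\}^{M\times L}$, i.e. with a string of $N:=ML$ bits. A $K$-substitution, by Definition (the $K$-substitution error), picks a set $S$ of at most $K$ of these $ML$ positions and flips exactly those bits; the resulting array, read back as a multiset of $M$ rows and then de-duplicated, is the output word. Thus there is a surjective map
\[
\Phi:\ \{S\subseteq [ML] : |S|\le K\}\ \longrightarrow\ \cB_K(W),
\]
sending $S$ to the set obtained from $W$ by flipping the positions in $S$. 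Surjectivity is essentially the definition of $\cB_K(W)$; the only point to check is that ``at most $K$'' rather than ``exactly $K$'' positions suffices, which is immediate since flipping fewer bits is a special case (one may always take $|S|\le K$).

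Once surjectivity of $\Phi$ is established, the bound follows from $|\cB_K(W)| \le |\mathrm{dom}(\Phi)| = \sum_{\ell=0}^K \binom{ML}{\ell}$, since the number of subsets of an $ML$-element set of size at most $K$ is exactly $\sum_{\ell=0}^{K}\binom{ML}{\ell}$. So the proof is really just: (i) encode a word as an $ML$-bit array via the fixed indexing; (ii) observe a $K$-substitution corresponds to choosing $\le K$ of those bits to flip; (iii) count the choices.

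The only mild subtlety — and the step I would be most careful about — is the collision/size-drop phenomenon: after flipping bits, two rows may coincide, so the image has fewer than $M$ distinct strings and lies in $\binom{\{0,1\}^L}{T}$ for some $T<M$. This does not hurt the argument: $\Phi$ is still well-defined as a map into $\bigcup_{j=M-K}^{M}\binom{\{0,1\}^L}{j}$ (the codomain already allows this, matching Definition~\ref{definition:ball}), and collisions only make $\Phi$ more non-injective, which can only help the upper bound. I would state this explicitly so the reader sees that the inequality (rather than equality) in the lemma is exactly accounting for these collisions plus the overcount from different position-sets yielding the same word. No delicate estimation is needed; the whole lemma is a one-paragraph counting argument built on the surjection $\Phi$.
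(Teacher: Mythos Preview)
Your argument is correct and is essentially identical to the paper's own proof: both fix the indexing of $W$, identify a $K$-substitution with a choice of at most $K$ positions among the $ML$ total bit positions (the paper phrases this as index sets $J_i\subseteq[L]$ with $\sum_i|J_i|\le K$), and then bound $|\cB_K(W)|$ by the number of such choices. Your explicit remarks about the surjection $\Phi$ and the collision/size-drop phenomenon are a nice clarification but do not change the approach.
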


\begin{proof}
Every word in~$\cB_K(W)$ is obtained by flipping the bits in~$\boldx_i$ that are indexed by some~$J_i\subseteq [L]$, for every~$i\in[M]$, where~$\sum_{i=1}^M|J_i|\le K$. Clearly, there are at most~$\sum_{\ell=0}^K{ML\choose \ell}$ ways to choose the index sets~$\{J_i\}_{i=1}^M$.
\end{proof}

For~$W\in \SpaceV{L}{\le M}$ let~$\cR_K(W)$ be the set of all words~$U\in\Space$ such that~$W\in \cB_K(U)$. That is, for a channel output~$W$, the set~$\cR_K(W)$ contains all potential codewords~$U$ whose transmission through the channel can result in~$W$, given that at most~$K$ substitutions occur. Further, for~$W\in\Space$ define the \textit{confusable set} of~$W$ as~$\cD_K(W)\triangleq \cup_{W'\in\cB_K(W)}\cR_K(W')$. It is readily seen that the words in the confusable set~$\cD_K(W)$ of a word~$W$ cannot reside in the same $K$-substitution code as~$W$, and therefore we have the following lemma.

\begin{lemma}\label{lemma:UpperBoundCorrected}
    For every~$K$, $M$, and~$L$ such that~$K\le ML$ and~$M\le 2^L$ there exists a~$K$-substitution code~$\cC$ such that
    \begin{align*}
        |\cC|&\ge \left\lfloor \frac{\SizeSpace}{D} \right\rfloor,\mbox{ where}\\
        D&\triangleq \max_{W\in\Space}|\cD_K(W)|.
    \end{align*}
\end{lemma}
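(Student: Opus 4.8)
The plan is to apply a standard greedy/sphere-packing argument, but with the ``confusable set'' $\cD_K(W)$ playing the role that a metric ball would play in the classical setting. The reason we cannot directly use a ball $\cB_K$ is that our notion of distance is not symmetric: a word $W'$ obtained from $W$ by a $K$-substitution need not be obtainable from some third word $U$ by a $K$-substitution in a way that makes $W$ and $U$ mutually confusable through a single intermediate channel output. So the first step is to record the elementary but crucial observation already flagged in the text: if $U\in\cD_K(W)$ with $U\neq W$, then there is a word $W'$ with $W'\in\cB_K(W)$ and $W\in\cB_K(U)$ \emph{simultaneously through the same $W'$}, i.e. $W'\in\cB_K(W)\cap\cB_K(U)$; hence $\{W,U\}$ cannot both lie in a $K$-substitution code, since the channel could output $W'$ from either and the decoder would fail. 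Equivalently, $W\in\cC$ forbids every element of $\cD_K(W)\setminus\{W\}$ from $\cC$.

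Next I would run the greedy construction. Initialize $\cC=\emptyset$ and a pool of available words equal to all of $\Space$. Repeatedly pick any available word $W$, add it to $\cC$, and delete from the pool the entire set $\cD_K(W)$ (which contains $W$ itself). Each such step removes at most $D=\max_{W\in\Space}|\cD_K(W)|$ words from the pool, so the process runs for at least $\lfloor \binom{2^L}{M}/D\rfloor$ steps, giving $|\cC|\ge \lfloor \binom{2^L}{M}/D\rfloor$. It remains to check that the resulting $\cC$ is genuinely a $K$-substitution code. For this, suppose $W_1,W_2\in\cC$ are distinct and $\cB_K(W_1)\cap\cB_K(W_2)\neq\emptyset$, witnessed by some $W'$. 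Then $W'\in\cB_K(W_1)$ and $W_1\in\cR_K(W')$, so by definition $W_1\in\cD_K(W_2)$ (and symmetrically $W_2\in\cD_K(W_1)$); but whichever of $W_1,W_2$ was chosen first by the greedy step would have deleted the other from the pool, contradicting that both were selected. Hence the balls are pairwise disjoint, and a decoder that maps any channel output $W'$ to the unique codeword $W\in\cC$ with $W'\in\cB_K(W)$ always succeeds, so $\cC$ is a $K$-substitution code. (One should also note the pool is nonempty to begin with since $M\le 2^L$, so at least one codeword is produced and the floor expression is the honest lower bound.)

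I do not expect a serious obstacle here; the entire content is the bookkeeping in the previous paragraph, namely making sure that ``confusable'' as defined via $\cD_K$ really is the right closure of the forbidden relation and that it is symmetric enough for the greedy argument to both (i) remove only $D$ words per step and (ii) leave a legitimate code. The mild subtlety worth stating explicitly is the asymmetry point: $\cD_K(W)=\cup_{W'\in\cB_K(W)}\cR_K(W')$ is designed precisely so that $U\in\cD_K(W)\iff \cB_K(U)\cap\cB_K(W)\neq\emptyset \iff W\in\cD_K(U)$, and it is this chain of equivalences — rather than any quantitative estimate — that the proof hinges on. The quantitative work (bounding $D$, and hence turning this into the concrete redundancy bound of Theorem~\ref{theorem:upperBound}) is deferred to the subsequent lemmas and is not needed for this statement.
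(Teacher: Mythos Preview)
Your proposal is correct and follows essentially the same greedy argument as the paper's proof: initialize with all of $\Space$, repeatedly select a word and delete its confusable set, and verify by contradiction that no two selected codewords can share a common channel output (since the later one would have been deleted when the earlier one was picked). Your added remark that $U\in\cD_K(W)\iff\cB_K(U)\cap\cB_K(W)\neq\emptyset\iff W\in\cD_K(U)$ makes the symmetry explicit, which the paper leaves implicit, but the substance is identical; note a small typo in your first paragraph where ``$W\in\cB_K(U)$'' should read ``$W'\in\cB_K(U)$''.
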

\begin{proof}
    Initialize a list~$\cP=\Space$, and repeat the following process.
    \begin{enumerate}
        \item Choose~$W\in \cP$.
        \item Remove~$\cD_K(W)$ from~$\cP$.
    \end{enumerate}
    Clearly, the resulting code~$\cC$ is of the aforementioned size. It remains to show that~$\cC$ corrects~$K$ substitutions, i.e., that~$\cB_K(C)\cap \cB_K(C')=\varnothing$ for every distinct~$C,C'\in\cC$.
    
    Assume for contradiction that there exist distinct~$C,C'\in\cC$ and~$V\in\SpaceV{L}{\le M}$ such that $V\in\cB_K(C)\cap \cB_K(C')$, and w.l.o.g assume that~$C$ was chosen earlier than~$C'$ in the above process. Since~$V\in \cB_K(C)$, it follows that~$\cR_K(V)\subseteq \cD_K(C)$. In addition, since~$V\in\cB_K(C')$, it follows that~$C'\in \cR_K(V)$. Therefore, a contradiction is obtained, since~$C'$ is in~$\cD_K(C)$, that was removed from the list~$\cP$ when~$C$ was chosen.
\end{proof}

\begin{lemma}\label{lemma:SizeOfR}
    For an nonnegative integer~$T\le K$ and~$W\in \SpaceV{L}{M-T}$ we have that~$|\cR_K(W)|\le 2(2ML)^K$.
\end{lemma}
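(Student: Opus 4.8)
\emph{Proof proposal.} The plan is to bound $|\cR_K(W)|$ by counting the side information that, together with $W$, pins down a candidate $U$. Write $W=\{\boldw_1,\dots,\boldw_{M-T}\}$. If $U=\{\boldu_1,\dots,\boldu_M\}\in\cR_K(W)$, there are index sets $J_1,\dots,J_M\subseteq[L]$ with $\sum_i|J_i|\le K$ such that flipping, for every $i$, the bits of $\boldu_i$ indexed by $J_i$ yields exactly the set $W$. Grouping the indices $i$ by the element of $W$ that $\boldu_i$ turns into, one sees that $U$ arises as follows: to each $\boldw\in W$ one assigns a nonempty set $\cJ_\boldw$ of subsets of $[L]$ (the sets $J_i$ over the indices mapped to $\boldw$; these are genuinely distinct, since the corresponding $\boldu_i$ are distinct), subject to $\sum_{\boldw\in W}\sum_{J\in\cJ_\boldw}|J|\le K$, and then $U=\bigcup_{\boldw\in W}\{\,\boldw\text{ with the bits of }J\text{ flipped}:J\in\cJ_\boldw\,\}$. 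Hence $|\cR_K(W)|$ is at most the number of such assignments $\{\cJ_\boldw\}_{\boldw\in W}$.

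Next I would count the assignments. Call $\boldw$ \emph{active} if $c_\boldw:=\sum_{J\in\cJ_\boldw}|J|\ge1$ (equivalently $\cJ_\boldw\neq\{\varnothing\}$); since $\sum_{\boldw}c_\boldw\le K$ there are at most $K$ active coordinates, and each inactive one carries no information (its assignment must be $\{\varnothing\}$). So an assignment is determined by a choice of $z\le K$ active coordinates — at most $\binom{M-T}{z}$ choices — together with, for each of them, a nonempty set of subsets of $[L]$ of total weight $c_i\ge1$, where $\sum_{i=1}^z c_i\le K$. A nonempty set of subsets of $[L]$ of total weight $c\ge1$ is counted by at most $(2L)^c$: listing its nonempty members in lexicographic order with the elements of each sorted and concatenating gives a word in $[L]^c$, which together with the positions of the $\le c-1$ block boundaries recovers the family (at most $L^c\cdot 2^{c-1}$ of these), and a further factor $2$ records whether $\varnothing$ lies in the family.

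Combining, $|\cR_K(W)|\le\sum_{z=0}^{K}\binom{M-T}{z}\sum_{\substack{c_1,\dots,c_z\ge1\\ c_1+\cdots+c_z\le K}}\prod_{i=1}^{z}(2L)^{c_i}$. Fixing $s=\sum_i c_i$ (there are $\binom{s-1}{z-1}$ compositions), interchanging the order of summation, and applying Vandermonde's identity $\sum_{z}\binom{M-T}{z}\binom{s-1}{s-z}=\binom{M-T+s-1}{s}$, this collapses to $|\cR_K(W)|\le\sum_{s=0}^{K}(2L)^s\binom{M-T+s-1}{s}$. Using the elementary bound $\binom{N+s-1}{s}\le N^s$ for $N\ge1$ (since $\prod_{i=0}^{s-1}(1+i/N)\le s!$) with $N=M-T$, the right-hand side is the geometric sum $\sum_{s=0}^{K}\bigl(2L(M-T)\bigr)^s\le 2\bigl(2L(M-T)\bigr)^K\le 2(2ML)^K$, which is the claim; the boundary cases $M=T$ (then $W=\varnothing$ and $\cR_K(W)=\varnothing$) and $L(M-T)\le1$ are immediate.

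The crux — and the only point where genuine care is needed — is the reduction in the first paragraph. The temptation is to reconstruct $U$ by independently choosing its unchanged part $U\cap W$ and its new part $U\setminus W$; but that introduces a spurious factor $\binom{M-T}{|U\setminus W|}$ and inflates the estimate to order $M^{2K}$, which violates the lemma once $M$ is much larger than $L$ (a regime allowed by $M\le 2^{L/(4K+2)}$). Anchoring the count to the at-most-$K$ \emph{active coordinates of $W$}, and recombining the per-coordinate contributions through the Vandermonde convolution, is precisely what keeps the bound at $O((ML)^K)$ with the stated constant; the remaining estimates are routine.
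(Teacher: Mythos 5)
Your proof is correct, and it rests on the same core observation as the paper's: a candidate $U\in\cR_K(W)$ is determined by how its $M$ strings collapse onto the $M-T$ elements of $W$ together with the positions flipped, so it suffices to count that side information. The bookkeeping, however, is genuinely different. The paper first fixes the multiplicity profile $(i_1,\ldots,i_{M-T})$ of the collapse (summing over it with stars and bars), and for each profile bounds the number of admissible flip-position sets $\cL\subseteq[M]\times[L]$ by $\binom{ML}{K-T}\prod_{p\in\cP}i_pL^{i_p-1}$; taming the product $\prod_p i_p$ then requires an AM--GM step and a separate monotonicity inequality $((T+P)/P)^P\le 2^T$ proved in an appendix. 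You instead attach to each element of $W$ a family of flip-sets, encode each family injectively into $(2L)^c$ via concatenation-plus-boundaries, and collapse the resulting double sum with a Vandermonde convolution and the bound $\binom{N+s-1}{s}\le N^s$, ending in a geometric series. Your route avoids the stars-and-bars over profiles and the appendix inequality entirely, and the key subtlety you flag — that one must anchor the count on the at-most-$K$ ``active'' elements of $W$ rather than choose $U\setminus W$ freely — is exactly the point that makes the bound $O((ML)^K)$ rather than $O(M^{2K})$; the paper handles the same issue implicitly by charging $i_p-1$ mandatory substitutions to each collided group. Both arguments land on the same constant $2(2ML)^K$.
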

\begin{proof}
    Denote~$W=\{ \boldy_1,\ldots,\boldy_{M-T} \}$ and let~$U\in\cR_K(W)$. Notice that by the definition of~$\cR_K(W)$, there exists a $K$-substitution operation which turns~$U$ to~$W$. Therefore, every~$\boldy_i$ in~$W$ is a result of a certain nonnegative number of substitutions in one or more strings in~$U$. Hence, we denote by~$\boldz_1^1,\ldots,\boldz_{i_1}^1$ the strings in~$U$ that resulted in~$\boldy_1$ after the $K$-substitution operation, we denote by~$\boldz_1^2,\ldots,\boldz_{i_2}^2$ the strings which resulted in~$\boldy_2$, and so on, up to~$\boldz_1^{M-T},\ldots,\boldz_{i_{M-T}}^{M-T}$, which resulted in~$\boldy_{M-T}$. Therefore, since~$U=\cup_{j=1}^{M-T}\{ \boldz_1^j,\ldots,\boldz_{i_j}^j \}$, it follows that there exists a set~$\cL\subseteq [M]\times [L]$, of size at most~$K$, such that
    \begin{align}\label{equation:union}
        \begin{pmatrix}
        \boldz^1_1\\
        \vdots\\
        \boldz^1_{i_1}\\
        \boldz^2_1\\
        \vdots\\
        \boldz^{M-T-1}_{i_{M-T-1}}\\
        \boldz^{M-T}_{1}\\
        \vdots\\
        \boldz^{M-T}_{i_{M-T}}
        \end{pmatrix}&=
        \begin{pmatrix}
        \boldy_1\\
        \vdots\\
        \boldy_{1}\\
        \boldy_{2}\\
        \vdots\\
        \boldy_{M-T-1}\\
        \boldy_{M-T}\\
        \vdots\\
        \boldy_{M-T}
        \end{pmatrix}^{(\cL)},
    \end{align}
    where~$(\cdot)^{(\cL)}$ is a matrix operator, which corresponds to flipping the bits that are indexed by~$\cL$ in the matrix on which it operates. In what follows, we bound the number of ways to choose~$\cL$, which will consequently provide a bound on~$|\cR_K(W)|$.
    
    First, 
    define $\mathcal{P}=\{p:i_p>1\}$,
    and denote~$P\triangleq |\cP|$. Therefore, since~$\sum_{j=1}^{M-T}i_j=M$, it follows that
    \begin{align}\label{equation:ipSum}
        \sum_{p\in \cP}i_p=\sum_{j=1}^{M-T}i_j-\sum_{j\notin \cP}i_j=M-(M-T-P)=T+P.
    \end{align}
    Second, notice that for every~$p\in \cP$, the set~$\{ \boldz_1^{p},\ldots, \boldz^{p}_{i_p}\}$ contains~$i_p$ different strings. Hence, since after the~$K$-substitution operation they are all equal to~$\boldy_{p}$, it follows that at least~$i_p-1$ of them must undergo at least one substitution. Clearly, there are~$\binom{i_p}{i_p-1}=i_p$ different ways to choose who will these~$i_p-1$ strings be, and additional~$L^{i_p-1}$ different ways to determine the locations of the substitutions, and therefore $i_p\cdot L^{i_p-1}$ ways to choose these~$i_p-1$ substitutions.
    
    Third, notice that
    \begin{align}\label{equation:K-T}
        K-\sum_{p\in \cP}(i_p-1)=K-\sum_{p\in \cP}i_p+P\overset{\eqref{equation:ipSum}}{=}K-T,
    \end{align}
    and hence, there are at most~$K-T$ remaining positions to be chosen to~$\cL$, after choosing the~$i_p-1$ positions for every~$p\in\cP$ as described above.
    
    Now, let~$\cI$ be the set of all tuples~$i_1,\ldots,i_{M-T}$ of positive integers that sum to~$M$ (whose size is~$\binom{M-1}{M-T-1}$ by the famous \textit{stars and bars} theorem). Let~$N:\cI\to\bN$  be a function which maps $(i_1,\ldots,i_{M-T})\in\cI$ to the number of different~$U\in\cR_K(W)$ for which there exist~$\cL\subseteq [M]\times [L]$ of size at most~$K$ such that~\eqref{equation:union} is satisfied. Since this quantity is at most the number of ways to choose a suitable~$\cL$, the above arguments demonstrate that
    \begin{align*}
        N(i_1,\ldots,i_{M-T})\le \binom{ML}{K-T}\prod_{p\in \cP}i_pL^{i_p-1}.
    \end{align*}
    Then, we have
    \begin{align}\label{equation:RKWtop}
        |\cR_K(W)|&\le 
        \sum_{\cI}N(i_1,\ldots,i_{M-T})\le \sum_{\cI}\binom{ML}{K-T}\prod_{p\in\cP}i_{p}L^{i_{p}-1} \nonumber\\
        &\le\sum_{\cI} (ML)^{K-T}L^{\sum_{p}(i_{p}-1)}\prod_{p\in\cP}i_{p}\overset{\eqref{equation:K-T}}{\le}\sum_{\cI} (ML)^{K-T}L^T\prod_{p\in\cP}i_p.
    \end{align}
    Since the geometric mean of positive numbers is always less than the arithmetic one, we have~$\left(\prod_{p\in\cP}i_p\right)^{1/P}\le \tfrac{1}{P}\sum_{p\in\cP}i_p$, and hence,
    \begin{align}        \eqref{equation:RKWtop}&\le\sum_{\cI} (ML)^{K-T}L^T\left(\tfrac{\sum_p i_p}{P}\right)^P\overset{\eqref{equation:ipSum}}{\le} \sum_{\cI}(ML)^{K-T}L^{T}((T+P)/P)^P\nonumber\\
        &\le \binom{M-1}{M-T-1}(ML)^{K-T}L^{T}((T+P)/P)^P \le\binom{M}{T}(ML)^{K-T}L^{T}((T+P)/P)^P\nonumber\\
        &\le (ML)^{K-T}(ML)^T ((T+P)/P)^P\le (ML)^K((T+P)/P)^P\nonumber\\\label{equation:rk}
        &\overset{(a)}{\le} (ML)^K2^T\le (2ML)^K,
    \end{align}
    where~$(a)$ will be proved in  Appendix~\ref{section:monotonicity}. 
\end{proof}

\begin{proof}(of Theorem~\ref{theorem:upperBound})
    It follows from Lemma~\ref{lemma:ballUpperBound}, Lemma~\ref{lemma:SizeOfR}, and from the definition of~$D$ that
    \begin{align*}
        D \le \max_{W\in\Space}|\cB_K(W)|\cdot \max_{W\in \Space}|\cR_K(W)|\le \left( \sum_{\ell=0}^K\binom{ML}{\ell} \right)(2ML)^K.
    \end{align*}
    Therefore, the code~$\cC$ that is constructed in Lemma~\ref{lemma:UpperBoundCorrected} satisfies
    \begin{align*}
        r(\cC)&\le \log\SizeSpace-\log|\cC|\le \log\left( \left( \sum_{\ell=0}^K\binom{ML}{\ell} \right)(2ML)^K \right)\\
        &=\log\left( \sum_{\ell=0}^K\binom{ML}{\ell} \right)+K(\log(ML)+1)\\
        &\le \log\left( K\binom{ML}{K} \right)+K(\log(ML)+1)\\
        &\le \left( \log(K)-\log(K!)+\log(ML^K) \right)+K(\log(ML)+1)\le 2K\log(ML)+2.\qedhere
    \end{align*}
\end{proof}

\subsection{Lower bound for a single substitution code}\label{section:lowerboundk=1}

Notice that the bound in Lemma~\ref{lemma:ballUpperBound} is tight, e.g., in cases where~$d_H(\boldx_i,\boldx_j)\ge 2K+1$ for all distinct~$i,j\in[M]$. This might occur only if~$M$ is less than the maximum size of a~$K$-substitution correcting vector-code, i.e., when~$M\le 2^L/(\sum_{i=0}^K{L\choose i})$ \cite[Sec.~4.2]{Ronny}. When the minimum Hamming distance between the strings in a codeword is not large enough, different substitution errors might result in identical words, and the size of the ball is smaller than the given upper bound.

\begin{example}
    For~$L=4$ and~$M=2$, consider the word~$W=\{ \underline{0}110, \boldsymbol{0}11\underline{\boldsymbol{1}} \}$. By flipping either the two underlined symbols, or the two bold symbols, the word~$W'=\{ 0110, 1110 \}$ is obtained. Hence, different substitution operation might result in identical words.
\end{example}

However, in some cases it is possible to bound the size of~$\cB_K$ from below by using tools from Fourier analysis of Boolean functions. In the following it is assumed that~$M\le 2^{(1-\epsilon)L}$ for some~$0<\epsilon<1$, and that~$K=1$. A word~$W\in\Space$ corresponds to a Boolean function~$f_W:\{\pm 1\}^L\to \{\pm 1 \}$ as follows. For~$\boldx\in\{0,1\}^L$ let~$\overline{\boldx}\in\{\pm 1\}^L$ be the vector which is obtained from~$\boldx$ be replacing every~$0$ by~$1$ and every~$1$ by~$-1$. Then, we define $f_W(\overline{\boldx})=-1$ if~$\boldx\in W$, and~$1$ otherwise. Considering the set~$\{ \pm 1 \}^L$ as the \textit{hypercube graph}\footnote{The nodes of the hypercube graph of dimension~$L$ are identified by~$\{\pm 1 \}^L$, and every two nodes are connected if and only if the Hamming distance between them is~$1$.}, the \textit{boundary} $\partial f_W$ of~$f_W$ is the set of all edges~$\{\boldx_1,\boldx_2\}\in {\{\pm 1\}^L\choose 2}$ in this graph such that~$f_W(\boldx_1)\ne f_W(\boldx_2)$.

\begin{lemma}\label{lemma:boundary}
    For every word~$W\in \Space$ we have that~$|\cB_1(W)|\ge|\partial f_W|$.
\end{lemma}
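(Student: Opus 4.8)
The plan is to exhibit an injection from the edge-boundary $\partial f_W$ into the ball $\cB_1(W)$. Recall that $K=1$, so a single substitution flips exactly one bit of one string $\boldx_i$ in $W=\{\boldx_1,\ldots,\boldx_M\}$. Given an edge $\{\boldx,\boldx'\}\in\partial f_W$ in the hypercube, the two endpoints differ in exactly one coordinate, say coordinate $j$, and exactly one of them — say $\boldx$ — satisfies $f_W(\overline{\boldx})=-1$, i.e.\ $\boldx\in W$, while $\boldx'\notin W$. I would map this edge to the word obtained from $W$ by flipping coordinate $j$ of the string $\boldx$; this replaces $\boldx$ by $\boldx'$. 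Since $\boldx'\notin W$, the result is a genuine set of size $M$, and it clearly lies in $\cB_1(W)$.

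The key step is injectivity. Suppose two boundary edges $e=\{\boldx,\boldx'\}$ and $e''=\{\boldy,\boldy'\}$ (with $\boldx,\boldy\in W$ and $\boldx',\boldy'\notin W$) map to the same word $W'$. Then $W'=(W\setminus\{\boldx\})\cup\{\boldx'\}=(W\setminus\{\boldy\})\cup\{\boldy'\}$. Comparing the two descriptions of $W'$: the element removed from $W$ must be the same in both (it is the unique element of $W\setminus W'$), so $\boldx=\boldy$; and the element added must be the same (the unique element of $W'\setminus W$), so $\boldx'=\boldy'$. Hence $e=e''$, and the map is injective, giving $|\cB_1(W)|\ge|\partial f_W|$.

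I should double-check the degenerate possibility that $W'=W$, i.e.\ that flipping the bit does not actually change the set — but this cannot happen here, because $\boldx'\notin W$ guarantees $\boldx'\in W'\setminus W$, so $W'\neq W$ and in particular $W'$ has size exactly $M$. The only subtlety worth stating carefully is the orientation choice: for each boundary edge exactly one endpoint is in $W$ and one is not (that is precisely what it means to be a boundary edge, since $f_W$ takes value $-1$ on $W$ and $+1$ off it), so the map is well-defined with no ambiguity. I do not expect a serious obstacle; the argument is a clean bijective/injective counting argument, and the main thing to get right is bookkeeping the "in $W$ / not in $W$" sides of each edge so that the preimage of $W'$ can be uniquely reconstructed from the pair $(W\setminus W',\,W'\setminus W)$.
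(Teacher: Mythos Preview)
Your proof is correct and follows essentially the same approach as the paper: construct an injection from $\partial f_W$ into $\cB_1(W)$ by sending a boundary edge to the word obtained by swapping its $W$-endpoint for its non-$W$-endpoint, and verify injectivity by observing that the removed and added strings are uniquely recoverable as $W\setminus W'$ and $W'\setminus W$. Your write-up is in fact slightly more careful than the paper's in checking well-definedness and that $W'\ne W$.
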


\begin{proof}
    Every edge~$e$ on the boundary of~$f_W$ corresponds to a substitution operation that results in a word~$W_e\in\cB_1(W)\cap \Space$. To show that every edge on the boundary corresponds to a \textit{unique} word in~$\cB_1(W)$, assume for contradiction that~$W_e=W_{e'}$ for two distinct edges~$e=\{\overline{\boldx}_1,\overline{\boldx}_2\}$ and~$e'=\{\overline{\boldy}_1,\overline{\boldy}_2\}$, where~$\boldx_1,\boldy_1\in W$ and~$\boldx_2,\boldy_2\notin W$. Since both~$W_e$ and~$W_{e'}$ contain precisely one element which is not in~$W$, and are missing one element which is in~$W$, it follows that~$\boldx_1=\boldy_1$ and~$\boldx_2=\boldy_2$, a contradiction. Therefore, there exists an injective mapping between the boundary of~$f_W$ and~$\cB_1(W)$, and the claim follows.
\end{proof}
Notice that the bound in Lemma~\ref{lemma:boundary} is tight, e.g., in cases where the minimum Hamming distance between the strings of~$W$ is at least~$2$. This implies the tightness of the bound which is given below in these cases. Having established the connection between~$\cB_1(W)$ and the boundary of~$f_W$, the following Fourier analytic claim will aid in proving a lower bound. Let the \textit{total influence} of~$f_W$ be~$I(f_W)\triangleq \sum_{i=1}^L \Pr_{\boldx}(f_W(\boldx)\ne f_W(\boldx^{\oplus i}))$, where~$\boldx^{\oplus i}$ is obtained from~$\boldx$ by changing the sign of the~$i$-th entry, and~$\boldx$ is chosen uniformly at random.

\begin{lemma}~\label{lemma:O'Donnell}
\cite[Theorem~2.39]{O'Donnell} For every function~$f:\{\pm 1\}^L\to \bR$, we have that~$I(f)\ge2\alpha\log(1/\alpha)$, where~$\alpha=\alpha(f)\triangleq \min\{ \Pr_\boldx(f(\boldx)=1),\Pr_\boldx(f(\boldx)=-1)\}$, and~$\boldx\in\{\pm1\}^L$ is chosen uniformly at random.
\end{lemma}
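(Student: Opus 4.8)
The plan is to recognize the claimed bound as the \emph{edge-isoperimetric inequality} on the Boolean cube, written in functional form. Since~$\alpha(f)$ only makes sense for a~$\{\pm1\}$-valued function (and this is the only case the paper uses, with~$f=f_W$), I would first reduce to that case: set~$A=\{\boldx:f(\boldx)=-1\}$ and let~$\partial A$ be the set of hypercube edges with exactly one endpoint in~$A$. Unwinding the definitions of~$I(f)$ and~$\alpha$ gives~$I(f)=|\partial A|/2^{L-1}$ and~$\alpha=\min(|A|,2^{L}-|A|)/2^{L}$; replacing~$f$ by~$-f$ if necessary we may assume~$|A|\le 2^{L-1}$, and the statement becomes exactly
\[
|\partial A|\ \ge\ |A|\log_2\!\frac{2^{L}}{|A|}.
\]
I would in fact establish this for \emph{every}~$A\subseteq\{\pm1\}^{L}$ (with the convention~$0\log_2 0=0$), which lets the induction below run with no case distinctions.

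Next I would induct on~$L$. The case~$L=1$ is immediate: both sides vanish for~$|A|\in\{0,2\}$ and both equal~$1$ for~$|A|=1$. For the step, split the cube along the last coordinate into two~$(L{-}1)$-subcubes~$C_0,C_1$, write~$A=A_0\sqcup A_1$ with~$A_i\subseteq C_i$ viewed as subsets of~$\{\pm1\}^{L-1}$, put~$a_i=|A_i|$, and assume~$a_0\ge a_1$. Splitting the edges of the~$L$-cube into those inside~$C_0$, those inside~$C_1$, and the~$2^{L-1}$ cross-edges yields
\[
|\partial A|\ =\ |\partial A_0|+|\partial A_1|+|A_0\triangle A_1|\ \ge\ |\partial A_0|+|\partial A_1|+(a_0-a_1),
\]
since~$|A_0\triangle A_1|\ge a_0-a_1$. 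Applying the inductive hypothesis to~$A_0$ and~$A_1$ then reduces the whole step to the scalar inequality
\[
a_0\log_2\!\frac{2^{L-1}}{a_0}+a_1\log_2\!\frac{2^{L-1}}{a_1}+(a_0-a_1)\ \ge\ (a_0+a_1)\log_2\!\frac{2^{L}}{a_0+a_1},
\]
valid for all~$0\le a_1\le a_0$.

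I expect this scalar inequality to be the main obstacle. Cancelling the~$\log_2(2^{L-1})$ contributions and setting~$t=a_1/a_0\in[0,1]$ (the case~$a_0=0$ being trivial), it becomes~$G(t)\ge 0$ with~$G(t)=\log_2(1+t)+t\log_2\frac{1+t}{t}-2t$. One checks~$G(0^{+})=G(1)=0$ and~$G'(t)=\log_2(1+1/t)-2$, which is positive for~$t<1/3$ and negative for~$t>1/3$; hence~$G$ is increasing then decreasing on~$[0,1]$ and so~$G\ge\min(G(0),G(1))=0$ throughout. Substituting back closes the induction, and reversing the reduction of the first paragraph gives~$I(f)\ge 2\alpha\log_2(1/\alpha)$.

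As a sanity check --- and roughly the route of~\cite{O'Donnell} --- one can instead apply the logarithmic Sobolev inequality for the uniform measure on~$\{\pm1\}^{L}$ to the~$0/1$ indicator~$\1_A$: the entropy term collapses to~$\alpha\log_2(1/\alpha)$ while the Dirichlet form is a fixed multiple of~$I(f)$, and bookkeeping the normalisation recovers the bound with the sharp constant~$2$. The combinatorial induction above is preferable here only in that it needs no hypercontractivity or log-Sobolev input.
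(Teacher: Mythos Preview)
Your proof is correct. Note, however, that the paper does not actually prove this lemma: it is quoted verbatim as \cite[Theorem~2.39]{O'Donnell} and used as a black box, so there is no ``paper's own proof'' to compare against. What you have supplied is a self-contained argument that the paper deliberately outsources to the literature.

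A brief comment on content: you correctly observe that the stated hypothesis $f:\{\pm1\}^L\to\bR$ is slightly sloppy, since $\alpha(f)$ is only meaningful for $\{\pm1\}$-valued~$f$, and you reduce to that case. Your inductive proof of the edge-isoperimetric inequality $|\partial A|\ge |A|\log_2(2^L/|A|)$ is clean; the key scalar inequality is handled correctly (your computation of~$G'(t)=\log_2(1+1/t)-2$ and the resulting unimodality of~$G$ on~$[0,1]$ are both right). The alternative you sketch at the end---deriving the bound from the log-Sobolev inequality applied to~$\1_A$---is in fact closer to how O'Donnell's book arrives at the result, so your two routes nicely bracket the citation: the elementary induction avoids any analytic machinery, while the log-Sobolev route explains why the constant~$2$ is sharp and situates the inequality in its natural functional-analytic context.
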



\begin{lemma}\label{lemma:ballLowerBound}
    For every word~$W\in \Space$ we have that~$|\partial f_W|\ge\epsilon ML$.
\end{lemma}

\begin{proof}
    Since~$M\le 2^{(1-\epsilon)L}$ and~$\alpha=\alpha(f_W)=\min\{ (2^L-M)/2^L,M/2^L \}$, it follows that~$\alpha=M/2^{L}$ whenever~$L>\frac{1}{\epsilon}$, which holds for every non-constant~$L$. In addition, since~$\Pr_{\boldx}(f_W(\overline{\boldx})\ne f_W(\overline{\boldx}^{\oplus i}))$ equals the fraction of dimension~$i$ edges that lie on the boundary of~$f_W$ (\cite[Fact~2.14]{O'Donnell}), Lemma~\ref{lemma:boundary} implies that
    \begin{align*}
        I(f_W)=\frac{|\partial f_W|}{2^{L-1}}.
    \end{align*}
    Therefore, since~$M\le 2^{(1-\epsilon)L}$ and from Lemma~\ref{lemma:O'Donnell} we have that $|\partial f_W|=2^{L-1}I(f_W)\ge 2^L\alpha\log(1/\alpha)=M\log(2^L/M)\ge\epsilon ML$.
\end{proof}

\begin{corollary}
    For integers~$L$ and~$M$ and a constant~$0<\epsilon<1$ such that~$M\le2^{(1-\epsilon)L}$, any~$1$-substitution code~$\cC\subseteq \binom{\{0,1\}^L}{M}$ satisfies that~$r(\cC)\ge\log(ML)-O(1)$.
\end{corollary}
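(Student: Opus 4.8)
The plan is a straightforward sphere-packing argument, which is legitimate in this direction because—unlike the existential upper bound—a lower bound on redundancy only needs the balls $\cB_1(C)$ to be pairwise \emph{disjoint}, and this follows immediately from decodability. First I would note that if $\cC$ is a $1$-substitution code and $C\neq C'$ were codewords with $\cB_1(C)\cap\cB_1(C')\neq\varnothing$, then a channel output $W$ in the intersection could have originated from either $C$ or $C'$, so no correct decoding function could exist; hence $\{\cB_1(C)\}_{C\in\cC}$ are pairwise disjoint. Intersecting each with $\Space$ preserves disjointness, so $\{\cB_1(C)\cap\Space\}_{C\in\cC}$ is a family of pairwise disjoint subsets of the finite set $\Space$.

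Next I would lower bound $|\cB_1(C)\cap\Space|$ for each codeword. The proof of Lemma~\ref{lemma:boundary} already exhibits an injection from the boundary $\partial f_C$ into $\cB_1(C)\cap\Space$ (each boundary edge is a single-bit substitution producing a set that is again of size exactly $M$), giving $|\cB_1(C)\cap\Space|\ge|\partial f_C|$. Combining this with Lemma~\ref{lemma:ballLowerBound}, which invokes the hypothesis $M\le 2^{(1-\epsilon)L}$ together with the total-influence bound of Lemma~\ref{lemma:O'Donnell}, yields $|\cB_1(C)\cap\Space|\ge\epsilon ML$ for every $C\in\cC$.

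Putting the pieces together, disjointness gives
\[
\epsilon ML\cdot|\cC|\;\le\;\sum_{C\in\cC}\bigl|\cB_1(C)\cap\Space\bigr|\;\le\;\bigl|\Space\bigr|\;=\;\binom{2^L}{M},
\]
so that $r(\cC)=\log\binom{2^L}{M}-\log|\cC|\ge\log(\epsilon ML)=\log(ML)-\log(1/\epsilon)$. Since $\epsilon$ is a constant, $\log(1/\epsilon)=O(1)$, which is exactly the claimed bound.

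There is essentially no hard step here; the only points requiring a little care are (i) intersecting every ball with $\Space$ so that disjointness and membership in a single finite ambient set hold simultaneously, and (ii) checking that the cited lemmas are applied within their stated hypotheses—in particular that Lemma~\ref{lemma:ballLowerBound} already absorbs the condition $M\le 2^{(1-\epsilon)L}$ and the mild requirement that $L$ be non-constant. I would also remark, echoing the discussion preceding the corollary, that the bound is tight up to the additive constant whenever the strings of a codeword can be taken at pairwise Hamming distance at least $2$, since then Lemmas~\ref{lemma:ballUpperBound} and~\ref{lemma:boundary} both hold with equality.
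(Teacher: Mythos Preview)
Your proposal is correct and follows essentially the same sphere-packing approach as the paper: the paper's proof invokes Lemmas~\ref{lemma:boundary} and~\ref{lemma:ballLowerBound} to conclude that each codeword excludes at least $\epsilon ML$ words of $\Space$ from the code, yielding $|\cC|\le\binom{2^L}{M}/(\epsilon ML)$ and hence $r(\cC)\ge\log(\epsilon ML)$. Your write-up is in fact more careful than the paper's in making explicit the disjointness of the balls and the intersection with $\Space$.
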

\begin{proof}
    According to Lemma~\ref{lemma:boundary} and Lemma~\ref{lemma:ballLowerBound}, every codeword of every~$\cC$ excludes at least~$\epsilon ML$ other words from belonging to~$\cC$. Hence, we have that $|\cC|\le {2^L \choose M}/\epsilon ML$, and by the definition of redundancy, it follows that
    \begin{align*}
        r(\cC)&=\log {2^L\choose M} - \log(|\cC|)\ge \log(\epsilon ML)=\log(ML)-O(1).\qedhere
    \end{align*}
\end{proof}
\subsection{Lower bound for more than one substitution}\label{section:lowerboundK>1}
Similar techniques to the ones in Subsection~\ref{section:lowerboundk=1} can be used to obtain a lower bound for larger values of~$K$. Specifically, we have the following theorem.

\begin{theorem}\label{theorem:lowerBoundGeneralK}
    For integers~$L$,~$M$,~$K$, and positive constants~$\epsilon,c<1$ such that~$M\le 2^{(1-\epsilon)L}$ and~$K\le c\epsilon\sqrt{M}$, a $K$-substitution code~$\cC\subseteq \Space$ satisfies that~$r(\cC)\ge K(\log(ML)-2\log(K))-O(1)$.
\end{theorem}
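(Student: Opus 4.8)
\emph{The plan} is to run the same sphere-packing argument as in Subsection~\ref{section:lowerboundk=1}, but with single boundary edges replaced by matchings of $K$ boundary edges. Since a $K$-substitution code $\cC$ has the property that the sets $\cB_K(C)$, $C\in\cC$, are pairwise disjoint, so are the sets $\cB_K(C)\cap\Space\subseteq\Space$; hence $|\cC|\cdot\min_{W\in\Space}|\cB_K(W)\cap\Space|\le\binom{2^L}{M}$, and therefore $r(\cC)=\log\binom{2^L}{M}-\log|\cC|\ge\log\min_{W\in\Space}|\cB_K(W)\cap\Space|$. So it suffices to show that from every $W\in\Space$ at least roughly $(\epsilon ML/K^2)^K$ distinct size-$M$ words can be obtained by $K$ substitutions.

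\emph{Producing many words.} Recall from Lemma~\ref{lemma:boundary} and Lemma~\ref{lemma:ballLowerBound} that the boundary $\partial f_W$ of $f_W$ in the hypercube graph has at least $\epsilon ML$ edges, and that every boundary edge has exactly one endpoint in $W$ and one outside $W$. For any $K$ pairwise vertex-disjoint boundary edges $e_1,\dots,e_K$, write $e_i=\{a_i,b_i\}$ with $a_i\in W$, $b_i\notin W$, and let $\Phi(e_1,\dots,e_K)$ be the set obtained from $W$ by flipping, for each $i$, the unique coordinate in which $a_i$ and $b_i$ differ. This is an application of exactly $K$ substitutions, and since the edges are disjoint the $a_i$ are distinct and the $b_i$ are distinct and lie outside $W$; thus $\Phi(e_1,\dots,e_K)=(W\setminus\{a_1,\dots,a_K\})\cup\{b_1,\dots,b_K\}$ has size $M$, i.e.\ it lies in $\cB_K(W)\cap\Space$.

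\emph{The two counting steps.} First, count $K$-matchings in $\partial f_W$: since the hypercube is $L$-regular, each already-chosen edge forbids at most $2L$ further edges, so the number of ordered $K$-tuples of pairwise-disjoint boundary edges is at least $\prod_{i=0}^{K-1}\big(|\partial f_W|-2iL\big)\ge L^K\prod_{i=0}^{K-1}(\epsilon M-2i)$. The hypothesis $K\le c\epsilon\sqrt M$ is precisely what forces the product $\prod_{i=0}^{K-1}\big(1-2i/(\epsilon M)\big)$ to stay bounded below by a positive constant, since each factor is $1-O(K/(\epsilon M))=1-O(1/\sqrt M)$; hence this count is $\Omega\big((\epsilon ML)^K\big)$. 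Second, bound the multiplicity of $\Phi$: a tuple mapping to a fixed $W'$ must be (an ordering of) a perfect matching between $W\setminus W'$ and $W'\setminus W$ in the Hamming-distance-$1$ graph, and there are at most $K!$ such matchings and $K!$ orderings of each, so $\Phi$ is at most $(K!)^2$-to-one. Combining, $|\cB_K(W)\cap\Space|\ge\Omega\big((\epsilon ML)^K/(K!)^2\big)$, and taking logarithms (with $K!\le K^K$) yields $r(\cC)\ge K\log(ML)-2K\log K-O(1)=K(\log(ML)-2\log K)-O(1)$.

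\emph{The main obstacle.} The delicate point is the multiplicity bound for $\Phi$: one must be sure that distinct matchings collapse to the same word only through an ``internal'' ambiguity among the $K$ changed strings, so that the loss is $(K!)^2$ and not something growing with $M$ or $L$ --- this is why it is essential that $W'$ already determines the two sets $W\setminus W'$ and $W'\setminus W$. The secondary subtlety is the quantitative control of $\prod_{i<K}(\epsilon M-2i)$: this is a birthday-type requirement $K^2=O(\epsilon M)$, and it is exactly here (and not in any later estimate) that the stated hypothesis $K\le c\epsilon\sqrt M$ is needed rather than the weaker $K=O(\epsilon M)$. The remaining ingredients --- the sphere-packing reduction, the validity of each $\Phi$-image as a channel output of size $M$, and the bound $|\partial f_W|\ge\epsilon ML$ --- are immediate or already in hand from Subsection~\ref{section:lowerboundk=1}.
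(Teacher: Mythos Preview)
Your proposal is correct and follows essentially the same scheme as the paper: sphere-packing combined with a lower bound on $|\cB_K(W)|$ obtained by counting $K$-subsets of boundary edges and controlling the multiplicity of the resulting map into $\cB_K(W)$. The paper works with ``special'' sets (only the $W$-endpoints required distinct), counts them by subtracting the non-special $K$-subsets from $\binom{|\partial f_W|}{K}$, and bounds the multiplicity by $K^K$ via surjections $[K]\to[T]$; you instead insist on full vertex-disjoint matchings, count them greedily using the $L$-regularity of the hypercube, and get multiplicity $(K!)^2$ from the pair (matching, ordering). Both routes land on a ball-size lower bound of order $(\epsilon ML)^K/K^{2K}$ and hence the same redundancy estimate; your restriction to $\cB_K(W)\cap\Space$ makes the sphere-packing step slightly cleaner than the paper's, which tacitly allows outputs of size $<M$.
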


To prove this theorem, it is shown that certain special~$K$-subsets of~$\partial f_W$ correspond to words in~$\cB_K(W)$, and by bounding the number of these special subsets from below, the lower bound is attained. A subset of~$K$ boundary edges is called \textit{special}, if it does not contain two edges that intersect on a node (i.e., a string) in~$W$. Formally, a subset $\cS\subseteq \partial f_W$ is special if~$|\cS|=K$, and for every~$\{\boldx_1,\boldy_1\},\{ \boldx_2,\boldy_2 \}\in\cS$ with $f_W(\overline{\boldx}_1)=f_W(\overline{\boldx}_2)=-1$ and $f_W(\overline{\boldy}_1)=f_W(\overline{\boldy}_2)=1$ we have that~$\boldx_1\ne\boldx_2$. We begin by showing how special sets are relevant to proving Theorem~\ref{theorem:lowerBoundGeneralK}.

\begin{lemma}\label{lemma:BKballLowerBound}
    For every word~$W\in\Space$ we have that $|\cB_K(W)|\ge \frac{|\{ \cS\subseteq \partial f_W\vert \cS\mbox{ is special} \}|}{K^K}$.
\end{lemma}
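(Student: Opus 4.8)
The plan is to construct an injective map from special subsets of $\partial f_W$ (modulo a bounded ambiguity) into $\cB_K(W)$. Given a special set $\cS = \{ e_1, \ldots, e_K \} \subseteq \partial f_W$, each edge $e_j = \{ \boldx_j, \boldy_j \}$ with $\boldx_j \in W$ and $\boldy_j \notin W$ naturally encodes a single-bit substitution: replace the string $\boldx_j$ by its neighbor $\boldy_j$. Because $\cS$ is special, the strings $\boldx_1, \ldots, \boldx_K$ on the $W$-side of the edges are pairwise distinct, so these $K$ substitutions act on $K$ distinct strings of $W$ and can be performed simultaneously. This yields a word $W_\cS$ obtained from $W$ by flipping exactly $K$ bits (one per affected string), hence $W_\cS \in \cB_K(W)$. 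First I would verify this is well-defined, namely that $W_\cS$ really is the result of a genuine $K$-substitution: the $\boldx_j$'s being distinct guarantees no string of $W$ is touched twice, and each edge being a Hamming-distance-$1$ pair means exactly one bit changes per affected string, for a total of exactly $K$ flipped positions.

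Next I would bound how many special sets can map to the same $W_\cS$. The target word $W_\cS$ is a subset of $\{0,1\}^L$ of size at most $M$; the issue is that from $W_\cS$ alone one cannot always recover which edges were used, since (as the earlier examples show) distinct substitution patterns can produce the same set. However, the strings $\boldy_j$ (the new strings created) are constrained: each $\boldy_j$ is in $W_\cS$ but not in $W$, so there are at most $K$ candidates for the set $\{\boldy_1,\dots,\boldy_K\}$ — in fact $\{\boldy_1, \ldots, \boldy_K\} \subseteq W_\cS \setminus W$, and similarly $\{\boldx_1, \ldots, \boldx_K\} \subseteq W \setminus W_\cS$ together with possibly some strings of $W \cap W_\cS$ that got mapped to an already-present string. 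The combinatorial count of how many ways one can pair up a bounded number of ``removed'' strings with ``added'' strings, consistent with Hamming distance $1$, is what produces the $K^K$ loss: roughly, each of the (at most $K$) removed strings $\boldx_j$ must be matched to one of at most $K$ possible partners among the $\boldy$'s (or be a ``collision'' substitution into $W \cap W_\cS$, of which there are again at most $K$), giving at most $K^K$ preimages per word. Dividing the number of special sets by this bound yields $|\cB_K(W)| \ge |\{\cS \subseteq \partial f_W : \cS \text{ special}\}| / K^K$.

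The main obstacle I anticipate is the preimage-counting step, i.e., pinning down precisely why at most $K^K$ special sets can yield the same $W_\cS$. One must carefully separate the two types of substitutions — those creating a brand-new string (contributing to $W_\cS \setminus W$) and those creating a duplicate of an existing string in $W$ (invisible in the symmetric difference, only detectable because $|W_\cS| < M$) — and argue that in both cases the number of ``source'' strings in $W$ that could have been altered is at most $K$, and each has at most $K$ possible images. A clean way to organize this: given $W_\cS$, the set of strings of $W$ not appearing in $W_\cS$ has size at most $K$ (since at most $K$ substitutions occurred), call it $R$; any special set producing $W_\cS$ uses exactly the strings of $R$ on its $W$-side plus possibly some collision-strings, and the target of each such string lies in $(W_\cS \setminus W) \cup (W \cap W_\cS)$ restricted to Hamming-neighbors, but crucially the multiset of targets is determined by $W_\cS$ up to assigning which source goes to which target — at most $|R|! \le K^K$ choices, plus the choice of collision strings, which I would fold into the same bound by a slightly more careful accounting. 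I would then combine this lemma with a lower bound on the number of special sets (the next step in the paper, presumably showing there are at least $\binom{\epsilon ML}{K}$-many, using that $\partial f_W \ge \epsilon ML$ by Lemma~\ref{lemma:ballLowerBound} and that the non-special constraint removes few of them when $K \le c\epsilon\sqrt M$) to deduce Theorem~\ref{theorem:lowerBoundGeneralK} exactly as the $K=1$ corollary was deduced.
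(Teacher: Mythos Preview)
Your approach is exactly the paper's: map each special set $\cS$ to the word $W_\cS\in\cB_K(W)$ obtained by performing the $K$ single-bit flips encoded by its edges, then show the map is at most $K^K$-to-one. The preimage count is cleaner than you anticipate, because the ``collision into $W\cap W_\cS$'' case you worry about cannot occur: every boundary edge $\{\boldx,\boldy\}$ has $\boldy\notin W$ by definition of $\partial f_W$, so each substitution necessarily removes one string of $W$ and adds one string outside $W$. Consequently $W\setminus W_\cS=\{\boldx_1,\ldots,\boldx_K\}$ exactly (all $K$ distinct, by specialness) and $W_\cS\setminus W=\{\boldy_1,\ldots,\boldy_T\}$ for some $T\le K$; the only freedom in reconstructing $\cS$ from $W_\cS$ is choosing which $\boldx_i$ is paired with which $\boldy_j$, a function from a $K$-set to a $T$-set, giving at most $T^K\le K^K$ preimages.
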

\begin{proof}
    It is shown that every special set corresponds to a word in~$\cB_K(W)$, and at most~$K^K$ different special sets can correspond to the same word (namely, there exists a mapping from the family of special sets to~$\cB_K(W)$, which is at most~$K^K$ to~$1$). Let~$\cS=\{ \{ \boldx_i,\boldy_i \} \}_{i=1}^K$ be special, where~$f_W(\overline{\boldx}_i)=-1$ and~$f_W(\overline{\boldy}_i)=1$ for every~$i\in[K]$. Let~$W_\cS\in \binom{\{0,1\}^L}{\le M}$ be obtained from~$W$ by removing the~$\boldx_i$'s and adding the~$\boldy_i$'s, i.e., $W_\cS\triangleq (W\setminus \{ \boldx_i \}_{i=1}^K)\cup\{ \boldy_i \}_{i=1}^T$ for some~$T\le K$; notice that there are exactly~$K$ distinct~$\boldx_i$'s but at most~$K$ distinct~$\boldy_i$'s, since~$\cS$ is special, and therefore we assume w.l.o.g that~$\boldy_1,\ldots,\boldy_T$ are the distinct~$\boldy_i$'s.
    It is readily verified that~$W_\cS\in\cB_K(W)$, since~$W_\cS$ can be obtained from~$W$ by performing $K$ substitution operations in~$W$, each of which corresponds to an edge in~$\cS$. Moreover, every~$\cS$ corresponds to a unique surjective function~$f_\cS:[K]\to[T]$ such that~$f_\cS(i)=j$ if there exists~$j\le T$ such that~$\{\boldx_i,\boldy_j\}\in\cS$, and hence at most~$K^T\le K^K$ different special sets~$\cS$ can correspond to the same word in~$\cB_K(W)$.
\end{proof}

We now turn to prove a lower bound on the number of special sets.
\begin{lemma}\label{lemma:constantc}
    If there exists a positive constant~$c<1$ such that~$K\le c\cdot \epsilon \sqrt{M}$, then there are at least~$(1-c^2)\binom{|\partial f_W|}{K}$ special sets~$\cS\subseteq \partial f_W$.
\end{lemma}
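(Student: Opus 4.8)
# Proof Proposal for Lemma~\ref{lemma:constantc}

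The plan is to count the \emph{non}-special $K$-subsets of $\partial f_W$ and show they form a small fraction (at most $c^2$) of all $K$-subsets. A $K$-subset $\cS \subseteq \partial f_W$ fails to be special precisely when it contains two edges sharing a common endpoint $\boldx \in W$ (i.e., a node with $f_W(\overline{\boldx}) = -1$). So I would first organize the bad event by such collisions. For each string $\boldx \in W$, let $d(\boldx)$ denote the number of boundary edges incident to $\boldx$ in the hypercube graph; since $\boldx$ has exactly $L$ neighbors, $d(\boldx) \le L$. The number of $K$-subsets of $\partial f_W$ that contain at least two edges meeting at $\boldx$ is at most $\binom{d(\boldx)}{2}\binom{|\partial f_W| - 2}{K-2}$, by choosing the two colliding edges first and then $K-2$ arbitrary further edges.

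Next I would union-bound over the $M$ strings in $W$: the number of non-special $K$-subsets is at most $\sum_{\boldx \in W} \binom{d(\boldx)}{2}\binom{|\partial f_W|-2}{K-2}$. Using $d(\boldx) \le L$ we get $\sum_{\boldx \in W}\binom{d(\boldx)}{2} \le M\binom{L}{2} \le \tfrac{1}{2}ML^2$. It remains to compare this with $\binom{|\partial f_W|}{K}$. Here I invoke Lemma~\ref{lemma:ballLowerBound}, which gives $|\partial f_W| \ge \epsilon ML$; combined with the hypothesis $K \le c\epsilon\sqrt{M} < \epsilon ML$ (for $L$ non-constant), the quantity $\binom{|\partial f_W|}{K}$ is well-defined and the ratio $\binom{|\partial f_W|-2}{K-2}\big/\binom{|\partial f_W|}{K}$ equals $\frac{K(K-1)}{|\partial f_W|(|\partial f_W|-1)} \le \frac{K^2}{|\partial f_W|^2}$.

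Putting the pieces together, the fraction of non-special sets is at most
\begin{align*}
\frac{\tfrac{1}{2}ML^2 \cdot \binom{|\partial f_W|-2}{K-2}}{\binom{|\partial f_W|}{K}} \le \frac{\tfrac{1}{2}ML^2 \cdot K^2}{|\partial f_W|^2} \le \frac{\tfrac{1}{2}ML^2 K^2}{\epsilon^2 M^2 L^2} = \frac{K^2}{2\epsilon^2 M} \le \frac{c^2\epsilon^2 M}{2\epsilon^2 M} = \frac{c^2}{2} \le c^2,
\end{align*}
using $|\partial f_W| \ge \epsilon ML$ and $K \le c\epsilon\sqrt{M}$. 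Hence at least $(1-c^2)\binom{|\partial f_W|}{K}$ subsets are special. The main thing to be careful about is the bookkeeping in the union bound — making sure that every non-special set is counted (it contains \emph{some} colliding pair at \emph{some} $\boldx \in W$, so it is covered) and that the incidence count $\sum_\boldx \binom{d(\boldx)}{2}$ is bounded cleanly; the degree bound $d(\boldx) \le L$ is the only structural fact needed, and the rest is the ratio estimate above. A minor point is confirming $|\partial f_W| \ge K$ so that all binomial coefficients behave, which follows from $\epsilon ML \ge \epsilon M \ge c\epsilon\sqrt{M} \ge K$ for $M \ge 1$.
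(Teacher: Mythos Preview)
Your proof is correct and follows essentially the same approach as the paper: upper-bound the number of non-special $K$-subsets by a union bound over the $M$ strings in $W$, using $\binom{L}{2}$ for the colliding pair and a binomial factor for the remaining $K-2$ edges, then invoke Lemma~\ref{lemma:ballLowerBound} to compare against $\binom{|\partial f_W|}{K}$. Your choice of $\binom{|\partial f_W|-2}{K-2}$ (rather than the paper's looser $\binom{|\partial f_W|}{K-2}$) gives the exact ratio $\tfrac{K(K-1)}{|\partial f_W|(|\partial f_W|-1)}$, which slightly streamlines the arithmetic and even yields the sharper constant $c^2/2$, avoiding the paper's auxiliary estimate $\epsilon ML-K+1\ge \tfrac{1}{\sqrt{2}}\epsilon ML$.
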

\begin{proof}
    Clearly, the number of ways to choose a $K$-subset of~$\partial f_W$ which is \textit{not} special, i.e., contains~$K$ distinct edges of~$\partial f_W$ but at least two of those are adjacent to the same~$\boldx\in W$, is at most
    \begin{align*}
        M\cdot \binom{L}{2}\cdot \binom{|\partial f_W|}{K-2}=M\cdot\binom{L}{2}\cdot \frac{K(K-1)}{(|\partial f_W|-K+2)(|\partial f_W|-K+1)}\cdot\binom{|\partial f_W|}{K}.
    \end{align*}
Observe that the multiplier of~$\binom{|\partial f_W|}{K}$ in the above expression can be bounded as follows.
\begin{align*}
    M\cdot\binom{L}{2}\cdot \frac{K(K-1)}{(|\partial f_W|-K+2)(|\partial f_W|-K+1)}
    &\le
    M\cdot\binom{L}{2}\cdot \frac{K(K-1)}{(\epsilon ML-K+2)(\epsilon ML-K+1)}
    \\&\le M\cdot L^2\cdot\frac{K^2}{\epsilon^2M^2L^2},
\end{align*}
where the former inequality follows since~$|\partial f_W|\ge \epsilon ML$ by Lemma~\ref{lemma:ballLowerBound}; the latter inequality follows since $K\le c\epsilon \sqrt{M}$ implies that $\epsilon ML-K+2\ge\epsilon ML-K+1\ge \tfrac{1}{\sqrt{2}}\cdot\epsilon ML$ whenever~$\frac{c\sqrt{2}}{\sqrt{2}-1}\le L\sqrt{M}$, which holds for every non-constant~$M$ and~$L$. Therefore, since
\begin{align*}
    M\cdot L^2\cdot\frac{K^2}{\epsilon^2M^2L^2}=\frac{K^2}{\epsilon^2 M}\le c^2,
\end{align*}
it follows that these are at least~$(1-c^2)\cdot\binom{|\partial f_W|}{K}$ special subsets in~$\partial f_W$.
\end{proof}
Lemma~\ref{lemma:BKballLowerBound} and Lemma~\ref{lemma:constantc} readily imply that $|\cB_K(W)|\ge \frac{(1-c^2)}{K^K}\binom{\epsilon ML}{K}$ for every~$W\in\Space$, from which we can prove Theorem~\ref{theorem:lowerBoundGeneralK}.
\begin{proof}
    (of Theorem~\ref{theorem:lowerBoundGeneralK}) Clearly, no two~$K$-balls around codewords in~$\cC$ can intersect, and therefore we must have~$|\cC|\le \binom{2^L}{M}/\min_{W\in\cC}|\cB_K(W)|$. Therefore, 
    \begin{align*}
        r(\cC)&=\log\binom{2^L}{M}-\log|\cC|\ge \log\left( \frac{(1-c^2)}{K^K}\binom{\epsilon ML}{K} \right)\\
        &=\log\binom{\epsilon ML}{K}-K\log(K)-O(1)\\
        &\ge \log\left( \frac{\left(\epsilon ML-K+1 \right)^K}{K^K}
        \right)-K\log(K)-O(1)\\
        &\ge \log\left( \frac{\left(\tfrac{1}{\sqrt{2}}\epsilon ML\right)^K}{K^K}
        \right)-K\log(K)-O(1)\\
        &=K\left( \log(\tfrac{\epsilon}{\sqrt{2}})+\log(ML)) \right)-2K\log(K)-O(1)\\
        &\ge K\log(ML)-2K\log(K)-O(K) \qedhere
    \end{align*}
\end{proof}
\color{black}
\section{Codes for a Single Substitution}\label{section:OneSubSimple}
In this section we present a $1$-substitution code construction that applies whenever~$ M\le 2^{L/6}$, whose redundancy is at most~$3\log ML +3\log M +O(1)$.
For simplicity of illustration, we restrict our attention to values of~$M$ and~$L$ such that~$\log ML +\log M\le M$. 
In the remaining values, a similar construction of comparable redundancy exists. 
\begin{theorem}\label{theorem:1subMain}
	For~$D=\{1,\ldots,\binom{2^{L/3-1}}{M}^3\cdot (M!)^2\cdot 2^{3M-3\log ML-3\log M-6}\}$, there exist an encoding function~$E: D\rightarrow \Space$
	whose image is
	a single substitution correcting code. 
\end{theorem}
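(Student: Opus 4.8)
The plan is to describe how, given a data integer in $D$, the encoder $E$ produces a set of $M$ length-$L$ binary strings that together carry enough redundancy to pinpoint and undo a single bit flip, while the overhead matches the claimed bound. The counting bound on $D$ suggests the strategy: split each length-$L$ string into three blocks of length roughly $L/3$. The first block of each string should be a \emph{free} data symbol drawn from a constant-Hamming-distance subcode of $\{0,1\}^{L/3}$ (distance at least $3$, so a single substitution in that block is locally correctable), which accounts for the $\binom{2^{L/3-1}}{M}$ factor for the choice of the unordered set of first blocks, and one of the $M!$ factors for how these are matched up with the remaining data. The remaining $2^{3M-3\log ML-3\log M-6}$ and the second $M!$ factor should be spent on encoding an additional data string of about $3M - 3\log ML - 3\log M - O(1)$ bits together with a permutation that is itself recoverable — the permutation is needed because after slicing the decoder sees the strings unordered, so a canonical ordering (by the first block, say) must be used and its relationship to the intended ordering stored.

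Concretely, I would carry out the construction in the following order. First, fix the ``index-like'' structure: let the first block $\boldx_i^{(1)}$ of string $i$ be the $i$-th element (in lexicographic order) of a chosen distance-$\ge 3$ vector code in $\{0,1\}^{L/3}$, so that the multiset of first blocks, once corrected, \emph{identifies} the strings without an explicit index and costs only $L/3 - O(\log M)$ redundancy per string rather than $\log M$. Second, concatenate the free payload into one long binary string and compute parity-check / hash information on it: a Reed-Solomon-type check over $\bF_{2^{\Theta(\log ML)}}$ on the sequence of middle blocks $\boldx_i^{(2)}$ that can locate and correct one symbol error, contributing $O(\log ML)$ redundancy, and similarly a check protecting the first blocks against the (already locally-correctable, but positionally ambiguous) single error. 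Third, place in the third blocks $\boldx_i^{(3)}$ the syndrome/check bits together with a hash of the whole codeword that lets the decoder decide \emph{where} (which string, which block) the substitution landed; crucially the third-block contents must themselves be arranged so that a single flip there is detectable and correctable from the global check. Fourth, verify the arithmetic: the number of data integers representable is exactly the product of (choices of first-block set) $\times$ ($M!$ for aligning payload order with canonical order) $\times$ ($M!$-free permutation absorbed) $\times 2^{(\text{payload bits})}$, and check this equals $\binom{2^{L/3-1}}{M}^3 (M!)^2 2^{3M - 3\log ML - 3\log M - 6}$, which also forces the redundancy to be $3\log ML + 3\log M + O(1)$ as claimed, once one subtracts from $3\cdot(L/3)\cdot M = LM$ bits the $\log\binom{2^L}{M}$ normalization.

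The decoding argument, which I would state as the correctness half, runs: receive an unordered set $W'$ of size $M$ or $M-1$; sort by first block to get a canonical order; run the distance-$3$ decoder on each first block to get candidate corrected first blocks, and run the RS-type decoders on the middle and third blocks; the global hash then tells us in which of the three regimes (error in a first block, a middle block, or a third block) the substitution occurred, and the corresponding local decoder fixes it; finally the stored permutation restores the intended data order and the combinatorial numbering maps $F_{com}, F_{perm}$ invert the encoding. One subtlety is the size-$(M-1)$ case, where two strings collided: here the first-block structure is what detects that two canonical slots merged, and the RS check on first blocks recovers the missing one, after which decoding proceeds as before.

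The main obstacle I anticipate is the \emph{interaction between the lost ordering and error localization}: because the decoder only sees a set, a single flip in a first block can move a string to a different position in the canonical (lexicographic) order, so the ``index'' and the payload attached to it get misaligned, and one must ensure the middle/third-block checks are computed in a way that is either invariant to this reordering or can be re-synchronized after the first-block error is corrected. Handling this cleanly — deciding whether to compute checks over the canonical order (and then re-deriving them post-correction) or over an order that the distance-$3$ first blocks make robust — is the delicate design choice, and getting the constants to land at $3\log ML + 3\log M + O(1)$ (rather than, say, with an extra $\log M$) depends on resolving it without an explicit index. The rest (RS parameters, block lengths, the $M \le 2^{L/6}$ feasibility condition ensuring the distance-$3$ subcode of size $M$ exists inside $\{0,1\}^{L/3-1}$) is routine bookkeeping.
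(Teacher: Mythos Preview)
Your proposal has a genuine structural gap: you have misread what the factorization of~$|D|$ is telling you, and as a consequence the core mechanism of the construction is missing.

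The cube in~$\binom{2^{L/3-1}}{M}^3$ does \emph{not} come from a single distance-$3$ indexing block; it comes from the fact that \emph{each} of the three thirds of every~$\boldx_i$ carries a string from a set of~$M$ \emph{distinct} (not distance-$3$) length-$(L/3-1)$ strings. That is, the encoder chooses three independent unordered $M$-sets~$\{\bolda_i\},\{\boldb_i\},\{\boldc_i\}\subseteq\{0,1\}^{L/3-1}$ and two permutations~$\sigma,\pi$ (hence~$(M!)^2$), and lets~$\boldx_i=(\bolda_i,\boldb_{\sigma(i)},\boldc_{\pi(i)})$ up to one spare bit per third. Distinctness in all three parts simultaneously gives~$d_H(\boldx_i,\boldx_j)\ge 3$ for free, so the size-$(M-1)$ collision case you worry about never occurs. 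Your accounting attributes only one~$\binom{2^{L/3-1}}{M}$ factor to ``first blocks'' and leaves the other two unexplained; moreover, a genuine distance-$3$ subcode of~$\{0,1\}^{L/3}$ would not have~$\binom{2^{L/3-1}}{M}$ size-$M$ subsets, so your count is inconsistent with your own design.

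More importantly, the paper's decoding idea is not a hash-and-localize scheme but a \emph{majority vote over three orderings}. Because each third consists of distinct strings, each third induces its own lexicographic ordering of the~$\boldx_i$'s, and a single substitution can corrupt at most one of these three orderings. The encoder stores a Hamming-code syndrome of the full concatenation under each of the three orderings (this is the~$3\log ML$), and the decoder applies a Hamming decoder three times and takes a majority. The ``main obstacle'' you flag---an error in the indexing part scrambling the canonical order before the checks can be applied---is exactly what this three-way symmetric redundancy resolves, and your proposal does not resolve it: you acknowledge the problem but defer it as ``the delicate design choice'' without a mechanism. An asymmetric index/payload/check layout with a single canonical order and RS checks over that order does not survive a substitution in the index block without additional structure that you have not supplied.
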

The idea behind Theorem~\ref{theorem:1subMain} is to concatenate the strings in a codeword~$C=\{\boldx_i\}_{i=1}^M$ in a certain order, so that
classic $1$-substitution error correction techniques
can be applied over the concatenated string. Since a substitution error may affect any particular order of the~$\boldx_i$'s, we consider the lexicographic orders of several different parts of the~$\boldx_i$'s, instead of the lexicographic order of the whole strings. Specifically, we partition the~$\boldx_i$'s to three parts, and place distinct strings in each of them. Since a substitution operation can scramble the order in at most one part, the correct order will be inferred by a majority vote, so that classic substitution error correction can be applied.

Consider a message~$d\in D$ as a tuple~$d=(d_1,\ldots,d_6)$, where $d_1\in\{1,\ldots,\binom{2^{L/3-1}}{M}\}$, ~$d_3,d_5\in\{1,\ldots,\binom{2^{L/3-1}}{M}M!\}$, and~$d_2,d_4,d_6\in\{1,\ldots,2^{M-\log ML-\log M-2}\}$. Apply the functions~$F_{com}, F_{perm}$, and~$F$ (see Section~\ref{section:preliminaries}) to obtain
\begin{align}\label{equation:combNum}
    F_{com}(d_1) &= \phantom{(}\{\bolda_{1},\ldots,\bolda_{M}\},\nonumber\\
    F(d_3) &= (\{\boldb_{1},\ldots,\boldb_{M}\},\sigma),\nonumber\\
    F(d_5) &= (\{\boldc_{1},\ldots,\boldc_{M}\},\pi),
\end{align}
where~$\bolda_{i},\boldb_i,\boldc_i\in\{0,1\}^{L/3-1}$ for every~$i\in [M]$, the permutations~$\sigma$ and~$\pi$ are in~$S_M$, and the indexing of~$\{\bolda_i\}_{i=1}^M$, $\{\boldb_i\}_{i=1}^M$, and $\{\boldc_i\}_{i=1}^M$ is lexicographic. Further, let~$\boldd_2,\boldd_4$, and~$\boldd_6$ be the binary strings that correspond to~$d_2,d_4$, and~$d_6$, respectively, and let
\begin{alignat}{11}\label{Equation:concatenateSs}
\bolds_1 &= (&&\bolda_{1},&&~\ldots~,&&\bolda_{M},&&\boldb_{\sigma(1)},&&~\ldots~,&&\boldb_{\sigma(M)},&&\boldc_{\pi(1)},&&~\ldots~,&&\boldc_{\pi(M)}&&),	\nonumber\\
\bolds_2 &= (&&\bolda_{\sigma^{-1}(1)},&&~\ldots~,&&\bolda_{\sigma^{-1}(M)},&&\boldb_{1},&&~\ldots~,&&\boldb_{M},&&\boldc_{\sigma^{-1}\pi(1)},&&~\ldots~,&&\boldc_{\sigma^{-1}\pi(M)}&&),\mbox{ and} \nonumber\\
\bolds_3 &= (&&\bolda_{\pi^{-1}(1)},&&~\ldots~,&&\bolda_{\pi^{-1}(M)},&&\boldb_{\pi^{-1}\sigma(1)},&&~\ldots~,&&\boldb_{\pi^{-1}\sigma(M)},&&\boldc_{1},&&~\ldots~,&&\boldc_{M}&&).
\end{alignat}

\begin{center}
\begin{figure}
\centering
\definecolor{ttqqqq}{rgb}{0.2,0,0}
\begin{tikzpicture}[line cap=round,line join=round,>=triangle 45,x=1cm,y=0.8cm]
\clip(-1,0) rectangle (10.2,15);

\fill[line width=2pt,color=ttqqqq,fill=ttqqqq,fill opacity=0.09] (1,11) -- (3,11) -- (3,14) -- (1,14) -- cycle;
\fill[line width=2pt,color=ttqqqq,fill=ttqqqq,fill opacity=0.10000000149011612] (3,14) -- (4,14) -- (4,11) -- (3,11) -- cycle;
\fill[line width=2pt,color=ttqqqq,fill=ttqqqq,fill opacity=0.10000000149011612] (4,14) -- (6,14) -- (6,11) -- (4,11) -- cycle;
\fill[line width=2pt,color=ttqqqq,fill=ttqqqq,pattern=north east lines,pattern color=ttqqqq] (6,14) -- (7,14) -- (7,11) -- (6,11) -- cycle;
\fill[line width=2pt,color=ttqqqq,fill=ttqqqq,fill opacity=0.10000000149011612] (7,14) -- (9,14) -- (9,11) -- (7,11) -- cycle;
\fill[line width=2pt,color=ttqqqq,fill=ttqqqq,pattern=north east lines,pattern color=ttqqqq] (9,14) -- (10,14) -- (10,11) -- (9,11) -- cycle;
\draw [line width=2pt,color=ttqqqq] (1,11)-- (3,11);
\draw [line width=2pt,color=ttqqqq] (3,11)-- (3,14);
\draw [line width=2pt,color=ttqqqq] (3,14)-- (1,14);
\draw [line width=2pt,color=ttqqqq] (1,14)-- (1,11);
\draw [line width=2pt,color=ttqqqq] (3,14)-- (4,14);
\draw [line width=2pt,color=ttqqqq] (4,14)-- (4,11);
\draw [line width=2pt,color=ttqqqq] (4,11)-- (3,11);
\draw [line width=2pt,color=ttqqqq] (3,11)-- (3,14);
\draw [line width=2pt,color=ttqqqq] (4,14)-- (6,14);
\draw [line width=2pt,color=ttqqqq] (6,14)-- (6,11);
\draw [line width=2pt,color=ttqqqq] (6,11)-- (4,11);
\draw [line width=2pt,color=ttqqqq] (4,11)-- (4,14);
\draw [line width=2pt,color=ttqqqq] (6,14)-- (7,14);
\draw [line width=2pt,color=ttqqqq] (7,14)-- (7,11);
\draw [line width=2pt,color=ttqqqq] (7,11)-- (6,11);
\draw [line width=2pt,color=ttqqqq] (6,11)-- (6,14);
\draw [line width=2pt,color=ttqqqq] (7,14)-- (9,14);
\draw [line width=2pt,color=ttqqqq] (9,14)-- (9,11);
\draw [line width=2pt,color=ttqqqq] (9,11)-- (7,11);
\draw [line width=2pt,color=ttqqqq] (7,11)-- (7,14);
\draw [line width=2pt,color=ttqqqq] (9,14)-- (10,14);
\draw [line width=2pt,color=ttqqqq] (10,14)-- (10,11);
\draw [line width=2pt,color=ttqqqq] (10,11)-- (9,11);
\draw [line width=2pt,color=ttqqqq] (9,11)-- (9,14);

\draw [line width=0.5pt,color=ttqqqq] (1,13)-- (3,13);
\draw (2,12.65) node[anchor=center] {$\vdots$};
\draw [line width=0.5pt,color=ttqqqq] (1,12)-- (3,12);

\draw [line width=0.5pt,color=ttqqqq] (4,13)-- (6,13);
\draw (5,12.65) node[anchor=center] {$\vdots$};
\draw [line width=0.5pt,color=ttqqqq] (4,12)-- (6,12);

\draw [line width=0.5pt,color=ttqqqq] (7,13)-- (9,13);
\draw (8,12.65) node[anchor=center] {$\vdots$};
\draw [line width=0.5pt,color=ttqqqq] (7,12)-- (9,12);

\draw (3.5,12.5) node[anchor=center,rotate=-90]{String $\boldz_1$};

\draw (6.5,12.5) node[anchor=center,rotate=-90]{Scrambled bits};

\draw (9.5,12.5) node[anchor=center,rotate=-90]{Scrambled bits};

\draw (0.5,13.5) node[anchor=center]{$\boldx_1$};

\draw (0.5,12.65) node[anchor=center]{$\vdots$};

\draw (0.5,11.5) node[anchor=center]{$\boldx_M$};

\draw (2,13.5) node[anchor=center] {$\textbf{a}_1$};
\draw (2,11.5) node[anchor=center] {$\textbf{a}_M$};
\draw (5,13.5) node[anchor=center] {$\textbf{b}_{\sigma(1)}$};
\draw (5,11.5) node[anchor=center] {$\textbf{b}_{\sigma(M)}$};
\draw (8,13.5) node[anchor=center] {$\textbf{c}_{\pi(1)}$};
\draw (8,11.5) node[anchor=center] {$\textbf{c}_{\pi(M)}$};

\draw (3.5,10.5) node[anchor=center] {$L/3$};
\draw (6.5,10.5) node[anchor=center] {$2L/3$};
\draw (9.5,10.5) node[anchor=center] {$L$};
\draw (5.5,14.5) node[anchor=center] {Ordered by $\mathbf{a}_i$};

\fill[line width=2pt,color=ttqqqq,fill=ttqqqq,fill opacity=0.09] (1,6) -- (3,6) -- (3,9) -- (1,9) -- cycle;
\fill[line width=2pt,color=ttqqqq,fill=ttqqqq,fill opacity=0.10000000149011612] (6,9) -- (7,9) -- (7,6) -- (6,6) -- cycle;
\fill[line width=2pt,color=ttqqqq,fill=ttqqqq,fill opacity=0.10000000149011612] (4,9) -- (6,9) -- (6,6) -- (4,6) -- cycle;
\fill[line width=2pt,color=ttqqqq,fill=ttqqqq,pattern=north east lines,pattern color=ttqqqq] (3,9) -- (4,9) -- (4,6) -- (3,6) -- cycle;
\fill[line width=2pt,color=ttqqqq,fill=ttqqqq,fill opacity=0.10000000149011612] (7,9) -- (9,9) -- (9,6) -- (7,6) -- cycle;
\fill[line width=2pt,color=ttqqqq,fill=ttqqqq,pattern=north east lines,pattern color=ttqqqq] (9,9) -- (10,9) -- (10,6) -- (9,6) -- cycle;
\draw [line width=2pt,color=ttqqqq] (1,6)-- (3,6);
\draw [line width=2pt,color=ttqqqq] (3,6)-- (3,9);
\draw [line width=2pt,color=ttqqqq] (3,9)-- (1,9);
\draw [line width=2pt,color=ttqqqq] (1,9)-- (1,6);
\draw [line width=2pt,color=ttqqqq] (3,9)-- (4,9);
\draw [line width=2pt,color=ttqqqq] (4,9)-- (4,6);
\draw [line width=2pt,color=ttqqqq] (4,6)-- (3,6);
\draw [line width=2pt,color=ttqqqq] (3,6)-- (3,9);
\draw [line width=2pt,color=ttqqqq] (4,9)-- (6,9);
\draw [line width=2pt,color=ttqqqq] (6,9)-- (6,6);
\draw [line width=2pt,color=ttqqqq] (6,6)-- (4,6);
\draw [line width=2pt,color=ttqqqq] (4,6)-- (4,9);
\draw [line width=2pt,color=ttqqqq] (6,9)-- (7,9);
\draw [line width=2pt,color=ttqqqq] (7,9)-- (7,6);
\draw [line width=2pt,color=ttqqqq] (7,6)-- (6,6);
\draw [line width=2pt,color=ttqqqq] (6,6)-- (6,9);
\draw [line width=2pt,color=ttqqqq] (7,9)-- (9,9);
\draw [line width=2pt,color=ttqqqq] (9,9)-- (9,6);
\draw [line width=2pt,color=ttqqqq] (9,6)-- (7,6);
\draw [line width=2pt,color=ttqqqq] (7,6)-- (7,9);
\draw [line width=2pt,color=ttqqqq] (9,9)-- (10,9);
\draw [line width=2pt,color=ttqqqq] (10,9)-- (10,6);
\draw [line width=2pt,color=ttqqqq] (10,6)-- (9,6);
\draw [line width=2pt,color=ttqqqq] (9,6)-- (9,9);

\draw [line width=0.5pt,color=ttqqqq] (1,8)-- (3,8);
\draw (2,7.65) node[anchor=center] {$\vdots$};
\draw [line width=0.5pt,color=ttqqqq] (1,7)-- (3,7);

\draw [line width=0.5pt,color=ttqqqq] (4,8)-- (6,8);
\draw (5,7.65) node[anchor=center] {$\vdots$};
\draw [line width=0.5pt,color=ttqqqq] (4,7)-- (6,7);

\draw [line width=0.5pt,color=ttqqqq] (7,8)-- (9,8);
\draw (8,7.65) node[anchor=center] {$\vdots$};
\draw [line width=0.5pt,color=ttqqqq] (7,7)-- (9,7);

\draw (3.5,7.5) node[anchor=center,rotate=-90]{Scrambled bits};

\draw (6.5,7.5) node[anchor=center,rotate=-90]{String $\boldz_2$};

\draw (9.5,7.5) node[anchor=center,rotate=-90]{Scrambled bits};

\draw (0.1,8.5) node[anchor=center]{$\boldx_{\sigma^{-1}}(1)$};

\draw (0.1,7.65) node[anchor=center]{$\vdots$};

\draw (0.1,6.5) node[anchor=center]{$\boldx_{\sigma^{-1}}(M)$};

\draw (2,8.5) node[anchor=center] {$\textbf{a}_{\sigma^{-1}(1)}$};
\draw (2,6.5) node[anchor=center] {$\textbf{a}_\sigma^{-1}(M)$};
\draw (5,8.5) node[anchor=center] {$\textbf{b}_{1}$};
\draw (5,6.5) node[anchor=center] {$\textbf{b}_{M}$};
\draw (8,8.5) node[anchor=center] {$\textbf{c}_{\sigma^{-1}(\pi(1))}$};
\draw (8,6.5) node[anchor=center] {$\textbf{c}_{\sigma^{-1}(\pi(M))}$};

\draw (3.5,5.5) node[anchor=center] {$L/3$};
\draw (6.5,5.5) node[anchor=center] {$2L/3$};
\draw (9.5,5.5) node[anchor=center] {$L$};
\draw (5.5,9.5) node[anchor=center] {Ordered by $\mathbf{b}_i$};

\fill[line width=2pt,color=ttqqqq,fill=ttqqqq,fill opacity=0.09] (1,1) -- (3,1) -- (3,4) -- (1,4) -- cycle;
\fill[line width=2pt,color=ttqqqq,fill=ttqqqq,pattern=north east lines,pattern color=ttqqqq] (3,4) -- (4,4) -- (4,1) -- (3,1) -- cycle;
\fill[line width=2pt,color=ttqqqq,fill=ttqqqq,fill opacity=0.10000000149011612] (4,4) -- (6,4) -- (6,1) -- (4,1) -- cycle;
\fill[line width=2pt,color=ttqqqq,fill=ttqqqq,pattern=north east lines,pattern color=ttqqqq] (6,4) -- (7,4) -- (7,1) -- (6,1) -- cycle;
\fill[line width=2pt,color=ttqqqq,fill=ttqqqq,fill opacity=0.10000000149011612] (7,4) -- (9,4) -- (9,1) -- (7,1) -- cycle;
\fill[line width=2pt,color=ttqqqq,fill=ttqqqq,fill opacity=0.10000000149011612] (9,4) -- (10,4) -- (10,1) -- (9,1) -- cycle;
\draw [line width=2pt,color=ttqqqq] (1,1)-- (3,1);
\draw [line width=2pt,color=ttqqqq] (3,1)-- (3,4);
\draw [line width=2pt,color=ttqqqq] (3,4)-- (1,4);
\draw [line width=2pt,color=ttqqqq] (1,4)-- (1,1);
\draw [line width=2pt,color=ttqqqq] (3,4)-- (4,4);
\draw [line width=2pt,color=ttqqqq] (4,4)-- (4,1);
\draw [line width=2pt,color=ttqqqq] (4,1)-- (3,1);
\draw [line width=2pt,color=ttqqqq] (3,1)-- (3,4);
\draw [line width=2pt,color=ttqqqq] (4,4)-- (6,4);
\draw [line width=2pt,color=ttqqqq] (6,4)-- (6,1);
\draw [line width=2pt,color=ttqqqq] (6,1)-- (4,1);
\draw [line width=2pt,color=ttqqqq] (4,1)-- (4,4);
\draw [line width=2pt,color=ttqqqq] (6,4)-- (7,4);
\draw [line width=2pt,color=ttqqqq] (7,4)-- (7,1);
\draw [line width=2pt,color=ttqqqq] (7,1)-- (6,1);
\draw [line width=2pt,color=ttqqqq] (6,1)-- (6,4);
\draw [line width=2pt,color=ttqqqq] (7,4)-- (9,4);
\draw [line width=2pt,color=ttqqqq] (9,4)-- (9,1);
\draw [line width=2pt,color=ttqqqq] (9,1)-- (7,1);
\draw [line width=2pt,color=ttqqqq] (7,1)-- (7,4);
\draw [line width=2pt,color=ttqqqq] (9,4)-- (10,4);
\draw [line width=2pt,color=ttqqqq] (10,4)-- (10,1);
\draw [line width=2pt,color=ttqqqq] (10,1)-- (9,1);
\draw [line width=2pt,color=ttqqqq] (9,1)-- (9,4);

\draw [line width=0.5pt,color=ttqqqq] (1,3)-- (3,3);
\draw (2,2.65) node[anchor=center] {$\vdots$};
\draw [line width=0.5pt,color=ttqqqq] (1,2)-- (3,2);

\draw [line width=0.5pt,color=ttqqqq] (4,3)-- (6,3);
\draw (5,2.65) node[anchor=center] {$\vdots$};
\draw [line width=0.5pt,color=ttqqqq] (4,2)-- (6,2);

\draw [line width=0.5pt,color=ttqqqq] (7,3)-- (9,3);
\draw (8,2.65) node[anchor=center] {$\vdots$};
\draw [line width=0.5pt,color=ttqqqq] (7,2)-- (9,2);

\draw (3.5,2.5) node[anchor=center,rotate=-90]{Scrambled bits};

\draw (6.5,2.5) node[anchor=center,rotate=-90]{Scrambled bits};

\draw (9.5,2.5) node[anchor=center,rotate=-90]{String $\boldz_3$};

\draw (0,3.5) node[anchor=center]{$\boldx_{\pi^{-1}(1)}$};

\draw (0,2.65) node[anchor=center]{$\vdots$};

\draw (0,1.5) node[anchor=center]{$\boldx_{\pi^{-1}(M)}$};

\draw (2,3.5) node[anchor=center] {$\textbf{a}_{\pi^{-1}(1)}$};
\draw (2,1.5) node[anchor=center] {$\textbf{a}_{\pi^{-1}(M)}$};
\draw (5,3.5) node[anchor=center] {$\textbf{b}_{\pi^{-1}\sigma(1)}$};
\draw (5,1.5) node[anchor=center] {$\textbf{b}_{\pi^{-1}\sigma(M)}$};
\draw (8,3.5) node[anchor=center] {$\textbf{c}_{1}$};
\draw (8,1.5) node[anchor=center] {$\textbf{c}_{M}$};

\draw (3.5,0.5) node[anchor=center] {$L/3$};
\draw (6.5,0.5) node[anchor=center] {$2L/3$};
\draw (9.5,0.5) node[anchor=center] {$L$};
\draw (5.5,4.5) node[anchor=center] {Ordered by $\mathbf{c}_i$};
\begin{scriptsize}
\end{scriptsize}
\end{tikzpicture}\caption{This figure illustrates the three different~$M\times L$ binary matrices which results from placing the strings~$\{\boldx_i\}_{i=1}^M$ on top of one another in various orders. That is, every row in the above matrices equals to some~$\boldx_i$. Notice that the strings~$\boldz_1$,~$\boldz_2$, and~$\boldz_3$ constitute three~$M\times 1$ columns, that contain the bits of $(\boldd_2,E_H(\boldd_2),E_H(\bolds_1))$, $(\boldd_4,E_H(\boldd_4),E_H(\bolds_2))$, and $(\boldd_6,E_H(\boldd_6),E_H(\bolds_3))$ respectively. For example, when sorting the~$\boldx_i$'s according to the~$\bolda_i$'s (top figure), the bits of $\boldd_2$, $E_H(\boldd_2)$, and~$E_H(\bolds_1)$ appear consecutively.}
\label{figure:Encoding}
\end{figure}
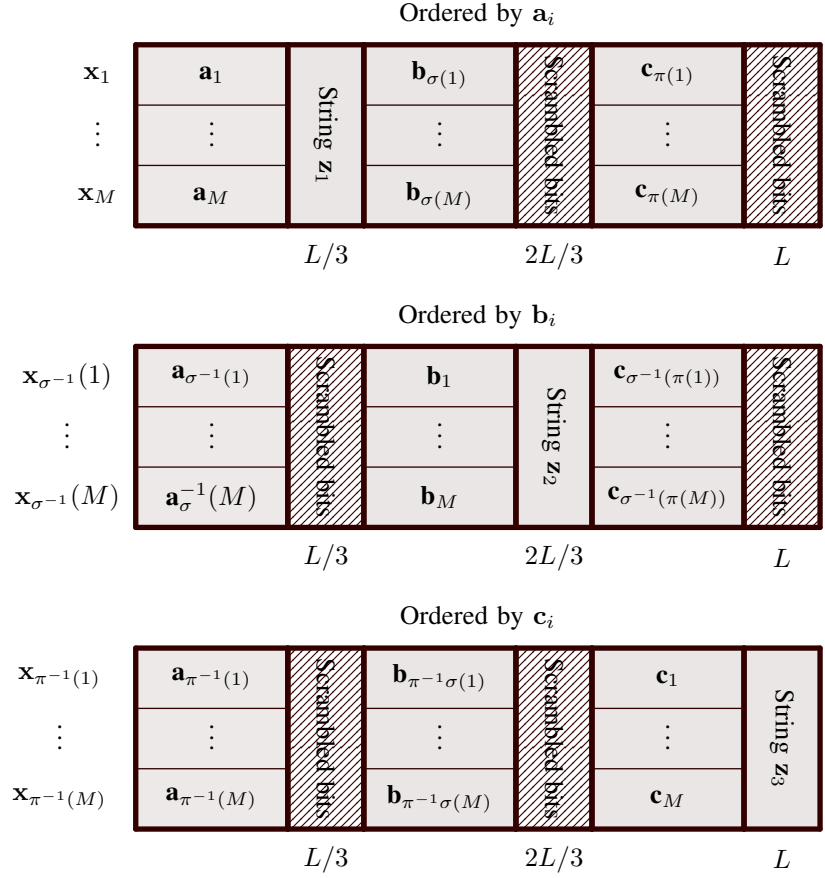
\end{center}

Without loss of generality\footnote{Every string can be padded with zeros to extend its length to~$2^t-t-1$ for some~$t$. It is readily verified that this operation extends the string by at most a factor of two, and by the properties of the Hamming code, this will increase the number of redundant bits by at most~$1$.} assume that there exists an integer~$t$ for which~$|\bolds_i|=(L-3)M=2^t-t-1$ for all~$i\in[3]$. Then, each~$\bolds_i$ can be encoded by using a \textit{systematic}~$[2^t-1,2^t-t-1]_2$ Hamming code, by introducing~$t$ redundant bits. That is, the encoding function is of the form~$\bolds_i\mapsto(\bolds_i,E_H(\bolds_i))$, where~$E_H(\bolds_i)$ are the~$t$ redundant bits, and~$t\le \log(ML)+1$. 
Similarly, we assume that there exists an integer~$h$ for which~$|\boldd_i|=2^h-h-1$ for~$i\in\{2,4,6\}$, and let~$E_H(\boldd_i)$ be the corresponding~$h$ bits of redundancy, that result from encoding~$\boldd_i$ by using a~$[2^h-1,2^h-h-1]$ Hamming code. By the properties of a Hamming code, and by the definition of~$h$, we have that~$h\le \log(M)+1$.

The data~$d\in D$ is mapped to a codeword~$C=\{\boldx_1,\ldots,\boldx_M\}$ as follows, and the reader is encouraged to refer to Figure~\ref{figure:Encoding} for clarifications.
First, we place~$\{\bolda_i\}_{i=1}^M$, $\{\boldb_i\}_{i=1}^M$, and $\{\boldc_i\}_{i=1}^M$ in the different thirds of the~$\boldx_i$'s, sorted by~$\sigma$ and~$\pi$. That is, denoting~$\boldx_i=(x_{i,1},\ldots,x_{i,L})$, we define
\begin{align}\label{equation:DefXsWithabc}
    (x_{i,1},\ldots,x_{i,L/3-1})&=\bolda_i, \nonumber\\
    (x_{i,L/3+1},\ldots,x_{i,2L/3-1})&=\boldb_{\sigma(i)},\mbox{ and}\nonumber\\
    (x_{i,2L/3+1},\ldots,x_{i,L-1})&=\boldc_{\pi(i)}.
\end{align}
The remaining bits~$\{x_{i,L/3} \}_{i=1}^M$, $\{x_{i,2L/3} \}_{i=1}^M$, and~$\{x_{i,L} \}_{i=1}^M$ are used to accommodate the information bits of~$\boldd_2,\boldd_4,\boldd_6$, and the redundancy bits~$\{E_H(\bolds_i)\}^3_{i=1}$ and~$\{E_H(\boldd_i)\}_{i\in\{2,4,6\}}$, in the following manner.
\begin{align}\label{Equation:informationConcise}
x_{i,L/3} &= \begin{cases}
d_{2,i}, &\text{if $i\le M-\log ML-\log M-2$}\\
E_{H}(\boldd_2)_{i-(M-\log ML-\log M-2)}, &\text{if $M-\log ML-\log M -1\le i\le M-\log ML-1$}\\
E_{H}(\bolds_1)_{i-(M-\log ML-1)}, &\text{if $ M-\log ML\le i\le M$,}
\end{cases},	\nonumber\\
x_{i,2L/3} &= \begin{cases}
d_{4,i}, &\text{if $\sigma^{-1}(i)\le M-\log ML-\log M-2$}\\
E_{H}(\boldd_4)_{i-(M-\log ML-\log M-2)}, &\text{if $M-\log ML-\log M -1\le \sigma^{-1}(i)\le M-\log ML-1$}\\
E_{H}(\bolds_2)_{i-(M-\log ML-1)}, &\text{if $ M-\log ML\le \sigma^{-1}(i)\le M$,}
\end{cases},	\nonumber\\
x_{i,L} &= \begin{cases}
d_{6,i}, &\text{if $\pi^{-1}(i)\le M-\log ML-\log M-2$}\\
E_{H}(\boldd_6)_{i-(M-\log ML-\log M-2)}, &\text{if $M-\log ML-\log M -1\le \pi^{-1}(i)\le M-\log ML-1$}\\
E_{H}(\bolds_3)_{i-(M-\log ML-1)}, &\text{if $ M-\log ML\le \pi^{-1}(i)\le M$,}
\end{cases}.
\end{align}
That is, if the strings~$\{\boldx_i\}_{i=1}^M$ are sorted according to the content of the bits~$(x_{i,1},\ldots,x_{i,L/3-1})=\bolda_i$, then the top $M-\log ML\log M-2$ bits of the~$(L/3)$'th column\footnote{Sorting the strings~$\{\boldx_i\}_{i=1}^M$ by any ordering method provides a matrix in a natural way, and can consider columns in this matrix.} contain~$\boldd_2$, the middle~$\log M+1$ bits contain~$E_H(\boldd_2)$, and the bottom~${\log ML+1}$ bits contain~$E_H(\bolds_1)$. Similarly, if the strings are sorted according to $(x_{i,L/3+1},\ldots,x_{i,2L/3-1})=\boldb_i$, then the top~$M-\log ML\log M-2$ bits of the~$(2L/3)$'th column contain~$\boldd_4$, the middle~${\log M+1}$ bits contain~$E_H(\boldd_4)$, and the bottom~$\log ML+1$ bits contain~$E_H(\bolds_2)$, and so on. This concludes the encoding function~$E$ of Theorem~\ref{theorem:1subMain}. It can be readily verified that~$E$ is injective since different messages result in either different~$(\{\bolda_i\}^M_{i=1}$,$\{\boldb_i\}^M_{i=1}$,$\{\boldc_i\}^M_{i=1})$ or the same $(\{\bolda_i\}^M_{i=1}$,$\{\boldb_i\}^M_{i=1}$,$\{\boldc_i\}^M_{i=1})$ with different ~$(\boldd_2,\boldd_4,\boldd_6)$. In either case, the resulting codewords~$\{\boldx_i\}^M_{i=1}$ of the two messages are different.

To verify that the image of~$E$ is a $1$-substitution code, observe first that since~$\{\bolda_i\}_{i=1}^M$, $\{\boldb_i\}_{i=1}^M$, and $\{\boldc_i\}_{i=1}^M$ are \textit{sets}, it follows that any two strings in the same set are distinct. Hence, according to~\eqref{equation:DefXsWithabc}, it follows that~$d_H(\boldx_i,\boldx_j)\ge 3$ for every distinct~$i$ and~$j$ in~$[M]$. Therefore, no $1$-substitution error can cause one~$\boldx_i$ to be equal to another, and consequently, the result of a $1$-substitution error is always in~$\Space$. In what follows a decoding algorithm is presented, whose input is a codeword that was distorted by at most a single substitution, and its output is~$d$.

Upon receiving a word~$C'=\{\boldx'_1,\ldots,\boldx'_M\}\in \cB_1(C)$ for some codeword~$C$ (once again, the indexing of the elements of~$C'$ is lexicographic), we define
\begin{align}\label{equation:abcHat}
    \hat{\bolda}_i&=(x'_{i,1},\ldots,x'_{i,L/3-1})\nonumber\\
    \hat{\boldb}_i&=(x'_{\tau^{-1}(i),L/3+1},\ldots,x'_{\tau^{-1}(i),2L/3-1})\\
    \hat{\boldc}_i&=(x'_{\rho^{-1}(i),2L/3+1},\ldots,x'_{\rho^{-1}(i),L-1})\nonumber,
\end{align}
where
$\tau$ is the permutation by which ~$\{\boldx'_i\}_{i=1}^M$ are sorted according to their~$L/3+1,\ldots,2L/3-1$ entries, and~$\rho$ is the permutation by which they are sorted according to their~$2L/3+1,\ldots,L-1$ entries (we emphasize that~$\tau$ and~$\rho$ are unrelated to the original~$\pi$ and~$\sigma$, and those will be decoded later). Further, when ordering~$\{\boldx_i'\}_{i=1}^M$ by either the lexicographic ordering, by~$\tau$, or by~$\rho$, we obtain \textit{candidates} for each one of~$\boldd_2$, $\boldd_4$, $\boldd_6$, $E_H(\boldd_2)$, $E_H(\boldd_4)$, $E_H(\boldd_6)$, $E_H(\bolds_1)$, $E_H(\bolds_2)$, and $E_H(\bolds_3)$, that we similarly denote with an additional apostrophe\footnote{That is, each one of~$\boldd_2'$, $\boldd_4'$, etc., is obtained from~$\boldd_2$, $\boldd_4$, etc., by at most a single substitution.}. For example, if we order~$\{\boldx_i'\}_{i=1}^M$ according to~$\tau$, then the bottom~$\log(ML)+1$ bits of the~$(2L/3)$-th column are~$E_H(\bolds_2)'$, the middle~$\log M+1$ bits are~$E_H(\boldd_4)'$, and the top~$M-\log ML -\log M -2$ bits are~$\boldd_4'$ (see Eq.~\eqref{Equation:informationConcise}). Now, let
\begin{alignat}{12}\label{Equation:concatenate}
\bolds_1' &= (&&\hat{\bolda}_{1},&&~\ldots~,&&\hat{\bolda}_{M},&&\hat{\boldb}_{\tau(1)},&&~\ldots~,&&\hat{\boldb}_{\tau(M)},&&\hat{\boldc}_{\rho(1)},&&~\ldots~,&&\hat{\boldc}_{\rho(M)}),	\nonumber\\
\bolds_2' &= (&&\hat{\bolda}_{\tau^{-1}(1)},&&~\ldots~,&&\hat{\bolda}_{\tau^{-1}(M)},&&\hat{\boldb}_{1},&&~\ldots~,&&\hat{\boldb}_{M},&&\hat{\boldc}_{\tau^{-1}\rho(1)},&&~\ldots~,&&\hat{\boldc}_{\tau^{-1}\rho(M)}),  \mbox{ and}  \\
\bolds_3' &= (&&\hat{\bolda}_{\rho^{-1}(1)},&&~\ldots~,&&\hat{\bolda}_{\rho^{-1}(M)},&&\hat{\boldb}_{\rho^{-1}\tau(1)},&&~\ldots~,&&\hat{\boldb}_{\rho^{-1}\tau(M)},&&\hat{\boldc}_{1},&&~\ldots~,&&\hat{\boldc}_{M}).\nonumber
\end{alignat}
The following lemma shows that at least two of the above~$\bolds_i'$ are close in Hamming distance to their encoded counterpart~$(\bolds_i,E_H(\bolds_i))$.

\begin{lemma}\label{lemma:twoAreClose}
    There exist distinct integers~$k,\ell\in [3]$ such that
    \begin{align*}
        d_H((\bolds_{k}',E_H(\bolds_k)'),(\bolds_k,E_H(\bolds_k))&\le 1\mbox{, and}\\   
        d_H((\bolds_{\ell}',E_H(\bolds_\ell)'),(\bolds_\ell,E_H(\bolds_k)))&\le 1.
    \end{align*}
\end{lemma}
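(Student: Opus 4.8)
The plan is a case analysis on where the single substitution lands. If no bit is flipped the claim is trivial, so suppose one bit of one string~$\boldx_j$ is flipped, producing~$C'=(C\setminus\{\boldx_j\})\cup\{\boldx_j'\}$; since~$d_H(\boldx_i,\boldx_j)\ge 3$ for all~$i\ne j$ (shown just above), the string~$\boldx_j'$ is distinct from every other string of~$C$, so~$C'\in\Space$ and there is a natural bijection between the strings of~$C'$ and those of~$C$ (it fixes~$\boldx_i$ for~$i\ne j$ and sends~$\boldx_j\mapsto\boldx_j'$). The flipped bit lies in exactly one of six column ranges: one of the three \emph{thirds} (columns~$1,\dots,L/3-1$; columns~$L/3+1,\dots,2L/3-1$; columns~$2L/3+1,\dots,L-1$) or one of the three separator columns~$L/3$, $2L/3$, $L$.

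First I would rewrite the decoder's output in a transparent form. Unwinding~\eqref{equation:abcHat}--\eqref{Equation:concatenate}, for each~$r\in[3]$ the vector~$\bolds_r'$ is obtained by sorting the strings of~$C'$ by their~$r$-th third (for~$r=1$ this is the lexicographic order, which coincides with the order by first thirds because the set~$\{\bolda_i\}_{i=1}^M$ consists of distinct strings) and then concatenating, in that order, first all the first thirds, then all the middle thirds, then all the last thirds; likewise~$E_H(\bolds_r)'$ is read off column~$L/3$ (respectively~$2L/3$, $L$) for~$r=1$ (respectively~$2$, $3$) in that same order. The one calculation needed here is to verify that the blocks written with the ``other'' permutations --- for instance the block $\hat\boldc_{\tau^{-1}\rho(1)},\dots,\hat\boldc_{\tau^{-1}\rho(M)}$ of~$\bolds_2'$ --- is indeed the list of last thirds of the strings of~$C'$ taken in middle-thirds order, which is immediate from the definitions of~$\tau$ and~$\rho$. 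By~\eqref{equation:DefXsWithabc}--\eqref{Equation:informationConcise} and~\eqref{Equation:concatenateSs}, the very same description produces~$\bolds_r$ and~$E_H(\bolds_r)$ from~$C$.

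The heart of the argument is that a single bit flip alters \emph{at most one} of these three sort orders: the~$r$-th sort order depends only on the~$r$-th thirds of the strings, and these are pairwise distinct (they are the sets~$\{\bolda_i\}$, $\{\boldb_i\}$, $\{\boldc_i\}$), so a flip inside the~$r$-th third can only disturb the~$r$-th order, a flip in a separator column disturbs none, and no flip disturbs two of the three. Hence there exist distinct~$k,\ell\in[3]$ for which the sort order of~$C'$ agrees, under the bijection above, with the sort order of~$C$. Fix such a~$k$. Then every string other than~$\boldx_j$ contributes identical blocks to~$\bolds_k'$ and to~$\bolds_k$, while~$\boldx_j'$ and~$\boldx_j$ differ in a single bit; that bit lies either in a third --- contributing one differing coordinate between~$\bolds_k'$ and~$\bolds_k$, and none between~$E_H(\bolds_k)'$ and~$E_H(\bolds_k)$ --- or in a separator column --- contributing nothing to~$\bolds_k'$ versus~$\bolds_k$ and at most one differing coordinate to the redundancy part. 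In every case~$d_H((\bolds_k',E_H(\bolds_k)'),(\bolds_k,E_H(\bolds_k)))\le 1$, and the same holds for~$\ell$, which proves the lemma.

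The step I expect to be the main obstacle is the permutation bookkeeping of the second paragraph --- reconciling the compositions in~\eqref{Equation:concatenate} with the clean ``sort, then read block by block'' description --- together with the subcase where the flip is in the first third and makes~$\bolda_j'$ coincide with some~$\bolda_i$. There the lexicographic order of~$C'$ is resolved by later columns, so view~$1$'s order may be scrambled; but the orders of views~$2$ and~$3$ are governed by the untouched middle and last thirds and are therefore unaffected, so one simply takes~$\{k,\ell\}=\{2,3\}$ and the third paragraph applies unchanged.
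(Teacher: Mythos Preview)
Your proposal is correct and takes essentially the same approach as the paper's proof. Both arguments rest on the single observation that one substitution can disturb at most one of the three sort orders (by first, middle, or last thirds), so for the two unaffected views the transmitted and received concatenations coincide except possibly at the one flipped bit. The paper phrases this via permutation matrices acting on the $M\times L$ array of strings --- writing $\Apm=\Am+P_1R$ with~$R$ of Hamming weight one and then reading~$\bolds_2,\bolds_2'$ off as vectorizations of corresponding submatrices --- whereas you phrase it as ``sort by the $r$-th third and read the three column blocks''; these are the same argument in different notation, and your final paragraph's tie-breaking remark is exactly the edge case the paper's matrix formulation absorbs implicitly.
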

\begin{proof}
    If the substitution did not occur at either of index sets $\{ 1,\ldots,L/3-1 \}$, $\{L/3+1,\ldots,2L/3-1\}$, or $\{{2L/3+1},\ldots,$ ${L-1}\}$ (which correspond to the values of the~$\bolda_i$'s, $\boldb_i$'s, and $\boldc_i$'s, respectively), then the order among the~$\bolda_i$'s, $\boldb_i$'s and~$\boldc_i$'s is maintained. That is, we have that
\begin{alignat*}{12}
\bolds_1' &= (&&{\bolda}_{1},&&~\ldots~,&&{\bolda}_{M},&&{\boldb}_{\sigma(1)},&&~\ldots~,&&{\boldb}_{\sigma(M)},&&{\boldc}_{\pi(1)},&&~\ldots~,&&{\boldc}_{\pi(M)}),	\nonumber\\
\bolds_2' &= (&&{\bolda}_{\sigma^{-1}(1)},&&~\ldots~,&&{\bolda}_{\sigma^{-1}(M)},&&{\boldb}_{1},&&~\ldots~,&&{\boldb}_{M},&&{\boldc}_{\sigma^{-1}\pi(1)},&&~\ldots~,&&{\boldc}_{\sigma^{-1}\pi(M)}),  \nonumber\\
\bolds_3' &= (&&{\bolda}_{\pi^{-1}(1)},&&~\ldots~,&&{\bolda}_{\pi^{-1}(M)},&&{\boldb}_{\pi^{-1}\sigma(1)},&&~\ldots~,&&{\boldb}_{\pi^{-1}\sigma(M)},&&{\boldc}_{1},&&~\ldots~,&&{\boldc}_{M}),
\end{alignat*}
and in this case, the claim is clear. It remains to show the other cases, and due to symmetry, assume without loss of generality that the substitution occurred in one of the~$\bolda_i$'s, i.e., in an entry which is indexed by an integer in~$\{1,\ldots,L/3-1\}$.

Let~$A\in\{0,1\}^{M\times L}$ be a matrix whose rows are the~$\boldx_i$'s, in any order. Let~$\Al$ be the result of ordering the rows of~$A$ according to the lexicographic order of their~$1,\ldots,L/3-1$ entries. Similarly, let~$\Am$ and~$\Ar$ be the results of ordering the rows of~$A$ by their $L/3+1,\ldots,2L/3-1$ and~$2L/3+1,\ldots,L-1$ entries, respectively, and let~$\Apl,\Apm$, and~$\Apr$ be defined analogously with~$\{\boldx_i'\}_{i=1}^M$ instead of~$\{\boldx_i\}_{i=1}^M$.

It is readily verified that there exist permutation matrices~$P_1$ and~$P_2$ such that~$\Am=P_1\Al$ and~$\Ar=P_2\Al$. Moreover, since $\{\boldb_i\}_{i=1}^M=\{\hat{\boldb_i}\}_{i=1}^M$, and $\{\boldc_i\}_{i=1}^M=\{\hat{\boldc}_i\}_{i=1}^M$, it follows that~$\Apm=P_1(\Al+R)$ and~$\Apr=P_2(\Al+R)$, where~$R\in\{0,1\}^{M\times L}$ is a matrix of Hamming weight~$1$; this clearly implies that~$\Apm=\Am+P_1R$ and that~$\Apr=\Ar+P_2R$. Now, notice that~$\bolds_2$ 
result from vectorizing some submatrix~$M_2$ of~$\Am$, and $\bolds_2'$ 
result from vectorizing some submatrix~$M_2'$ of~$\Am'$. Moreover, the matrices~$M_2$ and~$M_2'$ are taken from their mother matrix by omitting the same rows and columns, and both vectorizing operations consider the entries of~$M_2$ and~$M_2'$ in the same order. In addition, the redundancies~$E_H(\bolds_2)$ and~$E_H(\bolds_3)$ can be identified similarly, and have at most a single substitution with respect to the corresponding entries in the noiseless codeword. Therefore, it follows from~$\Apm=\Am+P_1R$ that~$d_H(\bolds_2',(\bolds_2,E_H(\bolds_2)))\le 1$. The claim for~$\bolds_3$ is similar.
\end{proof}

By applying a Hamming decoder on either one of the~$\bolds_i$'s, the decoder obtains possible candidates for~$\{\bolda_i\}_{i=1}^M$, $\{\boldb_i\}_{i=1}^M$, and $\{\boldc_i\}_{i=1}^M$, and by Lemma~\ref{lemma:twoAreClose}, it follows that these sets of candidates will coincide in at least two cases. Therefore, the decoder can apply a majority vote of the candidates from the decoding of each~$\bolds_i'$, and the winning values are~$\{\bolda_i\}_{i=1}^M$, $\{\boldb_i\}_{i=1}^M$, and $\{\boldc_i\}_{i=1}^M$. Having these correct values, the decoder can sort~$\{\boldx_i'\}_{i=1}^M$ according to their~$\bolda_i$ columns, and deduce the values of~$\sigma$ and~$\pi$ by observing the resulting permutation in the~$\boldb_i$ and~$\boldc_i$ columns, with respect to their lexicographic ordering. This concludes the decoding of the values~$d_1,d_3,$ and~$d_5$ of the data~$d$.

We are left to extract~$d_2,d_4,$ and~$d_6$. To this end, observe that since the correct values of~$\{\bolda_i\}_{i=1}^M$, $\{\boldb_i\}_{i=1}^M$, and $\{\boldc_i\}_{i=1}^M$ are known at this point, the decoder can extract the \textit{true} positions of~$\boldd_2,\boldd_4,$ and~$\boldd_6$, as well as their respective redundancy bits~$E_H(\boldd_2)$, $E_H(\boldd_4)$, $E_H(\boldd_6)$. Hence, the decoding algorithm is complete by applying a Hamming decoder.

We now turn to compute the redundancy of the above code~$\cC$. Note that there are two sources of redundancy---the Hamming code redundancy, which is at most~$3(\log ML +\log M + 2)$ and the fact that the sets~$\{\bolda_i\}_{i=1}^M$, $\{\boldb_i\}_{i=1}^M$, and $\{\boldc_i\}_{i=1}^M$ contain distinct strings. By a straightforward computation, for $4\le M\le 2^{L/6}$ we have


\begin{align}\label{equation:redundancysingle}
r(\cC)&=\log \binom{2^L}{M}-\log \left( \binom{2^{L/3-1}}{M}^3\cdot(M!)^2\cdot2^{3(M-\log ML-\log M-2)}\right)\nonumber\\
&=\log \prod_{i=0}^{M-1}(2^L-i) -\log  \prod_{i=0}^{M-1}(2^{L/3-1}-i)^3 - 3M+3\log ML + 3\log M +6\nonumber\\
&=\log \prod_{i=0}^{M-1}\frac{(2^L-i)}{(2^{L/3}-2i)^3}+3\log ML +3\log M+6\nonumber\\
&\le 3M\log \frac{2^{L/3}}{2^{L/3}-2M} + 3\log ML +3\log +6.\nonumber\\
&\overset{(a)}{\le} 12\log e + 3\log ML +3\log M+6
\end{align}
where inequality~$(a)$ is derived in Appendix~\ref{section:redundancy}.
 
For the case when~$M<\log ML + \log M$, we generate~$\{\bolda_i\}^M_{i=1}$,~$\{\boldb_i\}^M_{i=1}$, and~$\{\boldc_i\}^M_{i=1}$ with length~$L/3-\lceil \frac{\log ML +\log M}{M}\rceil$. As a result, we have~$\lceil \frac{\log ML +\log M}{M}\rceil$ bits~$x_{i,j}$,~$i\in\{1,\ldots,M\}$,~$j\in\{L/3-\lceil \frac{\log ML +\log M}{M}\rceil+1,\ldots,L/3\}\cup\{2L/3-\lceil \frac{\log ML +\log M}{M}\rceil+1,\ldots,2L/3\}\cup\{L-\lceil \frac{\log ML +\log M}{M}\rceil+1,\ldots,L\}$ to accommodate the information bits $\boldd_2,\boldd_4,\boldd_6$ and the redundancy bits $\{E_H(\bolds_i)\}^3_{i=1}$ and~$\{E_H(\boldd_i)\}_{i\in\{2,4,6\}}$ in each part.

\begin{remark}
    The above construction is valid whenever~$M\le 2^{L/3-1}$. However, asymptotically optimal amount of redundancy is achieved for~$M\le 2^{L/6}$. 
\end{remark}

\begin{remark}\label{remark:singlesubstitution}
In this construction, the separate storage of the Hamming code redundancies~$E_H(\boldd_2),E_H(\boldd_4)$, and~$E_H(\boldd_6)$ is not necessary. Instead, storing~$E_H(\boldd_2,\boldd_4,\boldd_6)$ is sufficient, since the true position of those can be inferred after~$\{\bolda_i\}_{i=1}^M$, $\{\boldb_i\}_{i=1}^M$, and $\{\boldc_i\}_{i=1}^M$ were successfully decoded.
This approach results in redundancy of~$3\log ML +\log 3M+O(1)$, and a similar approach can be utilized in the next section as well. 
\end{remark}
\section{Codes for Multiple Substitutions}\label{section:MultipleSub}
In this section we extend the $1$-substitution correcting code from Section~\ref{section:OneSubSimple} to multiple substitutions whenever the  number of substitutions~$K$ is at most~$L/4\log M-1/2$. In particular, we obtain the following result.
\begin{theorem}\label{theorem:multipleSubs}
	For integers~$M,L$, and~$K$ such that~$M\le 2^{\frac{L}{2(2K+1)}} $ there exists a $K$-substitution code with redundancy $$ 2K(2K+1)\log ML + 2K(2K+1)\log M+O(K).$$ 
\end{theorem}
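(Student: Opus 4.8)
The plan is to generalize the single-substitution construction of Section~\ref{section:OneSubSimple} from three ``sorting parts'' to $2K+1$ parts, so that a $K$-substitution error can scramble the internal order of at most $K$ of them, leaving a strict majority of $K+1$ parts whose sorting permutation is undisturbed. Concretely, I would partition each string $\boldx_i\in\{0,1\}^L$ into $2K+1$ blocks of length roughly $L/(2K+1)$. In block $j$, I place the lexicographically-indexed elements of a set $\{\boldv^{(j)}_i\}_{i=1}^M$ of distinct strings, permuted by a permutation $\sigma_j$ that records ordering information, exactly as the $\bolda_i,\boldb_i,\boldc_i$ were placed in the $K=1$ case. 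Reserving one coordinate per block as a ``column'' (as the $L/3,2L/3,L$ columns were reserved before), I then write, into these $2K+1$ columns, the data payload together with the redundancy of a classical $K$-substitution-correcting code (e.g. a binary Reed--Solomon / BCH-type code) applied to the concatenation $\bolds_j$ of all the block contents under the $j$-th sorting order, plus an inner $K$-substitution code protecting the small data word $\boldd_j$ stored in that column. This requires $M\ge 2^{L/(2(2K+1))}$ distinct strings to exist per block of length $L/(2K+1)$, hence the hypothesis $M\le 2^{L/(2(2K+1))}$; note this matches the theorem's stated bound.

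The decoder first reconstructs, for each of the $2K+1$ block-positions $j$, a candidate concatenated string $\bolds_j'$ by sorting the received strings according to their $j$-th block and reading off the relevant columns, exactly as in~\eqref{Equation:concatenate}. The key structural lemma to prove is the $K$-analogue of Lemma~\ref{lemma:twoAreClose}: since a $K$-substitution disturbs at most $K$ of the $2K+1$ blocks, for at least $K+1$ values of $j$ the sorting permutation induced on the received word agrees with the encoded one, so $\bolds_j'$ differs from the true $(\bolds_j,E_H(\bolds_j))$ (suitably generalized to the $K$-error redundancy) in at most $K$ positions. Running the classical $K$-substitution decoder on each $\bolds_j'$ thus yields the correct sets $\{\boldv^{(1)}_i\},\dots,\{\boldv^{(2K+1)}_i\}$ for at least $K+1$ of the indices $j$; a majority vote among the $2K+1$ outputs recovers all these sets correctly. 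From the correct sets the decoder recovers each $\sigma_j$ by comparing orders (this recovers the $F$-encoded data), then locates the true positions of the $\boldd_j$'s and their inner redundancies, and finally decodes each $\boldd_j$ with its inner $K$-substitution code.

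For the redundancy count: each of the $2K+1$ outer concatenated strings $\bolds_j$ has length $O(ML)$, so a binary $K$-substitution-correcting (narrow-sense BCH) code costs $O(K\log(ML))$ redundant bits, giving $O(K^2\log ML)$ in total; but because these redundancy bits are themselves stored inside the columns and must survive $K$ substitutions, they are re-protected, and more importantly the accounting of ``distinct strings per block'' costs, as in~\eqref{equation:redundancysingle}, about $(2K+1)\cdot M\log\frac{2^{L/(2K+1)}}{2^{L/(2K+1)}-M}=O(K)$ when $M\le 2^{L/(2(2K+1))}$. Tracking constants carefully, each block contributes $2K\log ML$ (outer code, $2K+1$ of them, but overlapping protection yields the $2K(2K+1)$ factor) plus $2K\log M$ from the inner codes protecting the $\boldd_j$'s, plus $O(K)$; summing over the $2K+1$ blocks gives the claimed $2K(2K+1)\log ML+2K(2K+1)\log M+O(K)$.

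The main obstacle I expect is making the generalized Lemma~\ref{lemma:twoAreClose} fully rigorous when several substitutions land in several different blocks simultaneously: one must argue that a substitution in block $j$ can corrupt the reconstructed $\bolds_{j'}'$ for $j'\ne j$ only through the single scrambled coordinate it touches (the permutation-matrix identity $\Apm=\Am+P_1R$ in the $K=1$ proof), and that across all $2K+1$ reconstructions the total number of corrupted coordinates in any ``good'' $\bolds_{j'}'$ is at most $K$ — this needs a careful bookkeeping of how a single bit-flip in one block propagates to at most one coordinate of each concatenation, so that the $\le K$ substitutions produce $\le K$ errors per good concatenation, within the correcting radius of the outer code. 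A secondary subtlety is ensuring the distinct-strings requirement forces $d_H(\boldx_i,\boldx_j)\ge 2K+1$ (so $K$-substitution outputs stay in $\Space$), which follows because the $i$-th and $j$-th strings differ in every one of the $2K+1$ blocks.
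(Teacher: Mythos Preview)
Your proposal is correct and follows essentially the same approach as the paper: partition into $2K+1$ blocks with distinct index strings (forcing pairwise Hamming distance $\ge 2K+1$), protect each of the $2K+1$ concatenations $\bolds_j$ with a binary Reed--Solomon code of redundancy $2K\log(ML)$, protect each data column $\boldd_j$ with an inner code of redundancy $2K\log M$, and decode by majority vote over the $2K+1$ RS-decoded candidates using exactly the generalized Lemma~\ref{lemma:twoAreClose} you describe. Your identification of the ``main obstacle'' is precisely the content of the paper's Lemma~\ref{lemma:MajWorksMultiple}, whose proof is the permutation-matrix bookkeeping $A'_{j_\ell}=A_{j_\ell}+P_{j_\ell}R$ you outline; the redundancy tally $(2K+1)\cdot 2K\log ML + (2K+1)\cdot 2K\log M + O(K)$ matches the paper's computation, though your parenthetical about ``overlapping protection'' is unnecessary---the $2K(2K+1)$ factor arises simply because there are $2K+1$ blocks, each carrying $2K\log ML$ outer and $2K\log M$ inner redundancy bits.
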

We restrict our attention to values of~$M,L$, and~$K$ for which~$2K\log ML +2K\log M\le M$. For the remaining values, i.e., when~$2K\log ML +2K\log M> M$, a similar code can be constructed. The construction of a~$K$-substitution correcting code is similar in spirit to the single substitution case, except that we partition the strings to~$2K+1$ parts instead of~$3$. In addition, we use a Reed-Solomon code in its binary representation (see Section~\ref{section:preliminaries}) to combat~$K$-substitutions in the classic sense.
The motivation behind considering~$2K+1$ parts is that~$K$ substitutions can affect at most~$K$ of them. As a result, at least~$K+1$ parts retain their original order; and that enables a classic RS decoding algorithm to succeed. In turn, the true values of the parts are decided by a majority vote, which is applied over a set of~$2K+1$ values, $K+1$ of whom are guaranteed to be correct.

For parameters~$M,L$, and~$K$ as above, let
\begin{align*}
D=\{1,\ldots,\binom{2^{L/(2K+1)-1}}{M}^{2K+1}\cdot(M!)^{2K}\cdot2^{(2K+1)(M-2Klog ML-2K\log M)} \}
\end{align*} 
be the information set. We split a message~$d\in D$ into~$d=(d_1,\ldots,d_{4K+2})$, where~$d_1\in\{1,\ldots,\binom{2^{L/(2K+1)-1}}{M}\}$,~$d_j\in\{1,\ldots,\binom{2^{L/(2K+1)-1}}{M}M!\}$ for~$j\in\{2,\ldots,2K+1\}$, and~$d_{j}\in\{1,\ldots,2^{(2K+1)(M-2K\log ML-2K\log M)}\}$ for~$j\in\{2K+2,\ldots,4K+2\}$.
As in~\eqref{equation:combNum}, we apply~$F_{com}$ and~$F$ to obtain
\begin{align*}
    F_{com}(d_1)&=\phantom{(}\{\bolda_{1,1},\ldots,\bolda_{M,1}\},\mbox{ where }  \bolda_{i,1}\in\{0,1\}^{L/(2K+1)-1}\mbox{ for all }i, \mbox{ and }\\
    F(d_j)&=(\{\bolda_{1,j},\ldots,\bolda_{M,j}\},\pi_j)\mbox{ for all }j\in\{2,\ldots,2K+1\},\mbox{ where }\bolda_{i,j}\in\{0,1\}^{L/(2K+1)-1}\mbox{ and }\pi_j\in S_M.
\end{align*}
As usual, the sets~$\{\bolda_{i,j}\}_{i=1}^M$ are indexed lexicographically according to~$i$, i.e., $\bolda_{1,j}< \ldots < \bolda_{M,j}$ for all~$j$.
Similar to~\eqref{equation:DefXsWithabc}, let
\begin{align*}
(x_{i,(j-1)L/(2K+1)+1},\ldots,x_{i,jL/(2K+1)-1}) =\bolda_{\pi_j(i),j},~i\in[M],~j\in[2K+1].
\end{align*}
In addition, define the equivalents of~\eqref{Equation:concatenateSs} as 
\begin{alignat*}{14}
    \bolds_1 &=(&&\bolda_{1,1},            &&\ldots,&&\bolda_{M,1},~&&\bolda_{\pi_2(1),2},&&\ldots,&&\bolda_{\pi_2(M),2},&&~\ldots,~&&\bolda_{\pi_{2K+1}(1),2K+1},&&\ldots,&&\bolda_{\pi_{2K+1}(M),2K+1}),\\
    \bolds_2 &=(&&\bolda_{\pi_2^{-1}(1),1},&&\ldots,&&\bolda_{\pi_2^{-1}(M),1},~&&\bolda_{1,2},&&\ldots,~&&\bolda_{M,2},&&~\ldots,&&\bolda_{\pi_2^{-1}\pi_{2K+1}(1),2K+1},&&\ldots,&&\bolda_{\pi_2^{-1}\pi_{2K+1}(M),2K+1}),\\
    &\vdots\\
    \bolds_{2K+1} &=(&&\bolda_{\pi_{2K+1}^{-1}(1),1},&&\ldots,&&\bolda_{\pi_{2K+1}^{-1}(M),1},~&&\bolda_{\pi_{2K+1}^{-1}\pi_2(1),2},&&\ldots,~&&\bolda_{\pi_{2K+1}^{-1}\pi_2(M),2},&&~\ldots,&&\bolda_{1,2K+1},&&\ldots,&&\bolda_{M,2K+1}).
\end{alignat*}
Namely, for every~$i\in [2K+1]$, the elements~$\{\bolda_{i,j}\}_{j=1}^M$ appear in~$\bolds_i$ by their lexicographic order, and the remaining ones are sorted accordingly. 

To state the equivalent of~\eqref{Equation:informationConcise}, for a binary string~$\boldt$ let~$RS_K(\boldt)$ be the redundancy bits that result from K-substitution correcting RS encoding of~$\boldt$, in its binary representation\footnote{To avoid uninteresting technical details, it is assumed henceforth that RS encoding in its binary form is possible, i.e., that~$\log(|\boldt|)$ is an integer that divides~$\boldt$; this can always be attained by padding with zeros. Furthermore, the existence of an RS code is guaranteed, since~$q=2^{\log(|\boldt|)}$ is larger than the length of the code, which is~$|\boldt|/\log(|\boldt|)$.}. In particular, we employ an RS code which corrects~$K$ substitutions, and incurs~$2K\log(|\boldt|)$ bits of redundancy. Then, the remaining bits~$\{ x_{i,\frac{L}{2K+1}} \}_{i=1}^M$, $\{ x_{i,\frac{2L}{2K+1}} \}_{i=1}^M$, $\ldots$ , $\{ x_{i,L} \}_{i=1}^M$ are defined as follows. In this expression, notice that~$|\bolds_i|=M(L-2K-1)$ for every~$i$  and~$|\boldd_j|\le M$ for every~$j$. As a result, it follows that~$|RS_K(\boldd_j)|\le 2K\log M$ for every~$j\in\{2K+2,\ldots,4K+2\}$, and~$|RS_K(\bolds_i)|\le 2K\log ML$ for every~$i\in[2K+1]$.
\begin{align}\label{equation:defColumns}
x_{i,\frac{jL}{2K+1}} = \begin{cases}
d_{j+2K+1,i} & \mbox{if }\pi_j^{-1}(i)\le M - 2K\log M -2K\log ML\\
RS_K(\boldd_{j+2K+1})_{i-M+2K\log M+2K\log ML}, &\text{if }M-2K\log M-2K\log ML+1 \le \pi_j^{-1}(i)\le M-2K\log ML\\
RS_K(\bolds_j)_{i-M+2K\log ML}, &\text{if } M-2K\log ML+1\le \pi_j^{-1}(i)
\end{cases}.
\end{align}

To verify that the above construction provides a $K$-substitution code, observe first that~$\{ \bolda_{i,j} \}_{j=1}^M$ is a set of distinct integers for all~$i\in[2K+1]$, and hence~$d_H(\boldx_i,\boldx_j)\ge 2K+1$ for all distinct~$i$ and~$j$ in~$[M]$. Thus, a~$K$-substitution error cannot turn one~$\boldx_i$ into another, and the result is always in~$\Space$.

The decoding procedure also resembles the one in Section~\ref{section:OneSubSimple}.
Upon receiving a word~$C'=\{\boldx'_1,\ldots,\boldx'_M\}\in \cB_K(C)$ for some codeword~$C$, we define
\begin{align*}
    \hat{\bolda}_{i,j}=(x'_{\tau_j^{-1}(i),\frac{(j-1)L}{2K+1}+1},\ldots,x'_{\tau_j^{-1}(i),\frac{jL}{2K+1}-1},\ldots,) , \mbox{ for }j\in[2K+1], \mbox{ and }i\in[M]
\end{align*}
where
$\tau_j$ is the permutation by which~$\{\boldx'_i\}_{i=1}^M$ are sorted according to their $\frac{(j-1)L}{2K+1}+1,\ldots,\frac{jL}{2K+1}-1$ entries ($\tau_1$ is the identity permutation, compare with~\eqref{equation:abcHat}). In addition, sorting~$\{\boldx_i'\}_{i=1}^M$ by either one of~$\tau_j$ yields \textit{candidates} for~$\{ RS_K(\bolds_i) \}_{i=1}^{2K+1}$, for $\{\boldd_j\}_{j=2K+2}^{4K+2}$, and for~$\{ RS_K(\boldd_j) \}_{j=2K+2}^{4K+2}$. The respective~$\{\bolds_i'\}_{i=1}^{2K+1}$ are defined as
\begin{alignat*}{14}
    \bolds_1' &=(&&\hat{\bolda}_{1,1},            &&\ldots,&&\hat{\bolda}_{M,1},~&&\hat{\bolda}_{\tau_2(1),2},&&\ldots,&&\hat{\bolda}_{\tau_2(M),2},~\ldots\\
    ~ &~ &&~&&~&&~
    &&\hat{\bolda}_{\tau_{2K+1}(1),2K+1},&&\ldots,&&\hat{\bolda}_{\tau_{2K+1}(M),2K+1}),\\
    \bolds_2' &=(&&\hat{\bolda}_{\tau_2^{-1}(1),1},&&\ldots,&&\hat{\bolda}_{\tau_2^{-1}(M),1},~&&\hat{\bolda}_{1,2},&&\ldots,~&&\hat{\bolda}_{M,2},~\ldots\\
    ~ &~ &&~&&~&&~
    &&\hat{\bolda}_{\tau_2^{-1}\tau_{2K+1}(1),2K+1},&&\ldots,&&\hat{\bolda}_{\tau_2^{-1}\tau_{2K+1}(M),2K+1}),\\
    &\vdots\\
    \bolds_{2K+1}' &=(&&\hat{\bolda}_{\tau_{2K+1}^{-1}(1),1},&&\ldots,&&\hat{\bolda}_{\tau_{2K+1}^{-1}(M),1},~&&\hat{\bolda}_{\tau_{2K+1}^{-1}\tau_2(1),2},&&\ldots,~&&\hat{\bolda}_{\tau_{2K+1}^{-1}\tau_2(M),2},~\ldots\\
    ~ &~ &&~&&~&&~
    &&\hat{\bolda}_{1,2K+1},&&\ldots,&&\hat{\bolda}_{M,2K+1}).\\
\end{alignat*}
\begin{lemma}\label{lemma:MajWorksMultiple}
    There exist~$K+1$ distinct integers~$\ell_1,\ldots,\ell_{K+1}$ such that~$d_H((\bolds'_{\ell_j},RS_K(\bolds_{\ell_j})'),(\bolds_{\ell_j},RS_K(\bolds_{\ell_j})))\le K$ for every~$j\in [K+1]$.
\end{lemma}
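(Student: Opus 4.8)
The plan is to mimic the argument of Lemma~\ref{lemma:twoAreClose}, but with the counting adjusted for $2K+1$ parts and $K$ substitutions. The key observation is that a $K$-substitution error touches at most $K$ of the $2K+1$ coordinate blocks $\{(j-1)L/(2K+1)+1,\ldots,jL/(2K+1)-1\}$ (for $j\in[2K+1]$); hence at least $K+1$ of these blocks are untouched. Let $\ell_1,\ldots,\ell_{K+1}$ index those clean blocks. For each clean block $\ell=\ell_j$, the strings $\{\bolda_{i,\ell}\}_{i=1}^M$ are received without error, so the sorting permutation $\tau_\ell$ recovered at the decoder coincides with $\pi_\ell$ up to the relabelling induced by the (at most $K$) errors lying in \emph{other} blocks, and in particular the multiset $\{\hat\bolda_{i,\ell}\}_{i=1}^M$ equals $\{\bolda_{i,\ell}\}_{i=1}^M$ as sets. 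First I would make this precise: show that ordering $\{\boldx_i'\}_{i=1}^M$ by their block-$\ell$ entries reproduces the same row order as ordering the noiseless $\{\boldx_i\}_{i=1}^M$ by their block-$\ell$ entries, because that block is error-free and its strings are distinct.

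Next I would set up the matrix language from Lemma~\ref{lemma:twoAreClose}: let $A\in\{0,1\}^{M\times L}$ have the $\boldx_i$ as rows in some fixed order, let $A^{(\ell)}$ be $A$ with rows sorted by the block-$\ell$ entries, and let $A'^{(\ell)}$ be the analogous matrix built from the $\boldx_i'$. Since block $\ell$ is clean and the other blocks carry at most $K$ flipped bits total, there is a permutation matrix $P_\ell$ (the one aligning the block-$1$ order to the block-$\ell$ order) and a matrix $R\in\{0,1\}^{M\times L}$ of Hamming weight at most $K$ with $A'^{(\ell)} = A^{(\ell)} + P_\ell R$. Now $\bolds_\ell$ is obtained by vectorizing a fixed submatrix of $A^{(\ell)}$ (the first $L-2K-1$ columns, with the convention of how the $\bolda$-blocks are laid out), and $\bolds_\ell'$ by vectorizing the \emph{same} submatrix of $A'^{(\ell)}$, in the same entry order; likewise the RS-redundancy bits $RS_K(\bolds_\ell)'$ sit in designated positions of the same matrices. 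Therefore $d_H\big((\bolds_\ell',RS_K(\bolds_\ell)'),(\bolds_\ell,RS_K(\bolds_\ell))\big)$ is at most the number of $1$'s of $R$ landing in those positions, which is at most $\|R\|_H\le K$. Doing this for each of the $K+1$ clean indices $\ell_1,\ldots,\ell_{K+1}$ gives the claim.

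The one point needing care — and the main obstacle — is the bookkeeping of \emph{which} permutation aligns block $\ell$ of the noisy word with block $\ell$ of the clean word, and verifying that the submatrix extracted to form $\bolds_\ell'$ is genuinely the image under $P_\ell$ of the submatrix extracted to form $\bolds_\ell$, so that the additive error $P_\ell R$ really does overlay the two vectorized strings position-by-position. This is exactly the subtlety handled in the single-substitution proof ($M_2$ versus $M_2'$ taken ``by omitting the same rows and columns, in the same order''), and here it must be checked uniformly over all $2K+1$ choices of the reference block and over all the cross-block permutations $\tau_{\ell}^{-1}\tau_{j}$ appearing in the definition of $\bolds_\ell'$. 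Once the alignment is established, the substitution-count bound $K$ is immediate, and the Hamming-weight-$\le K$ property of $R$ transfers directly to the RS redundancy positions as well, since those positions are disjoint from the error-free block.
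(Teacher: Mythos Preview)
Your proposal is correct and follows essentially the same approach as the paper's proof: identify the (at least) $K+1$ blocks untouched by the $K$ substitutions, then use the matrix formulation $A'^{(\ell)}=A^{(\ell)}+P_\ell R$ with $\|R\|_H\le K$ and argue that $\bolds_\ell,\bolds_\ell'$ (and their RS redundancies) arise as vectorizations of corresponding submatrices, so their Hamming distance is at most~$K$. Your write-up is in fact a bit more explicit than the paper's about the alignment bookkeeping, which is the only real subtlety.
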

\begin{proof}
Analogous to the proof of Lemma~\ref{lemma:twoAreClose}. See Appendix~\ref{section:omitted Proofs} for additional details.
\end{proof}

By applying an RS decoding algorithm on each of~$\{\bolds_i'\}_{i=1}^{2K+1}$ we obtain candidates for the true values of~$\{\bolda_{i,j}\}_{j=1}^M$ for every~$i\in[2K+1]$. According to Lemma~\ref{lemma:MajWorksMultiple}, at least~$K+1$ of these candidate coincide, and hence the true value of $\{\bolda_{i,j}\}_{j=1}^M$ can be deduced by a majority vote. Once these true values are known, the decoder can sort~$\{\boldx_i'\}_{i=1}^M$ by its~$\bolda_{1,j}$ entries (i.e., the entries indexed by $1,\ldots,\frac{L}{2K+1}-1$), and deduce the values of each~$\pi_t$, $t\in\{2,\ldots,2K+1\}$ according to the resulting permutation of~$\{\bolda_{t,\ell}\}_{\ell=1}^M$ in comparison to their lexicographic one. Having all the permutations~$\{\pi_j\}_{j=2}^{2K+1}$, the decoder can extract the true positions of~$\{\boldd_j\}_{j=2K+2}^{4K+2}$ and $\{RS_K(\boldd_j)\}_{j=2K+2}^{4K+2}$, and apply an RS decoder to correct any substitutions that might have occurred.

\begin{remark}
    Notice that the above RS code in its binary representation consists of binary substrings that represent elements in a larger field. As a result, this code is capable of correcting \emph{any} set of substitutions that are confined to at most~$K$ of these substrings. Therefore, our code can correct more than~$K$ substitutions in many cases.
\end{remark}


For~$4\le M\le 2^{L/2(2K+1)}$, the total redundancy of the above construction~$\cC$ is given by
\begin{align}\label{equation:redundancymultiple}
r(\cC)&=\log \binom{2^L}{M}-\log \binom{2^{L/(2K+1)-1}}{M}^{2K+1}M!^{2K}2^{(2K+1)(M-2K\log ML-2K\log M)}\nonumber\\
&\overset{(b)}{\le} (2K+1) \log e +2K(2K+1)\log ML +2K(2K+1)\log M.
\end{align}
where the proof of inequality~$(b)$ is given in Appendix~\ref{section:redundancy}.


\begin{remark}
As mentioned in Remark~\ref{remark:singlesubstitution}, storing~$RS_K(\boldd_j)$ separately in each part~$j\in\{2K+2,\ldots,4K+2\}$ is not necessary. Instead, we    store~$RS_K(\boldd_{2K+2},\ldots,\boldd_{4K+2})$ in a single part~$j=2K+1$, since the position of the binary strings~$\boldd_{j}$ for~$j\in\{2K+2,\ldots,4K+2\}$ and the redundancy~$RS_K(\boldd_{2K+2},\ldots,\boldd_{4K+2})$ can be identified once~$\{\bolda_{i,j}\}_{i\le M,j\le 2K+1}$ are determined. The redundancy of the resulting code is~$2K(2K+1)\log ML + 2K\log (2K+1)M$.
\end{remark}
For the case when~$M<2K\log ML + 2K\log M$, we generate sequences~$\bolda_{i,j}$,~$i\in\{1,\ldots,M\}$,~$j\in\{1,\ldots,2K+1\}$ with length~$L/(2K+1)-\lceil \frac{2K\log ML +2K\log M}{M}\rceil$. Then, the length~$\lceil \frac{2K\log ML +2K\log M}{M}\rceil$ sequences~$x_{i,j}$,~$i\in\{1,\ldots,M\}$,~$j\in\cup^{2K+1}_{l=1}\{(l-1)L/(2K+1)-\lceil \frac{2K\log ML +2K\log M}{M}\rceil+1,\ldots,lL/(2K+1)\}$ are used to accommodate the information bits $\{\boldd_j\}^{4K+2}_{j=2K+2}$ and the redundancy bits $\{RS_K(\bolds_i)\}^{2K+1}_{i=1}$ and~$\{RS_K(\boldd_j)\}^{4K+2}_{j=2K+2}$ in each part.


\section{Codes with order-wise optimal redundancy}\label{section:optimal}
In this section we briefly describe how to construct~$K$-substitution correcting codes whose redundancy is order-wise optimal, in a sense that will be clear shortly. The code construction applies whenever~$K$ is at most $O(\min\{L^{1/3},L/\log M\})$. 
\begin{theorem}\label{theorem:multipleoptimal}
	For integers~$M,L$, and~$K$, let~$L'=3\log M  + 4K^2+1$. If~$L'+4KL'+2K\log (4KL')\le L$, then there exists an explicit $K$-substitution code with redundancy $ 2K\log ML + (12K+2)\log M+O(K^3)+O(K\log\log ML)$ 
\end{theorem}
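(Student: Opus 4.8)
The plan is to keep the ``sort the strings into parts, take a majority vote, then run a classical code on the concatenation'' philosophy of Section~\ref{section:MultipleSub}, but to pay the Reed--Solomon redundancy that protects the data \emph{only once} rather than replicating it across all $2K+1$ parts. The quadratic term $2K(2K+1)\log ML$ in Theorem~\ref{theorem:multipleSubs} arises precisely because the $(2K+1)$-part trick is applied at the full length $L$; the idea is to apply that trick only to a short \emph{head} of each string, of length $\Theta(KL')$ with $L'=3\log M+4K^2+1$, whose only job is to carry an index $\pi(i)\in[M]$ robustly and thereby pin down a canonical order on the $M$ strings. The remaining $L-\Theta(KL')$ coordinates of each $\boldx_i$ form a \emph{tail}; the tails, read in the recovered canonical order and concatenated, form one binary string of length at most $ML$, which is protected by a single $K$-error-correcting Reed--Solomon code in binary representation, costing $2K\log(ML)$ redundant bits.

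First I would specify the head gadget. Each head carries, as payload, the index $\pi(i)$ together with the Reed--Solomon redundancy that protects the sequence $\pi(1),\dots,\pi(M)$ against $K$ symbol errors; the head is then itself encoded using a bootstrapped version of the construction of Theorem~\ref{theorem:multipleSubs} at the reduced length $L'$, laid out as roughly $4K+1$ parts of length $L'$ followed by $2K\log(4KL')$ further redundancy bits. This is exactly the budget $L'+4KL'+2K\log(4KL')\le L$ in the hypothesis, and the choices $L'\ge 3\log M$ and $L'\ge 4K^2$ are what make the density condition $M\le 2^{L'/(2(2K+1))}$ hold and leave room for the internal redundancies; the part count $\approx 4K+1$ (rather than $2K+1$) reflects the doubling needed so that the head's own ordering information is recoverable recursively. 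Since up to $K$ substitutions may hit the head region, arbitrarily spread, the Section~\ref{section:MultipleSub} decoder applied there still recovers the multiset of heads, hence the index-RS-protected string, hence $\pi$ \emph{exactly}; and because each head already contains $\ge 2K+1$ coordinates that are distinct across strings, no $K$-substitution can merge two $\boldx_i$'s, so the channel output always lies in $\Space$.

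Second, decoding: recover $\pi$ from the heads as above, reorder the tails by $\pi$, and run the binary Reed--Solomon decoder on the length-$\le ML$ concatenation; the at most $K$ residual substitutions corrupt at most $K$ RS symbols, so this succeeds. Injectivity of the encoder, and the fact that substitutions in the tail region leave the heads (hence $\pi$) intact while substitutions in the head are absorbed by the head code, are checked as in Sections~\ref{section:OneSubSimple}--\ref{section:MultipleSub}. For the redundancy one compares the number of codewords to $\binom{2^L}{M}$: the tail RS code contributes $2K\log(ML)$; the head contributes its internal index-RS redundancy plus the $O(K)$-part Section~\ref{section:MultipleSub}-style overhead, which together amount to $(12K+2)\log M$ up to the stated error terms, plus $O(K\log\log ML)$ from factors of $\log L'=O(\log\log M+\log K)$ and $O(K^3)$ from the absolute length $\Theta(K^3)$ of the head region; summing gives $2K\log ML+(12K+2)\log M+O(K^3)+O(K\log\log ML)$. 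Comparing with the lower bound $K(\log(ML)-2\log K)-O(1)$ of Theorem~\ref{theorem:lowerBoundGeneralK}, which applies in the stated parameter regime, then yields the claimed factor $14$.

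I expect the crux to be the head itself: one must build an index-carrying, self-sorting gadget of length only $\Theta(K\log M+K^3)$ whose decoder is provably correct against $K$ substitutions located \emph{anywhere} in the full codeword, and whose overhead is $O(K\log M)$ rather than the $\Theta(K^2\log M)$ that a naive black-box reuse of Theorem~\ref{theorem:multipleSubs} at length $L'$ would cost. Pushing the overhead down requires the recursive/bootstrapped layout and the specific value $L'=3\log M+4K^2+1$; carrying out that bootstrap carefully, and bounding the lower-order terms it generates, is the bulk of the deferred technical work.
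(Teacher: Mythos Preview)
Your plan departs substantially from the paper's construction, and the piece you explicitly defer --- the ``bootstrapped'' head gadget that would bring the head overhead from $\Theta(K^2\log M)$ down to $O(K\log M)$ --- is precisely where the paper does something different and avoids recursion altogether.

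The paper does \emph{not} reuse the $(2K{+}1)$-part majority trick in the head. Instead, each $\boldx_i$ carries only an $L'$-bit head $\bolda_i$, where the set $\{\bolda_1,\dots,\bolda_M\}$ is chosen (this choice itself encodes data $d_1$, via a Gilbert--Varshamov count) to have pairwise Hamming distance at least $2K+1$ and to contain the all-ones string $\1_{L'}$. Three ideas replace your bootstrap. First, the string whose head is $\1_{L'}$ acts as an \emph{anchor}: after $K$ substitutions it is the unique string whose head still has at least $L'-K$ ones. Second, this anchor string alone carries, right after its head, $RS_{2K}$ of the length-$2^{L'}$ \emph{characteristic vector} $\1(\{\bolda_i\})$, wrapped in a further $RS_K$ layer; this is what the $4KL'+2K\log(4KL')$ in the hypothesis budgets, and it is spent only in $\boldx_1$, not as a per-string head as you read it. Third, since $K$ substitutions move at most $K$ heads, the received and true characteristic vectors differ in at most $2K$ bits, so $RS_{2K}$ recovers the true set $\{\bolda_i\}$; the minimum-distance-$(2K{+}1)$ property then lets each received head be matched to a unique $\bolda_i$, giving the order $\pi$. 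After that a single global $RS_K$ on the concatenation finishes the job, exactly as you describe for the tails.

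So the proposal has a genuine gap: you concede that a black-box call to Theorem~\ref{theorem:multipleSubs} at length $L'$ costs $\Theta(K^2\log M)$, and you give no concrete mechanism for the recursion that would beat this. The paper sidesteps the issue entirely with the anchor/characteristic-vector/minimum-distance trick, which is simpler and yields the stated redundancy directly.
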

As in Section~\ref{section:OneSubSimple} and Section~\ref{section:MultipleSub}, we use the information bits themselves for the purpose of indexing.
Specifically,
we encode information in the first~$L'$ bits~$(x_{i,1},x_{i,2},\ldots,x_{i,L'})$ in each sequence~$\boldx_i$ and then sort the sequences~$\{\boldx_i\}^M_{i=1}$ according to the lexicographic order~$\pi$ of $(x_{i,1},x_{i,2},\ldots,x_{i,L'})$, such that~$(x_{\pi (i),1},x_{\pi(i),2},\ldots,x_{\pi(i),L'})<(x_{\pi (j),1},x_{\pi(j),2},\ldots,x_{\pi(j),L'})$ for~$i<j$. Then, we protect the sequences $\{\boldx_i\}^M_{i=1}$ in the same manner as if they are ordered, i.e.,  by concatenating them and applying a Reed-Solomon encoder. 

An issue that must be addressed is how to protect
the ordering~$\pi$ from being affected by substitution errors. This is done in two steps: (1) Using additional redundancy to protect the ordering sequence set~$\{(x_{i,1},x_{i,2},\ldots,x_{i,L'})\}^M_{i=1}$; and (2) Constructing~$\{(x_{i,1},x_{i,2},\ldots,x_{i,L'})\}^M_{i=1}$ such that the Hamming distance between any two distinct sequences~$(x_{i,1},x_{i,2},\ldots,x_{i,L'})$ and~$(x_{j,1},x_{j,2},\ldots,x_{j,L'})$ is at least ~$2K+1$. In this way, the bits~$(x_{i,1},x_{i,2},\ldots,x_{i,L'})$ in sequence~$\boldx_i$ can be recovered from their erroneous version~$(x'_{i,1},x'_{i,2},\ldots,x'_{i,L'})$, which is within Hamming distance~$K$ from~$(x_{i,1},x_{i,2},\ldots,x_{i,L'})$. The details of the encoding and decoding are as follows.

For an integer~$n$, let~$\1_n$ be the vector of~$n$ ones.
Let~$\mathcal{S}$ be the ensemble of all codes of length~$L'$, cardinality~$M$, and minimum Hamming distance at least~$2K+1$, which contain~$\1_{L'}$, that is,
\begin{align*}
    \cS\triangleq\left\{\{\bolda_1,\ldots,\bolda_M\}\in\binom{\{0,1\}^{L'}}{M}\Big\vert\bolda_1= \1_{L'}\mbox{ and }
d_H(\bolda_i,\bolda_j)\ge 2K+1\mbox{ for every distinct }i,j\in[M]\right\}.
\end{align*}
Now we show that
\begin{align}\label{equation:numberofcodes}
    |\mathcal{S}|\ge \frac{\prod^{M}_{i=2}[2^{L'}-(i-1)Q]}{(M-1)!},
\end{align}
where $Q=\sum^{2K}_{i=0}\binom{L'}{i}$ is the size of a Hamming ball of radius~$2K$ centered at a vector in~$\{0,1\}^{L'}$. 
For $$\cS_T=\left\{(\bolda_1,\ldots,\bolda_{M}):\bolda_1= \1_{L'}\mbox{ and }d_H(\bolda_i,\bolda_j)\ge 2K+1 \mbox{ for distinct }i,j\in[M]\right\},$$ it is shown that~$|\cS_T|\ge \prod^{M}_{i=2}[2^{L'}-(i-1)Q]$. The idea is to let~$\bolda_1=\1_{L'}$ and then select~$\bolda_2,\ldots,\bolda_{M}$ sequentially while keeping the mutual Hamming distance among~$\bolda_1,\ldots,\bolda_{i}$  at least~$2K+1$ for~$i\in\{2,\ldots,M\}$. 
Notice that for any sequence~$\bolda\in\{0,1\}^{L'}$, there are at most~$Q$ sequences that are within Hamming distance~$2K$ of~$\bolda$. Hence,  
given~$\bolda_1=\1_{L'},\ldots,\bolda_{i-1}$,~$i\in\{2,\ldots,M\}$ such that the mutual Hamming distance among~$\bolda_1,\ldots,\bolda_{i-1}$ is at least~$2K+1$, there are at least~$2^{L'}-iQ$ choices of~$\bolda_i\in\{0,1\}^{L'}$,~$i\in\{2,\ldots,M\}$ such that the Hamming distance between~$\bolda_i$ and each one of~$\bolda_1,\ldots,\bolda_{i-1}$ is at least~$2K+1$. These choices of~$\bolda_i\in\{0,1\}^{L'}$ keep the mutual Hamming distance among~$\bolda_1,\ldots,\bolda_{i}$ at least~$2K+1$.
Therefore, 
it follows that~$|\cS_T|\ge \prod^{M}_{i=2}(2^{L'}-(i-1)Q)$. Since there are~$(M-1)!$ tuples in~$\cS_T$ that correspond to the same set~$\{\1_{L'},\bolda_2,\ldots,\bolda_{M}\}$ in~$\mathcal{S}$, we have that equation~\eqref{equation:numberofcodes} holds. 

According to~\eqref{equation:numberofcodes}, there exists an injective mapping~$F_S:\left[\lceil\frac{\prod^{M-1}_{i=1}(2^{L'}-iQ)}{(M-1)!}\rceil\right]\rightarrow \binom{\{0,1\}^{L'}}{M}$ that maps an integer~$i\in\{1,\ldots,\lceil\frac{\prod^{M-1}_{i=1}(2^{L'}-iN)}{(M-1)!}\rceil\}$ 
to a code~$S\in \mathcal{S}$. The mapping~$F_S$ is invertible and can be computed in~$O(2^{ML'})$ time using brute force. We note that there is a greedy algorithm implementing the mapping~$F_S$ and the corresponding inverse mapping~$F^{-1}_S$ in~$Poly(M,L,k)$ time. We defer this algorithm and the corresponding analysis to a future version of this paper.
For~$S\in \mathcal{S}$, define the characteristic vector~$\1(S)\in\{0,1\}^{2^{L'}}$ of~$S$ by
\begin{align*}
    		\1 (S)_i=\begin{cases}
			1 & \mbox{if the binary presentation of~$i$ is in~$S$}\\
			0 &\mbox{else} 
		\end{cases}.
\end{align*}
Notice that the Hamming weight of~$\1(S)$ is~$M$ for every~$S\in\cS$. 
Intuitively, we use the choice of a code~$S\in\cS$ to store information, and the lexicographic order of the strings in the chosen~$S$ to order the strings in our codewords, and the details are as follows.

Consider the data~$\boldd\in D$ to be encoded as a tuple~$\boldd=(d_1,\boldd_2)$, where~$d_1\in \{1,\ldots, \lceil\frac{\prod^{M-1}_{i=1}(2^{L'}-iQ)}{(M-1)!}\rceil\}$ and 
$$\boldd_2\in \{0,1\}^{M(L-L')-4KL'-2K\lceil \log (4KL')\rceil-2K\lceil\log ML\rceil}.$$
Given~$(d_1,\boldd_2)$, the codeword~$\{\boldx_i\}^M_{i=1}$ is generated by the following procedure.

\textbf{Encoding:} 
\begin{itemize}
    \item [\textbf{(1)}] Let~$F_S(d_1)=\{\bolda_1,\ldots,\bolda_M\}\in\mathcal{S}$ such that~$\bolda_1=\1_{L'}$ and the $\bolda_i$'s are sorted in a descending lexicographic order.
    Let $(x_{i,1},\ldots,x_{i,L'})=\bolda_i$, for~$i\in[M]$.
    \item [\textbf{(2)}] Let $(x_{1,L'+1},\ldots,x_{1,L'+4KL'})=RS_{2K}(\1 (\{\bolda_1,\ldots,\bolda_M\}))$ (see the paragraph before~\eqref{equation:defColumns} for the definition of~$RS_K(\boldt)$) and
    $$(x_{1,L'+4KL'+1},\ldots,x_{1,L'+4KL'+2K\lceil \log (4KL')\rceil})=RS_K(RS_{2K}(\1 (\{\bolda_1,\ldots,\bolda_M\}))).$$
    \item [\textbf{(3)}] Place the information bits of~$\boldd_2$ in bits
    \begin{align*}
        &(x_{1,L'+4KL'+2K\lceil \log (4KL')\rceil+1},\ldots,x_{1,L}),\\ &(x_{M,L'+1},\ldots,x_{M,L-2K\lceil\log ML\rceil})\mbox{; and}\\ &(x_{i,L'+1},\ldots,x_{i,L})\mbox{ for }i\in\{2,\ldots,M-1\}.
    \end{align*}
    \item [\textbf{(4)}] Define
    \begin{align*}
        \bolds = (\boldx_1,\ldots,\boldx_{M-1},(x_{M,1},\ldots,x_{M,L-2K\lceil\log ML\rceil}))
    \end{align*}
    and let $(x_{M,L-2K\lceil\log ML\rceil+1},\ldots,x_{M,L})=RS_K(\bolds)$.
\end{itemize} 
Upon receiving the erroneous version\footnote{Since the sequences~$\{\boldx_i\}^M_{i=1}$ have distance at least~$2K+1$ with each other, the sequences~$\{\boldx'_i\}^M_{i=1}$ are different.}~$(\boldx'_1,\ldots,\boldx'_M)$, the decoding procedure is as follows. 

\textbf{Decoding:} 
\begin{itemize}
    \item [\textbf{(1)}] Find the unique sequence~$\boldx'_{i_0}$ such that~$(x'_{i_0,1},\ldots,$ $x'_{i_0,L'})$ has at least~$L'-K$  many~$1$-entries. By the definition of~$\{(x_{i,1},\ldots,x_{i,L'})\}_{i=1}^M$, we have that~$\boldx'_{i_0}$ is an erroneous copy of~$\boldx_1$. Then, use a Reed-Solomon decoder to decode bits~$(x_{1,L'+1},\ldots, $ $x_{1,L'+4KL'})$ from $$(x'_{i_0,L'+1},\ldots,x'_{i_0,L'+4KL'+2K\lceil \log (4KL')\rceil}).$$Note that~$(x'_{i_0,L'+1},$ $\ldots,$ $x'_{i_0,L'+4KL'+2K\lceil \log (4KL')\rceil})$ is an erroneous copy of~$(x_{1,L'+1},\ldots,x_{1,L'+2KL'+2K\lceil \log (2KL')\rceil})$, which by definition is a codeword in a Reed-Solomon code.
    \item [\textbf{(2)}] Use a Reed-Solomon decoder and the Reed-Solomon redundancy~$(x_{1,L'+1},\ldots, x_{1,L'+4KL'})$ to recover the vector $\1 (\{(x_{i,1},$ $\ldots,x_{i,L'})\}^M_{i=1})$ and then the set $\{(x_{i,1},\ldots,x_{i,L'})\}^M_{i=1}$. 
    Since $\1 (\{(x_{i,1},\ldots,x_{i,L'})\}^M_{i=1})$ is within Hamming distance~$2K$ from~$\1 (\{(x'_{i,1},\ldots,x'_{i,L'})\}^M_{i=1})$, the former
    can be recovered given its Reed-Solomon redundancy~$(x_{1,L'+1},\ldots,$ $ x_{1,L'+4KL'})$. 
    \item [\textbf{(3)}] For each~$i\in[M]$, find the unique~$\pi(i)\in [M]$ such that~$d_H((x'_{\pi(i),1},\ldots,x'_{\pi(i),L'}),(x_{i,1},\ldots,x_{i,L'}))\le K$  (note that~$\pi(i_0)=1$), and conclude that~$\boldx'_i$ is an erroneous copy of~$\boldx_{\pi(i)}$.
    \item [\textbf{(4)}] With the order~$\pi$ recovered, concatenate~$(\boldx'_{\pi^{-1}(1)},\ldots,\boldx'_{\pi^{-1}(M)})$. We have that~$(\boldx'_{\pi^{-1}(1)},\ldots,\boldx'_{\pi^{-1}(M)})$ is an erroneous copy of $(\boldx_{1},\ldots,\boldx_{M})$, which by definition is a codeword in a Reed-Solomon code. Therefore,~$(\boldx_{1},\ldots,\boldx_{M})$ can be recovered from $(\boldx'_{\pi^{-1}(1)},\ldots,\boldx'_{\pi^{-1}(M)})$.   
\end{itemize}
The redundancy of the code is 
\color{black}
\begin{align}\label{equation:inequality2}
    r(\mathcal{C})
    =&\log \binom{2^L}{M}-\log \lceil\frac{\prod^{M-1}_{i=1}(2^{L'}-iQ)}{(M-1)!}\rceil\nonumber\\
    \le &2K\log ML + (12K+2)\log M+O(K^3)+O(K\log\log ML),
\end{align}
which will be proved in Appendix~\ref{section:proofofinequality}.
\begin{remark}
Note that the the proof of~\eqref{equation:numberofcodes} indicates an algorithm for computing mapping~$F_S(i)$ with complexity exponential in~$L'$ and~$M$. A $poly(M,L)$ complexity algorithm that computes~$F_S(i)$ will be given in future versions of this paper.
\end{remark}

\section{Conclusions and Future Work}\label{section:FutureWork}
Motivated by novel applications in coding for DNA storage, this paper presented a channel model in which the data is sent as a set of unordered strings, that are distorted by substitutions. Respective sphere packing arguments were applied in order to establish an existence result of codes with low redundancy for this channel, and a corresponding lower bound on the redundancy for~$K=1$ was given by using Fourier analysis. 
For~$K=1$, a code construction was given which asymptotically achieves the lower bound. For larger values of~$K$, a code construction whose redundancy is asymptotically~$K$ times the aforementioned upper bound was given; closing this gap is an interesting open problem. Furthermore, it is intriguing to find a lower bound on the redundancy for larger values of~$K$ as well.


\bibliographystyle{IEEEtran}

\appendices

\section{Proof of Lemma \ref{lemma:MajWorksMultiple}}\label{section:omitted Proofs}
\begin{proof}
    (of Lemma~\ref{lemma:MajWorksMultiple}) Similarly to the proof of~Lemma~\ref{lemma:twoAreClose}, we consider a matrix
~$A\in\{0,1\}^{M\times L}$  whose rows are the~$\boldx_i$'s, in any order. Let~$A_j$ be the result of ordering the rows of~$A$ according to the lexicographic order of their~$(j-1)L/(2K+1)+1,\ldots,jL/(2K+1)-1$ bits for~$j\in [2K+1]$.  The matrices~$A'_{j}$ for ~$j\in [2K+1]$ can be defined analogously with~$\{\boldx_i'\}_{i=1}^M$ instead of~$\{\boldx_i\}_{i=1}^M$.

It is readily verified that there exist ~$2K+1$ permutation matrices~$P_j$ such that~$A_j=P_j A$ (Here~$P_1$ is the identity matrix). Moreover, since~$K$ substitution spoils at most~$K$ parts, there exist at least~$j_l\in[2K+1],l\in[K+1]$ such that
$\{\bolda_{i,j_l}\}_{i=1}^M=\{\hat{\bolda_{i,j_l}}\}_{i=1}^M$, for~$l\in[K+1]$, it follows that~$A'_{j_l}=P_{j_l}(A+R)$ for~$l\in[K+1]$, where~$R\in\{0,1\}^{M\times L}$ is a matrix of Hamming weight at most~$K$; this clearly implies that~$A'_{j_l}=A_{j_l}+P_{j_l}R$ for~$l\in[K+1]$. Since~$\bolds_{j_{l}}$ 
results from vectorizing some submatrix~$M_{l}$ of~$A_{j_l}$, and $\bolds_{j_l}'$ 
results from vectorizing some submatrix~$M_l'$ of~$A_{j_l}'$.  Moreover, the matrices~$M_l$ and~$M_l'$ are taken from their mother matrix by omitting the same rows and columns, and both vectorizing operations consider the entries of~$M_l$ and~$M_l'$ in the same order.
In addition, the redundancies~$E_H(\bolds_{j_l})$ for~$l\in[K+1]$ can be identified similarly, and have at most~$K$ substitution with respect to the corresponding entries in the noiseless codeword.
Therefore, it follows from~$A_{j_l}=A_{j_l}+P_1R$ that~$d_H((\bolds_{j_l}',,E_H(\bolds_{j_l})),(\bolds_{j_l},E_H(\bolds_{j_l})))\le K$. 

\end{proof}
\section{Proof of Redundancy Bounds}\label{section:redundancy}
\emph{Proof of~$(a)$ in \eqref{equation:redundancysingle}:}
\begin{align*}
r(\cC)&\le 3\log (1+
\frac{2M}{2^{L/3}-2M})^M	+ 3\log ML +3\log M+6\\
&\le 3\log (1+\frac{4}{M})^M+ 3\log ML +3\log M+6\\
&=12\log ((1+\frac{4}{M})^{M/4})+ 3\log ML +3\log M+6 \\
&\le 12\log e + 3\log ML +3\log M+6.
\end{align*}
\emph{Proof of~$(b)$ in \eqref{equation:redundancymultiple}:}
\begin{align*}
r(\cC)
&=\log \prod_{i=0}^{M-1}(2^L-i) -\log  \prod_{i=0}^{M-1}(2^{L/(2K+1)-1}-i)^{2K+1} - \log 2^{(2K+1)M} +2K(2K+1)\log ML +2K(2K+1)\log  M\\
&=\log \prod_{i=0}^{M-1}\frac{(2^L-i)}{(2^{L/(2K+1)}-2i)^{2K+1}}+2K(2K+1)\log ML +2K(2K+1)\log  M\\
&\le (2K+1)M\log \frac{2^{L/(2K+1)}}{2^{L/(2K+1)}-2M} +2K(2K+1)\log ML +2K(2K+1)\log M\\
&\le (2K+1)\log (1+
\frac{2M}{2^{L/(2K+1)}-2M})^M	+2K(2K+1)\log ML +2K(2K+1)\log M\\
&\le (2K+1)\log (1+\frac{4}{M})^M+2K(2K+1)\log ML +2K(2K+1)\log M\\
&=(2K+1)\log ((1+\frac{4}{M})^{M/4})+2K(2K+1)\log ML +2K(2K+1)\log M \\
&\le (2K+1) \log e +2K(2K+1)\log ML +2K(2K+1)\log M.
\end{align*}

\section{
Improved Codes for a Single Substitution}\label{section:OneSubComplicated}
We briefly present an improved construction of a single substitution code, which achives~$2\log ML + \log 2M +O(1)$ redundancy.
\begin{theorem}
	Let~$M$ and~$L$ be numbers that satisfy~$M\le  2^{L/4}$. Then there exists a single substitution correcting code with redundancy~$2\log ML + \log 2M+O(1)$.
\end{theorem}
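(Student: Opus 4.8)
The plan is to take the three-part construction of Theorem~\ref{theorem:1subMain} and squeeze it down to two parts, which is exactly what turns the leading $3\log ML$ into $2\log ML$. I would cut each codeword string as $\boldx_i=(\bolda_i\,|\,a_i\,|\,\boldb_i\,|\,b_i)$ with $\bolda_i,\boldb_i\in\{0,1\}^{L/2-1}$ and single ``parity'' bits $a_i,b_i$; make $\{\bolda_i\}_{i=1}^M$ an unordered set of distinct strings (via $F_{com}$) and $\{\boldb_i\}_{i=1}^M$ a distinct set together with a permutation relative to the $\bolda$-order (via $F$), placing distinct strings in \emph{both} halves so that $d_H(\boldx_i,\boldx_j)\ge 2$ for $i\ne j$ and hence no single substitution can merge two strings. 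In the $2M$ parity-column bits $\{a_i\}\cup\{b_i\}$ I would store the remaining free data, the systematic Hamming redundancy $E_H(\bolds_1)$ of the concatenation $\bolds_1$ of the strings read in $\bolda$-order and $E_H(\bolds_2)$ of the concatenation $\bolds_2$ read in $\boldb$-order ($\le\log(ML)+1$ bits each), and -- following Remark~\ref{remark:singlesubstitution} -- one \emph{shared} Hamming redundancy protecting the concatenation of the two data sub-blocks ($\le\log(2M)+1$ bits), inserted after the positions of those sub-blocks become inferable. The hypothesis $M\le 2^{L/4}=2^{(L/2)/2}$ gives $2^{L/2}\ge M^2$, which is precisely what makes the cost of forcing distinctness inside each $(L/2-1)$-bit half only $O(1)$, via the same $2M\log(1+\tfrac{4}{M})=O(1)$ bound used for inequality~$(a)$ of~\eqref{equation:redundancysingle}.

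For decoding, a single substitution reorders at most one of the two orderings of the received word, so at least one of ``sort by $\bolda$-parts'' and ``sort by $\boldb$-parts'' recovers the true order of the strings; along that ordering the concatenation carries at most one error, which its Hamming redundancy corrects, and then the two sets, the permutation, and (after re-sorting the word into the recovered order) the data sub-blocks all follow, the last again up to a single Hamming-correctable error. Running both orderings produces at most two candidate codewords, at least one of which is the transmitted $C$, and the decoder outputs the one within Hamming distance $1$ of the received word. The substance of the proof is to show this is well defined, i.e.\ that the image of the encoder is genuinely a $1$-substitution code: if $W\in\cB_1(C_1)\cap\cB_1(C_2)$ then $C_1$ and $C_2$ differ in at most two bit positions of the $M\times L$ array, and a case analysis on the location of these positions forces $C_1=C_2$ -- a two-bit discrepancy confined to the $\bolda$/$\boldb$-blocks alters the value of $E_H(\bolds_1)$ (which depends on both halves) while leaving the parity columns that store it untouched, a contradiction, and any discrepancy that does touch a parity column is caught by the shared data-level Hamming code or by the analogous argument for $E_H(\bolds_2)$. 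Carrying out this case analysis cleanly, with no more than $O(1)$ slack in the redundancy, is the step I expect to be the main obstacle; note in particular that, unlike the three-part construction, two parts do not afford a majority vote, so this verification is what replaces it.

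The redundancy accounting then parallels \eqref{equation:redundancysingle}: the combinatorial overhead from distinctness in the two halves is $O(1)$ under $M\le 2^{L/4}$, the two concatenation-level Hamming codes contribute $2(\log(ML)+1)$, and the shared data-level Hamming code contributes $\log(2M)+1$, for a total of $2\log ML+\log 2M+O(1)$. As in Section~\ref{section:OneSubSimple}, the boundary regime (where the data sub-blocks are so small that they do not fit in a single parity column) is handled by shrinking the $\bolda_i,\boldb_i$ by a few bits and spreading the information and redundancy over a constant number of extra columns in each half, which does not affect the leading order; and encoding and decoding remain polynomial-time through $F_{com}$ and $F$.
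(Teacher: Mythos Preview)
Your plan diverges from the paper precisely at the step you flag as ``the main obstacle,'' and the paper does not attempt that case analysis at all. Instead, it introduces one extra bit $b_e=\bolde\cdot\boldx_\oplus\bmod 2$ with $\bolde=(0^{L/2},1^{L/2})$ and $\boldx_\oplus=\bigoplus_i\boldx_i$; this bit flips iff the substitution lands in the second half. The encoder stores $b_e$ in \emph{four} positions --- bits $x_{i_1,L/2},x_{i_2,L/2}$ for the two lexicographically largest $\bolda_i$'s and $x_{i_3,L},x_{i_4,L}$ for the two largest $\boldb_i$'s --- so that a single substitution corrupts at most one of the four (either directly, or by displacing one index from the top two), and a majority over the four copies recovers $b_e$. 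Comparing the stored $b_e$ with the value recomputed from the received word tells the decoder deterministically which half is clean, and sorting by that half's order reduces to the ordinary Hamming-decoding step. This replaces the three-part majority by a four-bit majority on a \emph{single} indicator, which is what lets the construction drop from three parts to two.

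Your ``run both orderings and keep the consistent one'' idea has a genuine gap exactly in the case where $C_1$ and $C_2$ differ in one bit in an $\bolda$-block and one bit in a $\boldb$-block. Take $W\in\cB_1(C_1)\cap\cB_1(C_2)$ with $W=(C_1\setminus\{\boldx\})\cup\{\boldx'\}=(C_2\setminus\{\boldy\})\cup\{\boldy'\}$, the $\boldx\to\boldx'$ flip in an $\bolda$-bit and $\boldy'\to\boldy$ in a $\boldb$-bit. Then the $\bolda$-order of $W$ equals that of $C_2$ and the $\boldb$-order of $W$ equals that of $C_1$, so your $\bolda$-ordering branch Hamming-decodes cleanly to $C_2$ and your $\boldb$-ordering branch to $C_1$; both candidates lie in $\cB_1^{-1}(W)$ and your tie-breaker cannot separate them. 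Your sketched contradiction (``$E_H(\bolds_1)$ changes while the parity columns storing it are untouched'') fails here because when the $\bolda$-ranking changes, the parity-column bits are physically identical but are \emph{read in a different order}, so the stored $E_H(\bolds_1)$ values for $C_1$ and $C_2$ need not agree, and the Hamming minimum-distance argument does not apply. Closing this case seems to require exactly the kind of order-independent indicator the paper supplies with $b_e$.
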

The construction is based on the single substitution code as shown in Section~\ref{section:OneSubSimple}. The difference is that instead of using three parts and the majority rule, it suffices to use two parts (two halfs) and an extra bit to indicate which part has the correct order. To compute this bit, let
\begin{align*}
	\boldx_{\oplus}=\bigoplus^{M}_{i=1}\boldx_i
\end{align*} 
be the bitwise XOR of all strings~$\boldx_i$ and~$\bolde\in\{0,1\}^L$ be a vector of~$L/2$ zeros followed by~$L/2$ ones. We use the bit $
b_e = \bolde \cdot \boldx_{\oplus} \bmod 2$ to indicate in which part the substitution error occurs. If a substitution error happens at the first half~$(x^1_i,\ldots,x^{L/2}_i)$, the bit~$b_e$ does not change. Otherwise the bit~$b_e$ is flipped. 
Moreover, as mentioned in Remark~\ref{remark:singlesubstitution}, we store the redundancy of all the binary strings in a single part, instead of storing the redundancy separately for each binary string in each part.
The data to encode is regarded as~$d=(d_1,d_2,d_3,d_4)$, where~$d_1\in \{1,\ldots,{2^{L/2-1}\choose M}\}$, $d_2\in \{1,\ldots,{2^{L/2-1}\choose M}\cdot M!\}$,~$d_3\in \{1,\ldots,2^{M-\log ML-1}\}$
and~$d_4\in\{1,\ldots,2^{M-\log ML-\log 2M-2}\}$.
That is, $d_1$ represents a set of~$M$ strings of length~$L/2-1$, $d_2$ represents a set of~$M$ strings of length~$L/2-1$ and a permutation~$\pi$. Let~$\boldd_3\in\{0,1\}^{M-\log ML-1},\boldd_4\in \{ 0,1\}^{M-\log ML-\log 2M-2 }$ be the binary strings corresponds to~$d_3$ and~$d_4$ respectively.

We now address the problem of inserting the bit~$b_e$ into the codeword. We consider the four bits~$x_{i_1,L/2}$,~$x_{i_2,L/2}$,~$x_{i_3,L}$, and~$x_{i_4,L}$, where~$i_1$ and~$i_2$ are the indices of the two largest strings among~$\{ \bolda_i \}_{i=1}^M$ in lexicographic order,
and $i_3$ and~$i_4$ are the indices of the two largest strings among~$\{\boldb_i\}_{i=1}^M$ in lexicographic order. Then, we compute~$b_e$ and set
\begin{align*}
    x_{i_1,L/2}=x_{i_2,L/2}=x_{i_3,L}=x_{i_4,L}=b_e.
\end{align*}
Note that after a single substitution, at most one of $i_1$,~$i_2$,~$i_3$, and~$i_4$ will not be among the indices of the largest two strings in their corresponding part. Hence, upon receiving a word~$C'=\{\boldx'_1,\ldots,\boldx'_M\}\in \cB_1(C)$ for some codeword~$C$, we find the two largest strings among~$\{\bolda_i\}_{i=1}^M$ and the two largest strings among~$\{\boldb_i\}_{i=1}^M$,
and use majority to determine the bit~$b_e$. The rest of the encoding and decoding procedures are similar to the corresponding ones in Section~\ref{section:OneSubSimple}. We define~$\bolds_1$ and~$\bolds_2$ to the two possible concatenations of~$\{\bolda_i\}_{i=1}^M$ and~$\{\boldb_i\}_{i=1}^M$,
\begin{alignat*}{11}
\bolds_1 &= (&&\bolda_{1},&&~\ldots~,&&\bolda_{M},&&\boldb_{\pi(1)},&&~\ldots~,&&\boldb_{\pi(M)})\\
\bolds_2 &= (&&\bolda_{\pi^{-1}(1)},&&~\ldots~,&&\bolda_{\pi^{-1}(M)},&&\boldb_{1},&&~\ldots~,&&\boldb_{M}).
\end{alignat*}
We compute their Hamming redundancies and place them in columns~$L/2$ and~$L$, alongside the strings~$d_3,d_4$ and their Hamming redundancy~$E_H(\boldd_3,\boldd_4)$ in column~$L$, similar to~\eqref{Equation:informationConcise}. 

In order to decode, we compute the value of~$b_e$ by a majority vote, which locates the substitution, and consequently, we find~$\pi$ by ordering~$\{\boldx_i'\}_{i=1}^M$ according to the error-free part. Knowing~$\pi$, we extract the~$d_i$'s and their redundancy~$E_H(\boldd_3,\boldd_4)$, and complete the decoding procedure by applying a Hamming decoder. The resulting redundancy is~$2\log ML+\log 2M+ 3$.

\section{Proof of~$(a)$ in Eq. \eqref{equation:rk}}\label{section:monotonicity}
Note that~$P\le T$, it suffices to show that the
function~$g(P)\triangleq  ((T+P)/P)^P= (1+T/P)^P$ is increasing in~$P$ for~$P>0$.
We now show that the derivative
$\partial g(P)/\partial P= (1+T/P)^P(\ln (1+T/P)-T/(T+P))$ is greater than~$0$ for~$P>0$. It is left to show that \begin{align}\label{equation:derivative}
   \ln(1+T/P)>T/(T+P)
\end{align}
Let~$v=T/(T+P)$, then Eq.~\eqref{equation:derivative} is equivalent to
\begin{align}\label{equation:vinequality}
    1/(1-v)>e^v
\end{align}
for some~$0<v<1$.
The inequality~\eqref{equation:vinequality} holds since~$1/(1-v)=1+\sum^\infty_{i=1}v^i$ and~$e^v=1+\sum^\infty_{i=1}v^i/i!$ for ~$0<v<1$.

\section{Proof of~Eq. \eqref{equation:inequality2}}\label{section:proofofinequality}
\begin{align*}
    r(\mathcal{C})
    =&\log \binom{2^L}{M}-\log \lceil\frac{\prod^{M-1}_{i=1}(2^{L'}-iQ)}{(M-1)!}\rceil\\
    &- [M(L-L')-4KL'-2K\lceil \log (4KL')\rceil-2K\lceil\log ML\rceil]\\
    \le &\log \frac{2^{LM}}{M!} - \log \frac{(2^{L'}-MQ)^{M-1}}{(M-1)!}\\
    &- [M(L-L')-4KL'-2K (\log (4KL')+1)-2K(\log ML+1)]\\
    = &ML'- \log (2^{L'}-MQ)^{M-1} +4KL'+2K \log (4KL')\\
    &+2K\log ML + 4K-\log M\\
    =& \log \frac{2^{L'(M-1)}}{(2^{L'}-MQ)^{M-1}} + L'+4KL'+2K \log (4KL')\\
    &+2K\log ML + 4K-\log M\\
    =& \frac{(M-1)MQ}{2^{L'}-MQ}\log (1+\frac{MQ}{2^{L'}-MQ})^{\frac{2^{L'}-MQ}{MQ}}+ L'+4KL'+2K \log (4KL')\\
    &+2K\log ML + 4K-\log M\\
    \overset{(a)}{\le} & \log e + L'+4KL'+2K \log (4KL')+2K\log ML + 1+4K-\log M\\
    =&2K\log ML + (12K+2)\log M+O(K^3)+O(K\log\log ML)
\end{align*}
where~$(a)$ follows from the following inequality
\begin{align}\label{equation:inequality22}
    M^2(3\log M +4K^2+1)^{2K}\le 2^{3\log M+4K^2+1},  
\end{align}
which is proved as follows.

Rewrite Eq.~\eqref{equation:inequality22} as
\begin{align}\label{equation:rewrite}
    (3\log M +4K^2+1)^{2K}\le 2^{\log M+4K^2+1}.
\end{align}
Define functions~$g(y,K)=\ln (3y+4K^2+1)^{2K}$ and~$h(y,K)=\ln 2^{y+4K^2+1}$. Then we have that
\begin{align*}
    \partial h(y,K)/\partial y
    -\partial g(y,K)/\partial y
    =\ln 2 - 6K/(3y+4K^2+1),
\end{align*}
which is positive for~$y\ge 1$ and~$K\ge 2$. Therefore, for~$k\ge 2$ and~$y\ge 1$, we have that
\begin{align*}
    h(y,K)-g(y,K)\ge h(1,K)-g(1,K).
\end{align*}
Furthermore,
\begin{align*}
    \partial h(1,K)/\partial K
    -\partial g(1,K)/\partial K
    =&(8\ln 2) K -2\ln (4K^2+4)-   16K^2/(4K^2+4)\\
    >&(8\ln 2) K-2\ln (5K^2)-4\\
    =&4(K-1-\ln K)+(8\ln 2-4) K-2\ln 5\\
    \overset{(a)}{\ge} &(8\ln 2-4) K-2\ln 5,
\end{align*}
where~$(a)$ follows since~$K=e^{\ln K}\ge 1+\ln K$. Since~$(8\ln 2-4) K-2\ln 5$ is positive for~$K\ge 3$, we have that~$h(1,K)/\partial K
    >\partial g(1,K)/\partial K$ for~$K\ge 3$.
It then follows that~$h(1,K)-g(1,K)\ge\min\{h(1,2)-g(1,2),h(1,3)-g(1,3)\}>0$ for~$K\ge 2$. Hence~$h(y,K)>g(y,K)$ for~$y\ge 1$ and~$K\ge 2$, which implies that Eq.~\eqref{equation:rewrite} holds when~$M\ge 2$ and~$K\ge 2$. 

Next we show that Eq.~\eqref{equation:rewrite} holds when~$M=1$ or~$K=1$.
When~$M=1$, we have that~$\log M=0$ and that
\begin{align*}
    \partial h(0,K)/\partial K
    -\partial g(0,K)/\partial K
    =&(8\ln 2) K -2\ln (4K^2+1)-   16K^2/(4K^2+1)\\
    >&(8\ln 2) K-2\ln (5K^2)-4\\
    =&4(K-1-\ln K)+(8\ln 2-4) K-2\ln 5\\
    \ge& (8\ln 2-4) K-2\ln 5,
\end{align*}
which is positive when~$K\ge 3$. Therefore, we have that~
$h(0,K)-g(0,K)\ge \min\{h(0,1)-g(0,1),h(0,2)-g(0,2),h(0,3)-g(0,3)\}>0$. Hence Eq.\eqref{equation:rewrite} holds when~$M=1$.

When~$K=1$ we have that
\begin{align*}
    2^{\log M+4K^2+1}
    =&32(1++\sum^\infty_{i=1}\log^i M/i!)\\
    \ge& 32(1+\log M+\log^2M/2)\\
    \ge &(3\log M+5)^2\\
    =&(3\log M+4K^2+1)^{2K}.
\end{align*}
Hence, Eq.~\eqref{equation:rewrite} and Eq.~\eqref{equation:inequality22} holds. We now finish the proof of Eq.~\eqref{equation:inequality2}.

\end{document}